\pgfplotsset{compat=newest}
\newcommand{\smallv}{\mathcal S}
\newcommand{\largev}{\mathcal L}
\newcommand{\minbag}{\mathcal B}
\newcommand{\interval}{\mathcal I}
\newcommand{\alphagw}{\alpha_{GW}}
\newcommand{\critangle}{\theta_c}
\newcommand{\maxcut}{\text{mc}}
\newcommand{\prob}{\mathbb P}
\newcommand{\expect}{\mathbb E}
\newcommand{\npc}{\textsc{NP-Complete}\xspace}
\newcommand{\mcproblem}{\textsc{Maximum Cut}\xspace}
\begin{document}
\title{Approximating Maximum Cut on Interval Graphs and Split Graphs beyond Goemans-Williamson}
\titlerunning{Approximating Maximum Cut on Interval Graphs and Split Graphs beyond Goemans-Williamson}
\author{
Jungho Ahn\inst{1} \and
Ian DeHaan\inst{2} \and
Eun Jung Kim\inst{3} \and
Euiwoong Lee\inst{2}}
\authorrunning{J. Ahn, I. DeHaan, E. Kim, and E. Lee}
\institute{
Durham University, UK
\thanks{Supported by Leverhulme Trust Research Project Grant RPG-2024-182.
\email{jungho.ahn@durham.ac.uk}
}
\and
University of Michigan, USA 
\thanks{
Supported in part by NSF grant CCF-2236669.
\email{\{idehaan, euiwoong\}@umich.edu}
}
\and
KAIST and IBS, South Korea $|$ CNRS, France
\thanks{Supported by Institute for Basic Science (IBS-R029-C1) and National Research Foundation, Korea (RS-2025-00563533).
\email{eunjung.kim@kaist.ac.kr }}
}
\maketitle              %

\begin{abstract}
We present a polynomial-time $(\alphagw + \varepsilon)$-approximation algorithm for the \mcproblem problem on interval graphs and split graphs, where $\alphagw \approx 0.878$ is the approximation guarantee of the Goemans-Williamson algorithm and $\varepsilon > 10^{-34}$ is a fixed constant.
To attain this, we give an improved analysis of a slight modification of the Goemans-Williamson algorithm for graphs in which triangles can be packed into a constant fraction of their edges.
We then pair this analysis with structural results showing that both interval graphs and split graphs either have such a triangle packing or have maximum cut close to their number of edges.
We also show that, subject to the Small Set Expansion Hypothesis, there exists a constant $c > 0$ such that there is no polyomial-time $(1 - c)$-approximation for \mcproblem on split graphs.
\end{abstract}

\section{Introduction}

Given a graph $G = (V, E)$, the \mcproblem problem asks for a subset of vertices $S \subseteq V$ that maximizes the number of edges with exactly one endpoint contained in $S$. 
In this paper, we study the approximability of the \mcproblem problem on interval graphs and split graphs.

A graph $G = (V, E)$ is \emph{interval} if there exists a collection of intervals on the real line $\{ \interval_v\}_{v \in V}$ such that $uv \in E$ if and only if $\interval_u \cap \interval_v \neq \emptyset$.
See Figure~\ref{fig:interval} for an example.
Interval graphs are used in the field of biology, where they model natural phenomena such as DNA and food webs \cite{JUNGCK20151}. 
They have also been used in the study of register allocation, where vertices correspond to variables and intervals correspond to ``live ranges'' \cite{hendren1992register}.
Finally, they have numerous desirable theoretical properties and have arisen as a natural class of graphs to design algorithms for.
For example, it is shown in \cite{feder1998list} that a certain variant of the graph homomorphism problem is polynomial-time solvable if and only if the label graph is interval. 

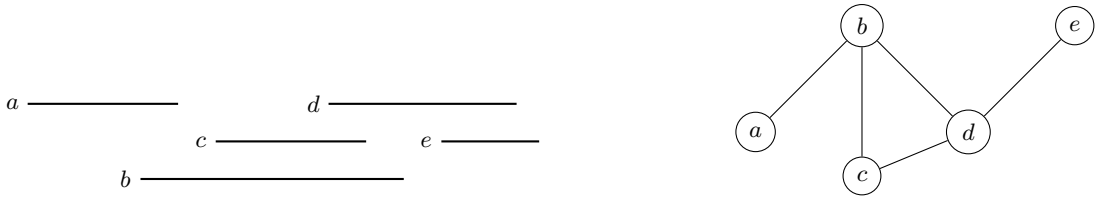
\begin{figure}[h] 
\centering

\begin{minipage}[b]{0.48\textwidth}
\centering
\begin{tikzpicture}

\draw[thick] (0, 2) -- (2, 2);   %
\draw[thick] (1.5, 1) -- (5, 1);       %
\draw[thick] (2.5, 1.5) -- (4.5, 1.5); %
\draw[thick] (4, 2) -- (6.5, 2);       %
\draw[thick] (5.5, 1.5) -- (6.8, 1.5); %

\node at (-0.2, 2) {$a$};
\node at (1.3, 1) {$b$};
\node at (2.3, 1.5) {$c$};
\node at (3.8, 2) {$d$};
\node at (5.3, 1.5) {$e$};

\end{tikzpicture}
\end{minipage}%
\hfill
\begin{minipage}[b]{0.48\textwidth}
\centering
\begin{tikzpicture}[every node/.style={circle, draw}, node distance=2cm]
\node (A) {$a$};
\node (B) [above right of=A] {$b$};
\node (C) [below of=B] {$c$};
\node (D) [below right of=B] {$d$};
\node (E) [above right of=D] {$e$};

\draw (A) -- (B);
\draw (B) -- (C);
\draw (B) -- (D);
\draw (C) -- (D);
\draw (D) -- (E);

\end{tikzpicture}
\end{minipage}

\caption{An interval graph. The left figure shows the interval representation. The right figure shows the resulting graph.} \label{fig:interval}
\end{figure}

A graph $G = (V, E)$ is \emph{split} if there exists a partition of $V = K \sqcup I$ such that $G[K]$, the graph induced by $K$, is a clique and $G[I]$, the graph induced by $I$, is independent. See Figure~\ref{fig:split} for an example.
Split graphs and interval graphs are both important subclasses of \emph{chordal graphs}, which themselves are a subclass of \emph{perfect graphs}.
In particular, split graphs are often the ``simplest'' subclass of perfect graphs in which problems are difficult to approximate.
Thus, considering split graphs is a natural first step when attempting to characterize a problem on chordal or perfect graphs.
In this paper, we show that, assuming the Small Set Expansion Hypothesis, there is some constant $c > 0$ such that there is no polynomial-time $(1-c)$-approximation for \mcproblem on split graphs.
To our knowledge, this is the first known hardness of approximation result for \mcproblem on perfect graphs.

\begin{figure}[h] 
\centering
\begin{tikzpicture}[every node/.style={circle, draw, inner sep=2pt}]

\node (C1) at (0,1) {$k_1$};
\node (C2) at (1,1) {$k_2$};
\node (C3) at (0,0) {$k_3$};
\node (C4) at (1,0) {$k_4$};

\node (I1) at (3,1.5) {$i_1$};
\node (I2) at (3,0.5) {$i_2$};
\node (I3) at (3,-0.5) {$i_3$};

\draw (C1) -- (C2);
\draw (C1) -- (C3);
\draw (C1) -- (C4);
\draw (C2) -- (C3);
\draw (C2) -- (C4);
\draw (C3) -- (C4);

\draw (C2) -- (I1);
\draw (C4) -- (I1);
\draw (C2) -- (I2);
\draw (C4) -- (I3);

\end{tikzpicture}
\caption{A split graph with $K = \{k_1, k_2, k_3, k_4\}$ and $I = \{i_1, i_2, i_3\}$.} \label{fig:split}
\end{figure}
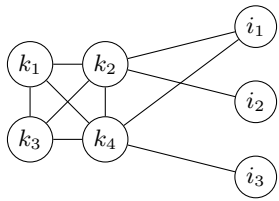

\mcproblem is one of the original 21 problems shown to be \npc by Karp \cite{Karp1972} and has long been a staple problem among algorithm researchers.
The seminal work of Goemans and Williamson shows that there is a polynomial-time $(\alphagw \approx 0.878)$-approximation algorithm for \mcproblem on general graphs \cite{goemans1995improved}.
The optimality of this result was open until 2007, when Khot et al. showed that, subject to the Unique Games Conjecture, there is no polynomial-time approximation algorithm for \mcproblem with an approximation ratio better than $\alphagw$ \cite{khot2007optimal}.

Remarkably, there is relatively little known about \mcproblem when the input graph is restricted to graphs from some structured class.
Even for subclass of perfect graphs, where problems such as \textsc{Independent Set} and \textsc{Chromatic number} admit polynomial-time algorithms based on semidefinite programming, \mcproblem, another flagship application of semidefinite programming, remains mostly unexplored.
In particular the extremely well-structured classes of interval graphs and split graphs, two important subclasses of perfect graphs, had no known approximation algorithm with a ratio better than $\alphagw$ prior to this work.
Our main results provide the first improved approximation for these classes of graphs since the work of Goemans and Williamson.

\begin{theorem} \label{thm:intervalapprox}
    There is a polynomial-time $(\alphagw + 10^{-34})$-approximation algorithm for \mcproblem on interval graphs.
\end{theorem}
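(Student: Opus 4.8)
\emph{Proof sketch (plan).} The plan is to prove Theorem~\ref{thm:intervalapprox} by a win/win argument. Fix a small constant $\gamma>0$ to be optimized, greedily build a maximal collection $\mathcal T$ of pairwise edge‑disjoint triangles in $G$, and branch on whether $\mathcal T$ covers at least a $\gamma$‑fraction of the edges.

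\emph{Branch 1: $|E(\mathcal T)|\ge\gamma|E|$.} Here I would run the variant of Goemans--Williamson advertised in the abstract: solve the Max‑Cut SDP strengthened with the triangle inequalities $\langle v_i,v_j\rangle+\langle v_j,v_k\rangle+\langle v_i,v_k\rangle\ge-1$, then round with a random hyperplane. An edge whose vectors subtend angle $\theta$ is cut with probability $\theta/\pi$ and contributes $\tfrac{1-\cos\theta}{2}$ to the SDP value, and $\tfrac{\theta/\pi}{(1-\cos\theta)/2}\ge\alphagw$ with equality only at $\theta=\critangle$. Since three unit vectors cannot be pairwise at angle $\critangle$ --- their pairwise angles sum to at most $2\pi$, whereas $3\critangle>2\pi$ --- every triangle of $\mathcal T$ has an edge whose angle is bounded away from $\critangle$, and each such edge contributes a strictly positive ``gain'' over the bare $\alphagw$ bound. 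Accumulating these gains over the $\Omega(\gamma|E|)$ edges packed by $\mathcal T$ should yield an expected cut of size at least $(\alphagw+\varepsilon_1)\maxcut(G)$ for some constant $\varepsilon_1=\varepsilon_1(\gamma)>0$. The delicate point is the degenerate configuration where two of a packed triangle's SDP vectors coincide and the third is at angle $\critangle$ from them: that triangle contributes no direct gain, and controlling such cases --- e.g.\ by a secondary dichotomy on how much SDP weight the packed edges carry, together with the bound $\mathrm{SDP}(G)\le|E|-|\mathcal T|$ that the triangle inequalities force --- is the technical heart of this branch.

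\emph{Branch 2: $|E(\mathcal T)|<\gamma|E|$.} Then $F:=G-E(\mathcal T)$ is triangle‑free --- maximality of $\mathcal T$ forbids any triangle in $F$ --- and has more than $(1-\gamma)|E|$ edges. I would aim to show $\maxcut(F)\ge|E(F)|-O(\gamma|E|)$, so that $\maxcut(G)\ge\maxcut(F)\ge(1-O(\gamma))|E|$ and hence the SDP value is within $O(\gamma)|E|$ of $|E|$. Writing each angle as $\theta=\pi-\phi$, the identity $\sum_{ij}\sin^2(\phi_{ij}/2)=|E|-\mathrm{SDP}(G)\le O(\gamma)|E|$, the bound $\sin(\phi/2)\ge\phi/\pi$ on $[0,\pi]$, and Cauchy--Schwarz give $\sum_{ij}\phi_{ij}\le O(\sqrt\gamma)|E|$; so ordinary hyperplane rounding already cuts $|E|-\tfrac1\pi\sum_{ij}\phi_{ij}\ge(1-O(\sqrt\gamma))|E|\ge(1-O(\sqrt\gamma))\maxcut(G)$ edges, beating $\alphagw$ once $\gamma$ is small. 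The substance here is the inequality $\maxcut(F)\ge|E(F)|-O(\gamma|E|)$, equivalently that the edge‑bipartization number of $G$ is $O(\gamma|E|)$. My plan is to exploit that $G$ is chordal: an odd cycle $C$ of $F$ has length at least $5$ and spans a chordal subgraph of $G$, so fixing a triangulation of the polygon $C$ by chords of $G$, each of its $|C|-2$ triangles --- not being contained in the triangle‑free graph $F$ --- uses an edge of $E(\mathcal T)$; charging odd cycles of $F$ to edges of $E(\mathcal T)$ this way, and using the interval (clique‑path) structure of $G$ both to control how often an edge is charged and to replace an odd‑cycle \emph{packing} bound by an odd‑cycle edge \emph{transversal} bound, one should obtain a transversal of $F$ of size $O(|E(\mathcal T)|)=O(\gamma|E|)$.

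Finally, one fixes $\gamma$ small enough that the $O(\sqrt\gamma)$ loss in Branch~2 stays below $1-\alphagw$ and reads off the resulting $\varepsilon_1$ from Branch~1; tracking the deliberately lossy constants through the nested case splits is what yields the unglamorous $\varepsilon=10^{-34}$. I expect the two real obstacles to be the degenerate‑triangle bookkeeping in Branch~1 and making the structural estimate of Branch~2 quantitatively strong enough --- being merely asymptotic is not enough, since $\gamma$ must be pushed below an explicit threshold of order $(1-\alphagw)^2$.
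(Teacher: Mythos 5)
Your overall architecture -- dichotomize on whether a maximal edge-disjoint triangle packing covers a $\gamma$-fraction of $E$, and in either case beat $\alphagw$ -- matches the paper's, which combines a ``triangle packing vs.\ large cut'' structural theorem (Theorem~\ref{thm:intervaltradeoff}) with an improved SDP-rounding guarantee for graphs with a linear-size triangle packing (Theorem~\ref{thm:apxratio}). However, both of your branches have genuine gaps where you have flagged ``the technical heart'' without resolving it, and in both places the paper's resolution is a different idea from the one you propose.

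In Branch~1, adding the triangle inequalities to the SDP does not close the degenerate case. If $\theta_{uv}=\theta_{vw}=\critangle$ and $\theta_{uw}=0$, the inner products sum to $2\cos\critangle+1\approx -0.39> -1$, so the triangle inequality is satisfied and the configuration survives; the bound $\mathrm{SDP}\le |E|-|\mathcal T|$ that the inequalities give you also does not force the $\le\!\critangle$-edge of a packed triangle to carry noticeable SDP mass. Since the gain per edge in the rounding is weighted by its SDP contribution $\tfrac{1-\cos\theta}{2}$, a near-zero angle yields essentially no gain, and nothing in your sketch prevents almost all of the forced ``small-angle'' edges in your packing from being near-zero. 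The paper's fix is algorithmic, not a strengthened relaxation: the perturbed rounding $\textsc{PerturbGW}$ assigns $s_v$ uniformly at random whenever $|r\cdot x_v^*|<\eta$, and Lemmas~\ref{lem:smallanglesbetter}--\ref{lem:manysmall} show this gains on near-zero angles while (Lemma~\ref{lem:smalldecrease}) losing only $O(\eta^{3/2})$ per edge away from $0$ and $\pi$; Lemma~\ref{lem:manylarge} handles the near-$\pi$ edges separately. This new rounding step is what your plan is missing, and I don't see how the secondary dichotomy on SDP weight you allude to substitutes for it.

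In Branch~2, the load-bearing claim is that the triangle-free remainder $F$ has odd-cycle edge transversal $O(\gamma|E|)$, i.e.\ $\maxcut(G)\ge(1-O(\gamma))|E|$. This is both unproved in your sketch and quantitatively stronger than what the paper establishes. The paper's Theorem~\ref{thm:intervaltradeoff} only certifies a cut of size $0.9|E|$ when there is no $10^{-8}|E|$ triangle packing, and it does so by a direct, constructive argument (the small/large marking, Lemmas~\ref{lem:fewlarge}--\ref{lem:buildtriangle}, and Algorithm~\ref{alg:intervalmaxcut}), not by a chord-triangulation charging scheme. Your charging sketch needs to convert a per-odd-cycle bound into a global odd-cycle-\emph{transversal} bound while controlling how many odd cycles of $F$ can reuse the same edge of $E(\mathcal T)$, and it is exactly that control that you have not supplied; I would not expect a linear-in-$\gamma$ bound to fall out, and indeed pushing the paper's own machinery gives something much weaker (roughly ``maximal packing $<\gamma^{3}|E|$ implies cut $\ge(1-O(\gamma))|E|$''). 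One can't appeal to chordality alone here either, since Theorem~\ref{thm:intervaltradeoff}'s analogue is shown to be \emph{false} for general chordal graphs in the paper, so your Branch~2 argument must use the interval structure in an essential, currently unspecified way.

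So: right dichotomy, but in Branch~1 you need a new rounding idea (the paper's perturbation) rather than a strengthened SDP, and in Branch~2 you need to actually prove a structural statement, and the one you've aimed at is stronger than the one the paper proves and uses a fundamentally different (and, as written, incomplete) argument.
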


\begin{theorem} \label{thm:splitapprox}
    There is a polynomial-time $(\alphagw + 10^{-16})$-approximation algorithm for \mcproblem on split graphs.
\end{theorem}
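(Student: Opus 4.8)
The plan is to combine two ingredients. The first is an improved analysis of a slightly modified Goemans--Williamson algorithm: for every $\beta\in(0,1]$ there should be $\delta=\delta(\beta)>0$ such that, given a graph $G$ together with pairwise edge-disjoint triangles $T_1,\dots,T_m$ with $3m\ge\beta|E(G)|$, one can in polynomial time find a cut of value at least $(\alphagw+\delta)\maxcut(G)$. The second is a structural dichotomy: every split graph either already admits a near-optimal cut of trivial form, or contains such a triangle packing. Given both, Theorem~\ref{thm:splitapprox} follows by tuning one parameter.

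I would prove the structural dichotomy directly. Let $G$ be split with $V=K\sqcup I$, $K$ a clique on $k$ vertices, $I$ independent, and let $E_K$ denote the $\binom k2$ edges inside $K$; this partition is computable in polynomial time. Fix a parameter $\epsilon>0$. If $|E_K|<\epsilon|E(G)|$, then the bipartition $(K,I)$ is a cut of value $|E(G)|-|E_K|>(1-\epsilon)|E(G)|\ge(1-\epsilon)\maxcut(G)$, so we are done provided $1-\epsilon\ge\alphagw+10^{-16}$. Otherwise $|E_K|\ge\epsilon|E(G)|$, and if also $k\ge3$ then $K_k$ — hence $G$ — contains pairwise edge-disjoint triangles covering a constant fraction of its $\binom k2$ edges, for instance at least $\tfrac12\binom k2\ge\tfrac\epsilon2|E(G)|$ (using that $K_k$ admits a near-perfect edge-disjoint triangle decomposition, or more cheaply that any maximal edge-disjoint triangle packing is within a factor $3$ of the maximum); such a packing is found in polynomial time. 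The residual cases $k\le2$ or $|E(G)|=O(1/\epsilon)$ describe graphs of bounded size and are solved by brute force. Thus, fixing $\epsilon$ slightly below $1-\alphagw$ and applying the first ingredient with $\beta=\epsilon/2$ in the second branch gives approximation ratio $\min(1-\epsilon,\ \alphagw+\delta(\epsilon/2))$, and the weakness of the constant $10^{-16}$ is exactly the slack needed for this to clear $\alphagw+10^{-16}$ after tracking constants through $\delta$.

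For the first ingredient I would run random-hyperplane rounding on the max-cut SDP augmented, for each packed triangle $T_t=\{a,b,c\}$, with the $\ell_2^2$ triangle inequalities $v_a\!\cdot\!v_b+v_b\!\cdot\!v_c+v_c\!\cdot\!v_a\ge-1$ and their three cyclic variants; these are valid for $\{\pm1\}$ cuts, so the augmented value still upper-bounds $\maxcut(G)$. The GW ratio function $r(\theta)=(\theta/\pi)/\bigl(\tfrac{1-\cos\theta}{2}\bigr)$ attains its minimum $\alphagw$ only at a single angle $\theta^\ast$ with $\cos\theta^\ast<-\tfrac12$, and three unit vectors cannot be pairwise at inner product $\cos\theta^\ast$ (the $3\times3$ Gram matrix fails to be positive semidefinite); the added inequalities push this further, keeping each packed triangle's vector configuration bounded away from the all-$\theta^\ast$ configuration, so that the rounded cut restricted to a packed triangle is at least $(\alphagw+c)$ times that triangle's contribution to the objective, for an absolute constant $c>0$. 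Summing the per-triangle and per-edge bounds yields $\E[\mathrm{cut}]\ge\alphagw\cdot(\text{SDP value})+c\cdot(\text{objective mass on packed triangles})$, and one would conclude $\delta(\beta)=\Omega(\beta)$ provided the packed triangles carry an $\Omega(\beta)$ fraction of the objective mass.

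That last proviso is where I expect the real difficulty. A packed triangle $\{a,b,c\}$ whose three inner products are all close to $1$ — i.e.\ whose three SDP vectors nearly coincide — contributes essentially nothing to the objective and nothing to the rounded cut, yields no gain, and is not ruled out by the triangle inequalities above. Handling such degenerate triangles (e.g.\ by arguing they are few in an optimal solution, since an edge the objective wants cut would not be assigned inner product near $1$ without external pressure, or by identifying near-coincident vertices and recursing, or by a further tweak of the SDP or the rounding) is what forces a genuine modification of plain Goemans--Williamson, and extracting a clean absolute per-triangle constant $c$ from a compactness argument over the constrained space of triangle configurations is the delicate step. The split-graph structure, by contrast, contributes nothing hard beyond the elementary dichotomy above.
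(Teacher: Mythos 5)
Your structural dichotomy for split graphs is essentially the paper's Theorem~\ref{thm:splittradeoff}, and it is correct: if fewer than an $\epsilon$ fraction of edges lie inside the clique $K$, the bipartition $(K,I)$ already cuts more than a $(1-\epsilon)$ fraction of edges; otherwise Lemma~\ref{lem:completetrianglepacking}-style packing into $K$ covers a constant fraction of $|E|$. That half of the argument is sound. The gap is in the first ingredient, and it sits exactly where you flagged it. Augmenting the SDP with the $\ell_2^2$ triangle inequalities on packed triangles and arguing a per-triangle gain does not resolve the degenerate case you identify: a packed triangle whose three vectors nearly coincide satisfies every triangle inequality (the relevant sum of inner products is near $+3$, far from the bound $-1$), contributes essentially nothing to the SDP objective, and yields no rounding gain. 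So the per-triangle bound is vacuous on exactly the configurations that might dominate, and the proposed compactness argument over ``the constrained space of triangle configurations'' has nothing to get a foothold on. You acknowledge this and list candidate fixes, but none of them is carried out, so the proposal does not establish the rounding guarantee it needs.

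For comparison, the paper resolves this not by constraining the SDP but by perturbing the rounding (Algorithm~\ref{alg:perturbed}): vertices with $|r\cdot x_v^*|<\eta$ get a uniformly random side, and one returns the better of the ordinary GW cut $S$ and the perturbed cut $S'$. The crucial feature (Lemma~\ref{lem:minvalue}) is that \emph{every} edge — including one with $\theta_e\approx 0$ — is cut by $S'$ with probability at least $\eta/4$, while the perturbation costs at most $O(\eta^{3/2})$ per edge with $\theta_e\le\pi-\sqrt\eta$ (Lemma~\ref{lem:smalldecrease}) and costs nothing on edges with $\theta_e\le\pi/2$ (Lemma~\ref{lem:smallanglesbetter}). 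This converts your problematic near-coincident triangles from dead weight into a strict gain when they are numerous (Lemma~\ref{lem:manysmall}); when they are few, enough packed-triangle edges have angle $\le 2\pi/3$ that plain GW already beats $\alphagw$ (Lemma~\ref{lem:eprimebig} and Lemma~\ref{lem:betterrounding}); and angles near $\pi$ help plain GW outright (Lemma~\ref{lem:manylarge}). Taking the max of $S$ and $S'$ covers all three regimes with explicit constants, which is what yields $10^{-16}$ when fed the split-graph dichotomy with $t=0.01$ via Theorem~\ref{thm:apxratio}. In short: your instinct that ``a further tweak of the rounding'' is required was right, but that tweak is the missing ingredient, not a detail.
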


On the hardness side, it was shown by Bodlander and Jansen in 2000 that \mcproblem on split graphs is \npc \cite{bodlaender2000complexity}. 
\mcproblem on interval graphs has seen further attention lately - it was shown only recently that \mcproblem on interval graphs is \npc \cite{adhikary2023complexity}.
Further work refined this hardness for interval graphs which have at most $4$ interval lengths \cite{de2024maximum}, and later at most $2$ interval lengths \cite{barsukov2022maximum}.
Hardness for unit interval graphs - interval graphs with only $1$ interval length - remains open.
We remark that none of these hardness results imply any hardness of approximation.
Our final main result is that, subject to the Small Set Expansion Hypothesis of \cite{raghavendra2010graph}, \mcproblem on split graphs is hard to approximate to some constant factor.

\begin{restatable}{theorem}{splithard} \label{thm:splithard}
    There exists a constant $c > 0$ such that there is no polynomial-time $(1 - c)$-approximation algorithm for \mcproblem on split graphs, subject to the Small Set Expansion Hypothesis under randomized reductions.
\end{restatable}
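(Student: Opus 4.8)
\textbf{Proof plan for Theorem~\ref{thm:splithard}.}
The starting point is a reformulation of Max Cut on split graphs. Let $G$ be a split graph with $V=K\sqcup I$, $|K|=k$, where $K$ is a clique, $I$ an independent set, and $N(i)\subseteq K$ the neighborhood of $i\in I$. In any cut, once the side of $K$ is fixed one may place each $i\in I$ greedily on the side that maximizes the number of its incident edges that are cut, and a short calculation then yields
\begin{equation}\label{eq:splitreform}
\maxcut(G)=\tfrac12\,|E(K,I)|+\max_{\sigma\in\{\pm1\}^K}\Big[\underbrace{\tfrac14\big(k^2-(\textstyle\sum_{u\in K}\sigma_u)^2\big)}_{\text{clique edges cut by }\sigma}+\ \tfrac12\sum_{i\in I}\Big|\sum_{u\in N(i)}\sigma_u\Big|\Big].
\end{equation}
Thus, up to an additive constant and a factor $\tfrac12$, computing $\maxcut(G)$ amounts to choosing $\sigma\in\{\pm1\}^K$ that is (roughly) balanced — the clique term is maximized at balance and heavily penalizes large bias — while simultaneously ``aligning'' with the set system $\{N(i)\}_{i\in I}$, in the sense of making each $\sum_{u\in N(i)}\sigma_u$ far from $0$. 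The goal becomes to design such a set system for which this balanced alignment problem is hard.

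I would reduce from Unique Games on suitably structured instances; by the reduction of Raghavendra and Steurer~\cite{raghavendra2010graph}, hardness of such instances follows from the Small Set Expansion Hypothesis, and this is where randomization enters the chain. Concretely, the plan is to build a dictatorship-test gadget in which the vertices of $K$ serve as the (folded) long-code coordinates — folding, i.e.\ balance, being exactly what the clique term in \eqref{eq:splitreform} rewards — while the vertices of $I$ encode the test's constraints, with $N(i)$ recording the two long-code tables, the permutation, and the noise pattern used in constraint $i$. For completeness, a labeling satisfying a $(1-\eta)$-fraction of the Unique Games constraints yields the matching dictator $\sigma$, which is balanced and makes almost every sum $\sum_{u\in N(i)}\sigma_u$ extremal (i.e.\ the corresponding $N(i)$ monochromatic under $\sigma$); by \eqref{eq:splitreform} this gives $\maxcut(G)\ge\beta_1|E(G)|$ for an explicit constant $\beta_1$, necessarily bounded away from $1$ because at most half the clique's edges can ever be cut. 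For soundness, a Fourier-analytic argument — in which only a \emph{weak} gap is required, not a tight one — shows that any near-balanced $\sigma$ can be decoded into a labeling satisfying a non-negligible fraction of the Unique Games constraints, so in the NO case $\maxcut(G)\le\beta_2|E(G)|$ with $\beta_2<\beta_1$; the theorem then holds for any constant $c<1-\beta_2/\beta_1$.

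The main difficulty I anticipate is carrying out this reduction inside the \emph{split-graph straitjacket}: the only design freedom is the bipartite incidence between $K$ and $I$, whereas the clique forces a fixed, \emph{large} additive term of order $k^2$ into \eqref{eq:splitreform} that is at once the sole balancing device and a quantity diluting whatever gap the $I$-part contributes. Making this work requires calibrating $|K|$ against $|I|$ and the block sizes $|N(i)|$ so that simultaneously: (i) the clique is large enough relative to $|E(G)|$ that genuinely unbalanced $\sigma$ are suboptimal; (ii) the instance stays sparse, so the Max Cut PTAS for dense graphs does not apply; and (iii) the blocks are large enough that the alignment objective is not merely governed by the (efficiently computable) spectral gap of a sparse graph. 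A further subtlety is ruling out $\sigma$ that are unbalanced just \emph{slightly}, trading a small clique loss for a larger alignment gain — this pushes one to make the gadget's total contribution, as a function of the bias of $\sigma$, sharply peaked at $0$, which in turn constrains the choice of noise and block structure. It is the interplay of these quantitative trade-offs, rather than any single idea, that I expect to force the attainable constant $c$ to be very small.
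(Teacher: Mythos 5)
Your reformulation of $\maxcut$ on a split graph---the clique term $\tfrac14\bigl(k^2 - (\sum_u\sigma_u)^2\bigr)$ rewarding balance, plus the greedy alignment term $\tfrac12\sum_i\bigl|\sum_{u\in N(i)}\sigma_u\bigr|$---is correct and is precisely the mechanism the paper exploits. But the paper then takes a much shorter path than the one you propose. Rather than designing and analysing a new dictatorship-test gadget, it reduces directly from the SSE-hard \emph{balanced sparse cut} problem (Corollary~3.6 of Raghavendra, Steurer, and Tulsiani \cite{raghavendra2012reductions}), which already gives SSE-hardness of distinguishing a weighted graph $H$ with a balanced set of expansion $O(\varepsilon)$ from one in which every set of measure in $[\tfrac{1}{10},\tfrac12]$ has expansion $\Omega(\sqrt{\varepsilon})$. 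After an unweighting step and a randomized sparsification/regularization step---which is the sole source of the ``under randomized reductions'' caveat, not the Raghavendra--Steurer SSE-to-UG step as you suggest---the paper sets $K := V(H)$ and $I := \{v_e : e \in E(H)\}$ with $N(v_e)$ the two endpoints of $e$. With $|N(i)| = 2$, your alignment term collapses to $2\bigl(|E(H)| - |\delta_H(S)|\bigr)$, so the objective becomes, up to affine rescaling, $\max_S\bigl[\,|S|\,|V\setminus S| - |\delta_H(S)|\,\bigr]$: exactly ``balanced set cutting few edges,'' which the black-box hardness result supplies.

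As written, your proposal is a plan rather than a proof: the soundness step (Fourier-analytic decoding of a near-balanced $\sigma$ with large alignment into a non-trivial UG labeling) is stated only as an intention, and the three calibration constraints you list---the clique term must be of the right order relative to $|E(K,I)|$, the instance must stay sparse enough to dodge the dense-graph PTAS, the blocks must be large enough to escape cheap spectral certificates---are genuine obstacles that you name but do not resolve. The paper sidesteps all of this because, after sparsifying the SSE-hard instance to $|E(H)| \in \Theta(|V(H)|^2)$ with near-regular degrees, the clique term ($\Theta(k^2)$) and the $I$-term ($\Theta(|E(H)|) = \Theta(k^2)$) are automatically on the same scale, and the ``slightly unbalanced $\sigma$ trading clique loss for alignment gain'' worry is handled because the NO guarantee of the starting problem covers every $S$ with $\mu(S) \in [\tfrac{1}{10},\tfrac12]$, not only exactly balanced ones. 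The key idea you are missing is that your reformulation should be instantiated with the two-element blocks $N(v_e)=\{u,w\}$ on top of an existing SSE-hard balanced-cut instance, rather than used to motivate a new long-code construction.
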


\begin{table}
\centering
\def\arraystretch{1.5}

\begin{tabular}{|c|c|c|}
\hline
                & lower bound    & upper bound            \\ \hline
General graphs  & $\alphagw$ \cite{goemans1995improved}    & $\alphagw^\dagger$ \cite{khot2007optimal} \\ \hline
Degree $ \leq d$ graphs     & $\alphagw + \Omega(\frac{1}{d^2 \log d})$ \cite{hsieh2022approximating}     & $\alphagw + \mathcal O(\frac{1}{\sqrt d})^\dagger$  \cite{trevisan2001non}        \\  \hline
Interval graphs & $\mathbf{\boldsymbol\alphagw + 10^{-34}}$ & \npc \cite{adhikary2023complexity}        \\ \hline
Split graphs & $\mathbf{\boldsymbol\alphagw + 10^{-16}}$ & ${\bf 1 - c}^\star$     \\ \hline
Planar graphs   & 1 \cite{hadlock1975finding}    &          \\ \hline
Line graphs   & 1 \cite{guruswami1999maximum}    &          \\ 
\hline
\end{tabular}

\caption{Known results for \mcproblem. Our work appears in bold. Results marked with $^\dagger$ are subject to the Unique Games Conjecture. Results marked with $^\star$ are subject to the Small Set Expansion Hypothesis under randomized reductions.}
\label{table:results}

\end{table}

\subsection{Our Techniques} \label{subsection:techniques}

\subsubsection{Analysis of the Perturbed Goemans-Williamson Algorithm.}
Our starting point is the Goemans-Williamson algorithm \cite{goemans1995improved}.
This algorithm first solves the following semidefinite program (SDP)
\[ 
    {\bf maximize} \left\{\frac{1}{2}\sum_{uv \in E} (1 - x_u \cdot x_v) \mid x_v \in \mathbb S^{|V|-1} ~\forall~v \in V \right\}. 
\]

That is, the algorithm maps each vertex to a unit vector in a way that maximizes the sum of $1 - x_u \cdot x_v = 1 - \cos \theta_{uv}$, where $\theta_{uv}$ is the angle between $x_u$ and $x_v$.
Next, the algorithm samples a random Gaussian vector $r \sim \mathcal N(0, 1)^{|V|}$, creates the set $S = \{v \mid r \cdot x_v \geq 0\}$, and returns the cut defined by $S$.
This step is equivalent to sampling a random hyperplane and taking $S$ to be the vertices on one side of this hyperplane.
It is a straightforward calculation to see that the probability of an edge $e \in E$ being cut by $S$ is equal to $\theta_e/\pi$.
Thus, the approximation guarantee of the Goemans-Williamson algorithm is equal to
\[
     \alphagw := \min_{\theta \in [0, \pi]} \frac{2}{\pi} \frac{\theta}{1 - \cos \theta} \approx 0.878.
\]

Define $\critangle := \text{argmin}_{\theta \in [0, \pi]} \frac{\theta}{1- \cos \theta} \approx 134^{\circ}$ to be the ``critical angle'' at which this ratio is minimized.
We plot the performance guarantee over all angles in Figure~\ref{fig:ratio}.
Any edge $e \in E$ with $\theta_e \neq \critangle$ has approximation ratio strictly better than $\alphagw$.
Intuitively, if at least a constant fraction of edges have angle bounded away from $\theta_c$, we should expect the Goemans-Williamson algorithm to achieve a better approximation ratio.
Unfortunately, this is not exactly true.

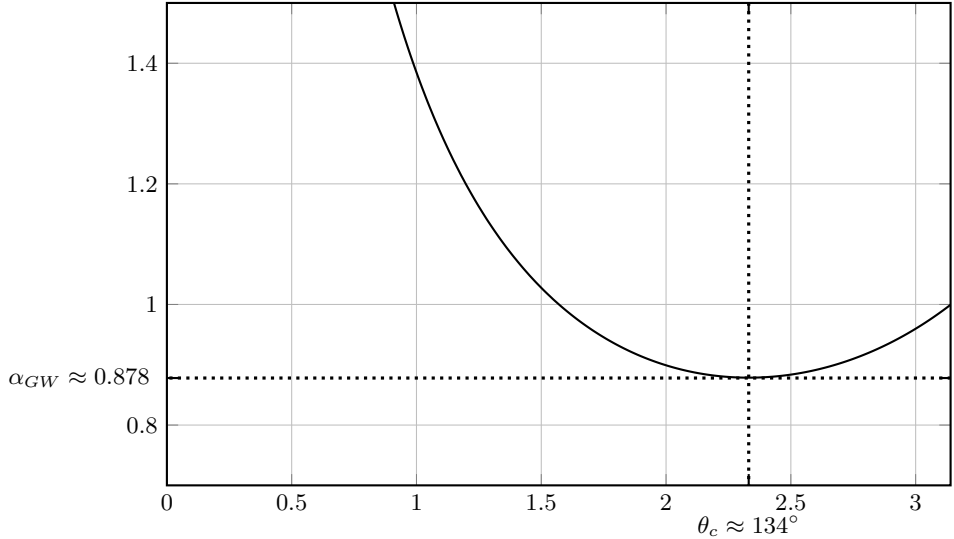
\begin{figure}[h!]
\centering
\begin{tikzpicture}
\begin{axis}[
    domain=0.01:3.14,        %
    samples=400,
    ymin=0.7, ymax=1.5,
    xmin=0, xmax=3.14,
    grid=major,
    thick,
    width=12cm,
    height=8cm,
    legend style={at={(0.5,-0.15)}, anchor=north, /tikz/every even column/.append style={column sep=1cm}},
    extra y ticks={0.878}, 
    extra y tick style={grid=none, tick label style={left=3pt}, tick style={black, thick}},
    extra y tick labels={$\alphagw \approx 0.878$},
    extra x ticks={2.33112},
    extra x tick style={grid=none, tick label style={below=8pt}, tick style={black, thick}},
    extra x tick labels={$\critangle \approx 134^{\circ}$}
]
  \addplot[
    black,
    thick
  ]
  {(2/pi) * (x / (1 - cos(deg(x))))};

  \addplot[
    black,
    dotted,
    very thick
  ]
  {0.878};

  \addplot[
    black,
    dotted,
    very thick
  ]
  coordinates {(2.33112,0.5) (2.33112,2)};
\end{axis}
\end{tikzpicture}
\caption{Plot of $\frac{2}{\pi}\frac{\theta}{1-\cos(\theta)}$ for $\theta\in(0,\pi)$.} \label{fig:ratio}
\end{figure}

Consider the graph $P_3$, the path on $3$ vertices.
Suppose the SDP is solved suboptimally, so that the first edge has angle $\critangle$ and the second edge has angle $0$.
Even though half of the edges have angle bounded away from $\critangle$, the expected size of the cut from rounding is still only $\alphagw$ times the optimal value of the SDP.
To handle situations like this, we introduce a ``perturbed'' version of the Goemans-Williamson algorithm.
In this perturbed algorithm, any vertex $v \in V$ with $|r \cdot x_v| < \eta$ for some small $\eta$ will instead be included in $S$ with probability $\frac{1}{2}$.
The motivation behind this perturbation is to consider what happens when every $x_v$ is ``moved'' by a small amount.
If $x_u$ and $x_v$ are close together, then they are likely to move further apart after this perturbation. Inversely, if $x_u$ and $x_v$ are far apart, they are likely to move closer together after this perturbation.

Indeed, in Lemma~\ref{lem:smallanglesbetter}, we show that the perturbed algorithm is more likely to cut any edge $e \in E$ with $\theta_e < \pi/2$, and less likely to cut any edge with $\theta_e > \pi/2$.
Moreover, the further away $\theta_e$ is from $\pi/2$, the more the results of the perturbed algorithm vary from the unperturbed algorithm.
Thus, if there are many edges with angle near $0$, but few edges with angle near $\pi$, the perturbed algorithm will achieve an approximation ratio above $\alphagw$, see Lemma~\ref{lem:manysmall}.
On the other hand, if there are many edges with angle near $\pi$, then the Goemans-Williamson algorithm itself will  achieve an approximation ratio above $\alphagw$, see Lemma~\ref{lem:manylarge}.
Thus, if we take the maximum result of the perturbed and unperturbed Goemans-Williamson algorithm, we can ignore the case of having many $0$ edges which contribute little to the optimal.
That is, as shown in Lemma~\ref{lem:betterrounding}, we can indeed assume that if at least a constant fraction of edges have angle bounded away from $\critangle$, the Goemans-Williamson algorithm will achieve an approximation ratio above $\alphagw$.

Consider a single triangle $T = \{uv, vw, wu\}$. A straightforward analysis shows that \[\theta_{uv} + \theta_{vw} + \theta_{wu} \leq 360^{\circ},\] regardless of the values of $x_u, x_v, x_w$.
Thus, there is some edge $e \in T$ with $\theta_e \leq 120^{\circ} < \theta_c$. 
That is, for every triangle in our graph, we can expect at least one of its edges to have an angle bounded away from $\critangle$.
This leads us into Section~\ref{section:tradeoff}, where we show that every interval graph and split graph either have a large edge-disjoint triangle packing, or have a cut with at least $0.9 |E|$ edges.

\subsubsection{Finding an Edge-Disjoint Triangle Packing.}

Suppose $G = (K \cup I, E)$ is a split graph.
If at least $0.1 |E|$ of its edges are contained in the clique $K$, then we can pack triangles almost perfectly into those edges.
Otherwise, we have that at least $0.9 |E|$ edges are crossing between $K$ and $I$, and so we have found a cut with $0.9 |E| \geq 0.9 \cdot \maxcut(G)$ edges, where $\maxcut(G)$ is the size of a maximum cut of $G$.
Thus, we find ourselves in a ``win-win'' situation, where $G$ either has an edge-disjoint triangle packing on a constant fraction of its edges, or $G$ is nearly bipartite, and we can find a large cut. 
This analysis leads directly to an improved approximation for \mcproblem on split graphs.

It turns out that the same ``win-win'' structural result also holds for interval graphs, which we prove in Theorem~\ref{thm:intervaltradeoff}.
To show this, we employ a marking scheme, where each vertex is classified as either ``small'' or ``large'' based on how it interacts with the rest of the graph.
We first show in Lemma~\ref{lem:fewlarge} that the number of edges between large vertices is low compared to $|E|$.
The situation for small vertices is more complicated.
We show in Lemmas~\ref{lem:fewsmall} and~\ref{lem:buildtriangle} that one of the following conditions must always hold:
\begin{enumerate}
    \item $G$ has a bridge; or \label{cond1}
    \item there is a clique $C$ in $G$ such that the sum of degrees $\sum_{v \in C} d_v$ is not much more than $|C|^2$; or \label{cond2}
    \item the number of edges between small vertices is low compared to $|E|$. \label{cond3}
\end{enumerate}
If Condition~\ref{cond1} holds, we can essentially delete the bridge; because adding a bridge to a bipartite graph does not create an odd cycle, we can always add back in the bridge after finding a cut in the rest of the graph.
If Condition~\ref{cond2} holds and $|C| \geq 3$, then we can pack edge-disjoint triangles into $C$ and delete $C$ from the graph.
Because the sum of degrees is not much more than $|C|^2$, we have packed triangles into at least a constant fraction of the deleted edges.
The case of $|C| \leq 2$ is more tricky, as one cannot pack triangles into one or two vertices,  and must be handled separately.
If Condition~\ref{cond3} holds, then most of the edges in $G$ go between small and large vertices.
If we have already packed triangles into a constant fraction of edges via Condition~\ref{cond2}, then we are done.
Otherwise, ``most'' edges are remaining in the graph, and we have found a cut on ``most'' of these remaining edges.
For the right definition of ``most,'' this is a large cut of at least $0.9 |E|$ edges.

\subsubsection{Hardness of Approximating Maximum Cut on Split Graphs.}

To show that \mcproblem is hard to approximate on split graphs, we start from the following hardness result that follows using standard techniques from \cite{raghavendra2012reductions}.
Assuming the Small Set Expansion Hypothesis holds under randomized reductions, for any sufficiently small $\varepsilon > 0$, there is no polynomial time algorithm that, given a graph $G = (V, E)$, can distinguish between the following two cases.
\begin{enumerate}
    \item There exists a cut $S \subseteq V$ with $|S| \approx 0.5 |V|$ such that $|\delta(S)| \leq \mathcal O( \varepsilon |E|)$. \label{case1}
    \item For all cuts $S \subseteq V$ with $|S| \leq 0.5 |V|$, either $|S| \leq 0.2 |V|$ or $|\delta(S)| \geq \Omega( \sqrt \varepsilon |E|)$. \label{case2}
\end{enumerate}
That is, in Case~\ref{case1}, there is a cut with nearly half the vertices that cuts very few edges.
In Case~\ref{case2}, all cuts with at least a constant fraction of vertices must cut many more edges.
We transform $G$ into a split graph $G' = (K \cup I, E')$ by turning $V$ into a clique and setting $K := V$, setting $I := \{v_e \mid e \in E\}$, and adding an edge from $v_e$ to each endpoint of $e$.
Any cut of $G$ can have at most $0.25 |V|^2$ edges from those edges contained in $K$, and at most $2 |E|$ edges from those going between $I$ and $K$.
In Case~\ref{case1}, there exists a cut that does cut nearly $0.25 |V|^2  + 2 |E|$ edges.
In Case~\ref{case2}, any cut of $G$ either cuts at most $0.2 |V|^2$ edges from those edges contained in $K$, or at most $(2 - \Omega(\sqrt \varepsilon))|E|$ from those edges crossing between $I$ and $K$.
After ensuring that $|E| \approx |V|^2$, this shows that \mcproblem is hard to approximate on split graphs.

\subsection{Preliminaries}

\subsubsection{Graphs.}
We consider only finite graphs in this paper. 
Apart from Section~\ref{section:hardness}, all considered graphs will also be simple and unweighted.
When the subject graph $G$ is clear from context, we will use $V := V(G)$ to refer to the vertex set of $G$, $E := E(G)$ to refer to the edge set of $G$, and $n := |V|$ to refer to the number of vertices in $G$.
Let $S \subseteq V$ be a subset of vertices of $G$.
We define $G[S] := (S, \{uv \in E \mid u, v \in S\})$ as the subgraph of $G$ induced by $S$ and $E[S] := E(G[S])$ as the set of edges in this subgraph.
We define $\delta_G(S) := \{uv \in E \mid u \in S, v \not \in S\}$ to be the cut induced by $S$ and $\maxcut(G) := \max_{S' \subseteq V}\{|\delta_G(S')|\}$ to be the maximum cut size of $G$.
We will often omit $G$ and write only $\delta(S)$ when the graph $G$ is clear from context.
Let $v \in V$ be any vertex of $G$.
We define $\delta(v) := \delta(\{v\})$ as the set of edges adjacent to $v$, $N(v) := \{u \in V \mid uv \in E\}$ as the set of neighbors of $v$, and $d_v := |N(v)|$ as the degree of $v$.
We define a triangle of $G$ as a set of edges $\{uv, vw, wu\} \subseteq E$ forming a triangle.
We say that $G$ has a \emph{triangle packing} of size $t$ if there exist $t$ edge-disjoint triangles $T_1, T_2, \ldots, T_t \subseteq E$.

\subsubsection{Gaussians.}
We define $\mathcal N(0, 1)$ as the Gaussian distribution with mean $0$ and variance $1$.
Moreover, we define $\mathcal N(0, 1)^n$ as the $n$-dimensional Guassian distribution, where a sampled vector $v \sim \mathcal N(0, 1)^n$ has $v_i \sim \mathcal N(0, 1)$ for each $i \in [n]$ and $v_i$ is independent of $v_j$ for $i \neq j$.
We make use of the fact that sampling a vector in this way is equivalent to sampling a random direction; that is, after normalizing, $v$ becomes a uniformly random unit vector in $n$-dimensional space.
In particular, this means that $\mathcal N(0, 1)^n$ is symmetric up to rotations, which we will exploit frequently.
We will also make use of the following lemma, which says that $\mathcal N(0, 1)$ is roughly equivalent to a uniform distribution close to $0$.
\begin{lemma} \label{lem:gaussian}
    For a randomly sampled $r \sim \mathcal N(0, 1)$, we have that
    \[
        \frac{x}{2} \leq \prob[|r| \leq x] \leq x
    \]
    for all $x \in [0, 1]$.
\end{lemma}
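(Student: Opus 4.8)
Write $\varphi(t) = \frac{1}{\sqrt{2\pi}} e^{-t^2/2}$ for the density of $\mathcal N(0,1)$ and set
\[
    G(x) := \prob[\,|r| \le x\,] \;=\; \int_{-x}^{x}\varphi(t)\,dt \;=\; 2\int_0^x \varphi(t)\,dt,
\]
where the last equality uses the symmetry of $\varphi$. The plan is to prove the two inequalities separately, in both cases leaning on the single observation that $G$ is \emph{concave} on $[0,\infty)$: indeed $G'(x) = 2\varphi(x)$ and $G''(x) = -2x\,\varphi(x) \le 0$ for $x \ge 0$.

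For the upper bound, I would use that a concave function with $G(0) = 0$ lies on or below its tangent at $0$, so $G(x) \le G'(0)\,x = 2\varphi(0)\,x = \sqrt{2/\pi}\,x \le x$, the last step because $2 < \pi$. (Equivalently, just bound $\varphi(t) \le \varphi(0)$ inside the integral.)

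For the lower bound, set $g(x) := G(x) - x/2$; subtracting a linear function keeps $g$ concave on $[0,\infty)$, and a concave function on $[0,1]$ lies on or above the chord through its two endpoints, so it suffices to check $g(0) \ge 0$ and $g(1) \ge 0$. The first is immediate since $G(0) = 0$. For the second I would invoke the elementary inequality $e^{-u} \ge 1 - u$ with $u = t^2/2$, giving
\[
    G(1) = \frac{2}{\sqrt{2\pi}}\int_0^1 e^{-t^2/2}\,dt \;\ge\; \frac{2}{\sqrt{2\pi}}\int_0^1\!\Bigl(1 - \tfrac{t^2}{2}\Bigr)dt \;=\; \frac{2}{\sqrt{2\pi}}\cdot\frac{5}{6} \;=\; \frac{5}{3\sqrt{2\pi}} \;>\; \frac12,
\]
since $100 > 18\pi$. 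Hence $g \ge 0$ on $[0,1]$, which is exactly $G(x) \ge x/2$.

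There is no genuine obstacle here, but the one point that needs care is the lower bound near $x = 1$: the naivest estimate $\varphi(t) \ge \varphi(1)$ on all of $[0,1]$ yields only $G(x) \ge 2x\,\varphi(1) = \sqrt{2/\pi}\,e^{-1/2}\,x \approx 0.484\,x$, which falls just short of $x/2$. So the argument must exploit that $\varphi$ is appreciably larger than $\varphi(1)$ for small $t$, and that is precisely what the concavity/chord step (equivalently, the direct lower bound on $\int_0^1\varphi$ above) captures.
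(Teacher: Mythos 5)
Your proof is correct. The paper itself omits the proof of this lemma, stating only that it ``follows from direct calculation,'' so there is no author argument to compare against; your concavity-based argument is a clean way to formalize it, and all the numerical checks ($\sqrt{2/\pi} < 1$, $\tfrac{5}{3\sqrt{2\pi}} > \tfrac12$, and $G(1) \approx 0.683 \ge \tfrac12$) are accurate.
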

The proof of Lemma~\ref{lem:gaussian} follows from direct calculation and is not instructive, so we omit it.

\section{Triangle Packing and Maximum Cut Tradeoff} \label{section:tradeoff}

This section is devoted to proving the following structural result.%
\begin{theorem} \label{thm:intervaltradeoff}
    If $G = (V, E)$ is an interval graph, then $G$ either has a triangle packing of size $10^{-8} |E|$ or has a cut of size at least $0.9 |E|$ that can be found in polynomial time.
\end{theorem}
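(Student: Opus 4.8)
The plan is to reduce Theorem~\ref{thm:intervaltradeoff} to a structural dichotomy about an interval representation of $G$, obtained via a vertex-marking scheme, and then to assemble the two outcomes by a peeling argument. Fix an interval representation $\{\interval_v\}_{v\in V}$ and a corresponding ordering of the maximal cliques of $G$, and label each vertex as \emph{small} (class $\smallv$) or \emph{large} (class $\largev$) according to how concentrated its neighborhood is --- roughly, whether the edges at $v$ live in a bounded number of maximal cliques or are spread across many of them. The first step is to bound $|E[\largev]|$ by a tiny constant fraction of $|E|$: two adjacent large vertices, both with spread-out neighborhoods, force a configuration of many intervals mutually overlapping around a common point, and since each such configuration ``uses up'' a constant share of the total edge mass, only $o(|E|)$ edges can run between large vertices. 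This is the analogue of Lemma~\ref{lem:fewlarge}.

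The heart of the argument is a trichotomy for the small vertices: either (\ref{cond1})~$G$ has a bridge; or (\ref{cond2})~$G$ contains a clique $C$ with $\sum_{v\in C}d_v\le(1+\gamma)|C|^2$ for a suitable small constant $\gamma$; or (\ref{cond3})~$|E[\smallv]|\le\varepsilon_0|E|$ for a small constant $\varepsilon_0$. The idea is to take a small vertex $u$, look at the clique it lies in together with the intervals straddling it, and argue that if this clique is not ``thin'' in the sense of (\ref{cond2}) and $u$ is not the cut vertex of a bridge, then the interval structure forces every edge at $u$ to be shared among many edges at a bounded number of neighboring intervals, so a double-counting argument caps $|E[\smallv]|$. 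Alongside this I would prove the packing statement behind Lemma~\ref{lem:buildtriangle}: a clique $C$ with $|C|\ge 3$ satisfying (\ref{cond2}) admits an edge-disjoint triangle packing of size $\Omega(|C|^2)$ (using near-perfect triangle decompositions of complete graphs), which is a constant fraction of the at most $(1+\gamma)|C|^2$ edges incident to $C$.

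To combine, run the loop: while some clique $C$ with $|C|\ge 3$ satisfies (\ref{cond2}), pack $\Omega(|C|^2)$ triangles into $C$, delete $C$ and its incident edges, and add the triangles to a running packing $\mathcal T_0$; while $G$ has a bridge, delete it; while some clique with $|C|\le 2$ satisfies (\ref{cond2}) --- i.e., one or two vertices of constant degree --- delete it. When none applies, condition (\ref{cond3}) holds for the residual graph $H$, and together with the bound on large edges the bipartition $(\smallv(H),\largev(H))$ cuts all but at most $(\varepsilon_0+o(1))|E(H)|$ edges of $H$. Now there are two cases: if the clique-deletion steps removed at least a $\delta$-fraction of $|E|$ then $|\mathcal T_0|\ge\Omega(\delta|E|)\ge 10^{-8}|E|$ and we return the packing; otherwise $H$ keeps a $(1-\delta)$-fraction of the edges, and we build a cut of $G$ of size at least $0.9|E|$ by taking the near-perfect cut of $H$, adding the deleted bridges back (each can always be placed in the cut, since a bridge lies in no cycle), and adding the deleted low-degree vertices back on their minority sides. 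Since $\gamma,\varepsilon_0,\delta$ are absolute constants that can be taken as small as needed, all the arithmetic closes with large slack, which is why the stated constants $10^{-8}$ and $0.9$ are nowhere near tight.

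I expect the main obstacle to be the small-vertex trichotomy (Lemmas~\ref{lem:fewsmall} and~\ref{lem:buildtriangle}): this is the only place where the interval structure is used in an essential way, and it requires showing that, absent a bridge and a thin clique, every edge at a small vertex is forced into enough local overlap that $E[\smallv]$ is globally negligible. A secondary but genuinely delicate point is the $|C|\le 2$ branch of (\ref{cond2}): such cliques cannot host a triangle, so they must be peeled into the ``add-back'' bucket rather than the triangle bucket, and one has to check that the total number of edges lost this way does not erode the $0.9|E|$ cut bound --- which is exactly where the slack in the constants is spent.
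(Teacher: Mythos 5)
Your overall architecture matches the paper's: a small/large vertex marking, a bound on large--large edges (Lemma~\ref{lem:fewlarge}), a trichotomy for small vertices (Lemmas~\ref{lem:fewsmall} and~\ref{lem:buildtriangle}), and a peeling loop that either accumulates a constant-fraction triangle packing or terminates with a near-perfect cut. One small correction on the way: your formulation of condition~(\ref{cond2}), $\sum_{v\in C}d_v\le(1+\gamma)|C|^2$ with $\gamma$ ``as small as needed,'' is too strong. The interval structure only forces $\sum_{v\in C}d_v\le K|C|^2$ for a \emph{large} absolute constant $K$ (the paper takes $C = B_t\cap\smallv$ and gets $K = T/\varepsilon = 2\times 10^4$). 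This still suffices --- you pack $\Omega(|C|^2)$ triangles against $O(|C|^2)$ deleted edges --- but insisting on small $\gamma$ would leave you unable to find the clique at all.

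The real gap is your handling of the $|C|\le 2$ branch. You propose a separate ``delete the low-degree vertices, then greedily re-insert them on minority sides'' bucket, and yourself flag that one must ``check that the total number of edges lost this way does not erode the $0.9|E|$ cut bound,'' but you offer no mechanism to bound the aggregate degree of these peeled vertices; and this step makes progress on neither the triangle count nor the cut, so it could a priori eat a constant fraction of $|E|$. The paper eliminates the bucket entirely with a short chordality argument (this is the $|B_t\cap\smallv|=2$ case of Lemma~\ref{lem:buildtriangle}): write $B_t\cap\smallv = \{v_1,v_2\}$, so $v_1v_2\in E$. If $N(v_1)\cap N(v_2)=\emptyset$, then $v_1v_2$ is a bridge --- any alternate $v_1$--$v_2$ path of length $\ge 2$ would, taken shortest, close a chordless cycle of length $\ge 4$, contradicting chordality. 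If instead some $u\in N(v_1)\cap N(v_2)$ exists, then $\{v_1v_2, uv_1, uv_2\}$ is a packable triangle, and since $d_{v_1}+d_{v_2}\le 4T/\varepsilon$ the ratio of packed triangles to deleted edges is again $\Omega(\varepsilon/T)$, as required. So every bag with $\ge 2$ small vertices yields either a bridge or a triangle; there is no third case. Without that observation, your current plan does not close.
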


\subsection{Warmup: Tradeoff for Split Graphs}

As a warmup, we prove the following structural result that is essentially equivalent to Theorem~\ref{thm:intervaltradeoff}, except it is for split graphs instead of interval graphs.
\begin{theorem} \label{thm:splittradeoff}
    If $G = (V, E)$ is a split graph, then $G$ either has a triangle packing of size $0.01 |E|$ or has a cut of size at least $0.9 |E|$ that can be found in polynomial time.
\end{theorem}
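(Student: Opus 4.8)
The plan is to follow exactly the ``win-win'' strategy sketched in Section~\ref{subsection:techniques}. Let $G = (V, E)$ be a split graph with vertex partition $V = K \sqcup I$, where $G[K]$ is a clique and $I$ is independent. Every edge of $G$ is therefore either an edge inside $K$ or an edge with exactly one endpoint in $K$ and one in $I$; write $E = E[K] \sqcup E_{KI}$ accordingly. The argument splits on whether $|E[K]| \geq 0.01|E|$ or not.

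First I would handle the case $|E_{KI}| \geq 0.99|E|$ (equivalently $|E[K]| < 0.01 |E|$). Here the cut $S := I$ (or $S := K$) has $\delta_G(S) \supseteq E_{KI}$, so $|\delta_G(S)| \geq 0.99|E| \geq 0.9|E|$, and this cut is clearly found in polynomial time. So in this case $G$ has the desired large cut.

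Second, I would handle the case $|E[K]| \geq 0.01|E|$. Since $K$ induces a clique, I need a packing of edge-disjoint triangles covering a constant fraction of the $\binom{|K|}{2}$ edges of $G[K]$; then since $|E[K]| \geq 0.01|E|$ this yields a packing of size $\Omega(|E|)$, and I would track the constant to land at $\geq 0.01|E|$ (e.g. by noting a clique on $m$ vertices has an edge-disjoint triangle packing covering at least, say, $\tfrac{1}{3}(\binom{m}{2} - m)$ edges, which for $m \geq 3$ is at least a constant times $\binom{m}{2}$; for small $m$ one checks directly). Concretely: if $|K| \leq 2$ then $|E[K]| \leq 1$, so $|E| \leq 100$, and $G$ is trivially bipartite-ish — actually if $|E[K]| \leq 1$ we are in the first case unless $|E|$ is tiny, and a graph with $|E| \le 100$ can be solved by brute force, so WLOG $|K| \geq 3$. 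For $|K| \geq 3$, a near-perfect triangle packing of the clique $K_{|K|}$ exists: it is classical that $K_m$ decomposes into triangles when $m \equiv 1, 3 \pmod 6$ (Steiner triple systems), and in general one can greedily pack at least $\binom{m}{2}/3 - O(m)$ edge-disjoint triangles; since $|K| \geq 3$ and $|E[K]| = \binom{|K|}{2} \geq 0.01|E|$, this gives a triangle packing of size at least $c \cdot \binom{|K|}{2} \geq c \cdot 0.01 |E|$ for an absolute constant $c$, and I would just verify $c$ is large enough (or adjust the threshold $0.01$ in the other branch) so the bound $0.01|E|$ in the statement holds — alternatively, one can be slightly more generous in the first branch and use a weaker constant in the packing branch to make both work out. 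If the arithmetic is tight, I would state the clean version: choose the split point $\beta$ so that $\min(\beta/3 \cdot (1 - o(1)),\ 1 - \beta) \geq 0.01$, which holds with $\beta$ around $0.04$ and plenty of slack, and observe $1-\beta > 0.9$ still.

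I do not expect a genuine obstacle here — this is explicitly called a ``warmup.'' The only mildly delicate point is the triangle packing of the clique when $|K|$ is small or not congruent to $1$ or $3$ mod $6$; the cleanest route is to avoid Steiner systems entirely and use a trivial greedy bound (repeatedly remove any triangle from $G[K]$; each removal deletes $3$ edges and can only fail to continue once fewer than roughly $|K|$ edges remain in any stubborn configuration), which loses only a lower-order $O(|K|)$ term and is easily absorbed since $\binom{|K|}{2}$ grows quadratically. I would therefore present the proof as: (i) the large-cut branch, one line; (ii) reduce to $|K| \geq 3$ by brute-forcing tiny graphs; (iii) greedily pack triangles into $G[K]$, bound the packing size below by a constant fraction of $|E[K]| \geq 0.01|E|$, and conclude. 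All steps run in polynomial time.
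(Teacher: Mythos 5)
Your overall plan is the same as the paper's: split $E$ into clique edges $E[K]$ and crossing edges, take the crossing edges as a cut if there are many of them, and otherwise pack triangles into the clique. The paper uses the threshold $|\delta(K)| \geq 0.9|E|$ versus $|E[K]| \geq 0.1|E|$, then invokes a short explicit lemma (label each triangle $\{u,v,w\}$ of $K_n$ by $u+v+w \bmod n$ and pigeonhole over labels) to pack at least $\binom{n}{3}/n \geq |E[K_n]|/10$ triangles for $n \geq 3$, giving $0.1|E|/10 = 0.01|E|$ triangles. So the structure is identical; the only substantive difference is how you pack triangles into the clique, and there your argument has a real error.

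The claim that greedy triangle removal from a clique ``can only fail to continue once fewer than roughly $|K|$ edges remain'' is false. Greedy stops exactly when the residual graph is triangle-free, and a triangle-free graph on $m$ vertices can have as many as $\lfloor m^2/4 \rfloor$ edges (Mantel's theorem, with $K_{\lfloor m/2\rfloor,\lceil m/2\rceil}$ as the extremal example), which is $\Theta(m^2)$, not $O(m)$. So the ``lose only a lower-order term'' justification does not hold, and with it the claimed packing fraction near $1/3$ falls through; this in turn breaks the numerology where you set $\beta \approx 0.04$ and need $\beta/3 \geq 0.01$. The argument is salvageable: by Mantel, greedy removes at least $\binom{m}{2} - m^2/4 = m^2/4 - m/2$ edges, hence produces at least $(m^2/4 - m/2)/3$ triangles, which is at least $|E[K]|/6$ once $m$ is not tiny; with the threshold instead at $|E[K]| \geq 0.06|E|$ (so the cut branch still gives $\geq 0.94|E| \geq 0.9|E|$), you recover $\geq 0.01|E|$ triangles. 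Alternatively, just use the paper's Lemma~\ref{lem:completetrianglepacking}, which is both tighter ($|E[K_n]|/10$) and avoids the Mantel detour. Either fix closes the gap, but as written your greedy bound is wrong, and the Steiner-system alternative you gesture at is not actually carried out for general $m$.
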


Before we prove this, we need the following helpful lemma.
\begin{lemma} \label{lem:completetrianglepacking}
    The complete graph $K_n$ on $n \geq 3$ vertices has an edge-disjoint triangle packing of size $\frac{|E[K_n]|}{10} = \frac{n (n-1)}{20}$.
\end{lemma}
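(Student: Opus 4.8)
The plan is to produce an explicit edge-disjoint triangle packing in $K_n$ of size at least $n(n-1)/20$. The natural tool is the classical fact that $K_n$ admits a \emph{near-perfect} triangle decomposition: when $n \equiv 1, 3 \pmod 6$, the edge set of $K_n$ decomposes entirely into $n(n-1)/6$ triangles (a Steiner triple system), and for the remaining residues one can still pack triangles covering all but $O(n)$ edges. Since a clean statement like this might be more than we want to cite, I would instead give a short self-contained greedy/inductive argument: as long as the current graph has a vertex of degree at least $2$ with two neighbors that are themselves adjacent, remove a triangle; more robustly, I would induct on $n$ directly.

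First I would handle small cases $n \in \{3,4,5\}$ by hand: $K_3$ has one triangle and one edge, giving $1 \ge 3/20$; $K_4$ decomposes into... well, $K_4$ has $6$ edges and contains two edge-disjoint triangles, giving $2 \ge 6/20$; $K_5$ has $10$ edges and contains two edge-disjoint triangles (e.g. $\{1,2,3\}$ and $\{1,4,5\}$... actually one can do better, $\{1,2,3\},\{3,4,5\},\{1,4,\cdot\}$ — but two suffices), giving $2 \ge 10/20$. Then for the inductive step with $n \ge 6$, I would peel off a constant number of vertices (say $6$) from $K_n$, pack triangles among those $6$ new vertices and between them and the rest. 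Concretely, removing $6$ vertices leaves $K_{n-6}$; the $\binom{6}{2} = 15$ edges inside the removed set plus the $6(n-6)$ edges to the rest total $6n - 21$ edges, and I claim we can pack at least $(6n-21)/10$... hmm, more carefully: I want to pack roughly $\lfloor$ (number of removed edges) $/3 \rfloor - O(1)$ triangles among the removed edges. Pairing up $n-6$ outside vertices (assume parity works out, else drop one) and matching them with the $6$ inside vertices lets us form roughly $3(n-6)$ triangles of the form $\{a, x, y\}$ with $a$ inside and $x,y$ outside adjacent — but $x,y$ outside are adjacent in $K_n$, so this is valid, and it uses $3$ edges per triangle, covering the $6(n-6)$ cross edges plus $3(n-6)$ outside edges. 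That over-counts; the cleanest bookkeeping is: the induced $K_6$ on the removed set contains $\ge 2$ edge-disjoint triangles (in fact $K_6$ minus a perfect matching is $3$-colorable into triangles... $K_6$ has $15$ edges, decomposing into $4$ triangles leaves $3$ edges, and $4 \ge 15/10$ minus slack), and the bipartite-type structure gives the rest.

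Rather than optimize, the key point is: by induction, $K_{n-6}$ has a packing of size $(n-6)(n-7)/20$, and the newly exposed edges number exactly $n(n-1)/2 - (n-6)(n-7)/2 = 6n - 21$; among any graph that contains a clique of size $6$ together with all edges to an even-sized outside set I can pack at least $\lceil (6n-21)/10 \rceil$ triangles using a fixed gadget construction, so the total is at least $(n-6)(n-7)/20 + (6n-21)/10 = (n^2 - 13n + 42 + 12n - 42)/20 = (n^2 - n)/20 = n(n-1)/20$, closing the induction.

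The main obstacle will be verifying the gadget claim — that the $6n - 21$ freshly exposed edges (the $K_6$ plus all edges from those $6$ vertices to the other $n-6$) admit an edge-disjoint triangle packing of size at least $(6n-21)/10$. This is where a small amount of genuine casework on parities of $n-6$ enters, and where one must be slightly careful that the triangles used here are edge-disjoint from the inductive packing on $K_{n-6}$ (they are, since every such triangle uses at least one of the $6$ removed vertices, hence at least one edge not in $K_{n-6}$ — though edges with both endpoints outside the removed set could collide, so the gadget must route all outside-outside edges it uses through a matching that the induction does not touch; cleanest is to simply not use any outside-outside edge in the gadget, accepting a slightly weaker constant such as packing $\ge (6n-21)/12$ triangles, and then adjust the peel size from $6$ to a larger constant so the arithmetic still yields the $1/10$ bound overall). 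I would make sure the final constant works out to exactly $n(n-1)/20$ by choosing the peel-size constant generously; the factor $10$ in the statement (rather than the optimal $\approx 3$) gives ample room.
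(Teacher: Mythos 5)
Your proposal takes a different route (induction by peeling a constant-size clique) from the paper, which uses a short averaging argument: label each triangle $\{u,v,w\}$ with $(u+v+w) \bmod n$, observe that the triangles with any fixed label are automatically edge-disjoint, and conclude by pigeonhole that some label class has at least $\binom{n}{3}/n = \frac{(n-1)(n-2)}{6} \geq \frac{n(n-1)}{20}$ triangles. Unfortunately, your approach as written has a gap that is not a mere bookkeeping detail.

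The problem is the gadget claim. After peeling a constant number $k$ of vertices (you take $k=6$), the freshly exposed edges number $\binom{k}{2}+k(n-k)=\Theta(n)$, so to close the induction you need $\Theta(n)$ edge-disjoint triangles among these edges that do not collide with the inductive packing on the remaining $K_{n-k}$. Your primary construction uses triangles with one peeled vertex and two outside vertices, which necessarily consume an outside--outside edge; those edges live in $K_{n-k}$ and the inductive hypothesis gives you no control over which of them the inner packing already occupies. Your fallback --- ``simply not use any outside-outside edge'' --- is fatal: every triangle that avoids outside--outside edges must have at least two peeled vertices, hence must use at least one of the $\binom{k}{2}$ edges inside the peeled clique, so the gadget can supply at most $\binom{k}{2}$ triangles. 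For fixed $k$ this is $O(1)$, not $\Theta(n)$, and no ``larger constant'' peel size $k$ fixes an order-of-growth mismatch. Thus the arithmetic $\frac{(n-6)(n-7)}{20}+\frac{6n-21}{10}=\frac{n(n-1)}{20}$, while correct, rests on a gadget bound that cannot hold for large $n$ under the edge-disjointness constraint you impose. Repairing this would require either strengthening the inductive hypothesis (e.g., to provide a packing avoiding any prescribed collection of $O(n)$ matchings) or abandoning constant-size peeling in favor of peeling $\Theta(n)$ vertices at once, at which point the argument changes shape considerably; either way, substantive new work is needed beyond what is sketched.
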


Edge-disjoint triangle packings in complete graphs have been studied in the literature before, with \cite{feder2012packing} giving an optimal bound.
Lemma~\ref{lem:completetrianglepacking} is not very close to the optimal bound, but it is sufficient for our purposes and substantially easier to prove.

\begin{proof}
    Label the vertices of $K_n$ with numbers $0, 1, 2, \ldots, n-1$.
    Label each triangle $\{uv, vw, wu\}$ with $(u + v + w) \text{ modulo } n$.
    Fix any edge $uv \in E(K_n)$. 
    Then we have that, for each possible triangle label, there is only one triangle involving $uv$ with that label.
    So we may take all the triangles of any specific label and find an edge-disjoint triangle packing.
    By the pigeon hole principle, some label has at least $\binom{n}{3} / n$ triangles. 
    Therefore, $K_n$ has a triangle packing of size $\binom{n}{3} / n \geq \frac{|E[K_n]|}{10}$, as wanted.
\end{proof}

With Lemma~\ref{lem:completetrianglepacking} in hand, we can now prove Theorem~\ref{thm:splittradeoff}.

\begin{proof}[Theorem~\ref{thm:splittradeoff}]
    Write $V = K \sqcup I$ where $K$ and $I$ are the clique and independent portions of $G$, respectively.
    Then we can partition $E = \delta(K) \sqcup E[K]$.
    If $|\delta(K)| \geq 0.9|E|$, then we are done, as we have constructed a cut of size at least $0.9 |E|$ in polynomial time.
    Otherwise, we have that $|E[K]| \geq 0.1 |E|$. 
    Now, recalling that $G[K]$ is the complete graph on $|K|$ vertices, we use Lemma~\ref{lem:completetrianglepacking} to find that $G[K]$ (and thus $G$) has an edge-disjoint triangle packing of size at least $\frac{|E[K]|}{10} \geq 0.01 |E|$.
\end{proof}

\subsection{Finding a Cut}

The proof of Theorem~\ref{thm:intervaltradeoff} maintains a similar flavor as that of Theorem~\ref{thm:splittradeoff}.
While the proof of Theorem~\ref{thm:splittradeoff} either finds a large cut or one large clique to pack triangles into, we need to repeatedly apply Lemma~\ref{lem:completetrianglepacking} during the proof of Theorem~\ref{thm:intervaltradeoff}, as there may be no single large-enough clique.
That is, at each stage, we either identify a clique that we may pack triangles into and remove while being careful to bound the number of non-clique edges we remove, or we conclude there is a large cut. 

Given an interval graph $G = (V, E)$, we proceed by partitioning the vertices into ``small'' and ``large'' vertices $V = \smallv \sqcup \largev$.
At each step, we will either certify that the implied cut is large $|\delta(\smallv)| \geq 0.9 |E|$, or find a way to expand a triangle packing on edges in $E[\smallv]$.

For any $t \in \mathbb{R}$, define $B_t := \{v \in V \mid t \in \interval_v\}$ to be the ``bag'' of vertices whose intervals occupy position $t$.
For a vertex $v \in V$, define $\minbag_v := \text{min}_{t \in \interval_v} |B_t|$ to be the size of the smallest bag in $v$'s interval.
Let $T$ be a constant we will decide the value of later.
We say that $v$ is ``small'' if $d_v \leq T \cdot \minbag_v$ and ``large'' otherwise.
In other words, if at least a constant fraction of $v$'s edges ``come from'' $\minbag_v$, then $v$ is small.
We define $\smallv$ as the set of small vertices and $\largev$ as the set of large vertices.

\begin{lemma} \label{lem:fewlarge}
    $|E[\largev]| \leq 8 T^{-1} \cdot |E|$.
\end{lemma}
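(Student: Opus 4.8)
The plan is to combine two ingredients: a ``near/far'' dichotomy for the neighbours of a large vertex, which shows that most incident edges of a large vertex are ``cheap'' because largeness forces $\minbag_v<d_v/T$; and a structural analysis of the few remaining edges using that each vertex is a single interval.

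\textbf{Near and far neighbours.} For each $v\in\largev$ fix a point $p_v\in\interval_v$ with $|B_{p_v}|=\minbag_v$. Call a neighbour $u$ of $v$ \emph{near} (with respect to $v$) if $p_v\in\interval_u$, i.e.\ $u\in B_{p_v}$, and \emph{far} otherwise. Since $B_{p_v}$ is a clique of size $\minbag_v$ containing $v$, vertex $v$ has at most $\minbag_v-1$ near neighbours, and hence strictly more than $(T-1)\minbag_v$ far ones. Moreover a far neighbour $u$ meets $\interval_v$ but avoids the point $p_v\in\interval_v$, so by connectedness $\interval_u$ lies entirely on one side of $p_v$: either $r_u<p_v$ or $l_u>p_v$.

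\textbf{Step 1: near incidences are few.} Summing the bound ``at most $\minbag_v-1$ near neighbours'' over $v\in\largev$ and using $\minbag_v<d_v/T$ (largeness), the number of ordered pairs $(v,u)$ with $v\in\largev$ and $u$ near $v$ is less than $\tfrac1T\sum_{v\in\largev}d_v=\tfrac1T\bigl(2|E[\largev]|+|\delta(\largev)|\bigr)\le\tfrac1T\bigl(2|E[\largev]|+|E|\bigr)$.

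\textbf{Step 2: few mutually far edges (the main obstacle).} Call an edge $uv\in E[\largev]$ \emph{mutually far} if each endpoint is a far neighbour of the other; any other edge of $E[\largev]$ contributes at least $1$ to the count of Step 1, so it suffices to show that at least a fixed positive fraction (a quarter would do) of $E[\largev]$ is not mutually far. Take a mutually far $uv$ with $l_u<l_v$. No interval of the pair contains the other (otherwise the min-bag point of the contained vertex would lie in the other interval, contradicting farness), so the two intervals overlap properly; unwinding the two farness conditions then yields $l_u\le p_u<l_v\le r_u<p_v\le r_v$. In particular every vertex $v$ that is a ``right partner'' ($l_u<l_v$) of a fixed $u$ in a mutually far edge has $r_u\in[l_v,r_v]=\interval_v$, so $\{u\}\cup\{v : uv\text{ mutually far and }l_u<l_v\}$ is a clique contained in $B_{r_u}$ in which $u$ is the last vertex by right endpoint and the first by left endpoint. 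The plan is to bound the mutually far edges via these cliques using the per-clique estimate $\sum_{w\in C}\minbag_w\ge\tfrac12\binom{|C|}{2}$, valid for every clique $C$ of an interval graph (ordering $C$ by right endpoint, its $j$-th vertex has $\minbag\ge\min\{j,\text{its rank by left endpoint}\}$, and minimizing the resulting sum over the two rankings costs at least $\tfrac12\binom{|C|}{2}$): the non-extremal vertices of each such clique have moderate ranks, so their min-bags can absorb the mutually far edges once one accounts for the multiplicity with which a vertex serves as a right partner. This accounting is the delicate point.

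\textbf{Step 3: conclusion.} By Step 2 at least $\tfrac14|E[\largev]|$ edges of $E[\largev]$ are not mutually far and are thus counted in Step 1, so $\tfrac14|E[\largev]|<\tfrac1T\bigl(2|E[\largev]|+|E|\bigr)$, i.e.\ $(\tfrac T4-2)|E[\largev]|<|E|$. For $T\ge16$ this gives $|E[\largev]|<\tfrac{|E|}{T/4-2}\le 8T^{-1}|E|$ (which is also why $T$ will eventually be chosen at least $16$).

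The hard part is Step 2. The difficulty is that the only bag of $v$ whose size we control is $B_{p_v}$, while a mutually far partner of $v$ is by construction outside it; the mutually far partners of a vertex do share one common bag, but that bag can be arbitrarily large, so the obvious counting fails. The resolution has to genuinely use that each vertex is a single interval — it can be ``thin on the left'' or ``thin on the right'' of a given overlap, but not both — to make the mutually far configuration, and hence the cliques above, rare relative to $|E[\largev]|$.
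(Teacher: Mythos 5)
Your proposal is incomplete: you explicitly leave Step~2 (``at least a fixed positive fraction of $E[\largev]$ is not mutually far'') unproven, and that step is the entire difficulty. Without it, Step~3 has nothing to chain to, so the proof does not establish the lemma. Moreover, even as a plan, the ``per-clique $\sum_{w\in C}\minbag_w\ge\tfrac12\binom{|C|}{2}$'' idea is only a hint: you still need to (a) account for each vertex appearing as a ``right partner'' in many such cliques without double counting, and (b) convert a lower bound on $\sum\minbag_w$ over certain cliques into an upper bound on the number of mutually far edges, neither of which is carried out. So this is a genuine gap, not a matter of polish.

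The paper's proof avoids the near/far split entirely and never needs a density claim like your Step~2. It orders $\largev$ by $\minbag$ (ties broken by left endpoint) and, for each $v\in\largev$, bounds the number of large neighbors $u$ with $u\prec v$: any such $u$ must have $\interval_u$ meeting an endpoint of $\interval_v$ (otherwise $\interval_u\subseteq\interval_v$ forces $\minbag_u\ge\minbag_v$ and $v\prec u$); and among the large neighbors through, say, the left endpoint $t$ of $\interval_v$, only the $\minbag_v$ leftmost-starting and $\minbag_v$ rightmost-ending ones can satisfy $u\prec v$, because every remaining $u$ has all of $\interval_u$ covered by one of these two sub-bags, forcing $\minbag_u>\minbag_v$. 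This gives at most $4\minbag_v\le 4T^{-1}d_v$ ``backward'' edges per $v$, hence $|E[\largev]|\le\sum_{v\in\largev}4T^{-1}d_v\le 8T^{-1}|E|$, with no threshold on $T$. Your approach, by charging to the near endpoint's $\minbag$ bag, runs into exactly the configuration the paper handles by instead looking at the \emph{boundary} bags of the interval: a large vertex's interval must have small bags somewhere, and the two extreme bags are the ones that control how many ``earlier'' large neighbors it can have. If you want to salvage your route, the missing Step~2 would need a lemma of roughly this flavor; the paper's ordering trick is the cleaner way to package it.
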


\begin{proof}
    Define an ordering $\prec$ on $\largev$ by $u \prec v$ if $\minbag_u < \minbag_v$. 
    If $\minbag_u = \minbag_v$ we say $u \prec v$ if the leftmost point of $\interval_u$ is to the left of the leftmost point of $\interval_v$.
    For $u, v \in V$ with $\minbag_u = \minbag_v$ and equal leftmost point, we break ties arbitrarily to extend $\prec$ into a total ordering.
    
    Fix any $v \in \largev$. We will bound the number of edges $uv \in E[\largev]$ with $u \prec v$ as a function of $d_v$.
    Consider any $u \in \largev \cap N(v)$ such that $\interval_u$ does not intersect either the leftmost point or rightmost point of $\interval_v$.
    Then we must have that $\interval_u \subseteq \interval_v$ and thus $\minbag_u \geq \minbag_v$.
    Combined with the fact that the leftmost bag of $\interval_u$ is to the right of the leftmost bag of $\interval_v$ this implies that $v \prec u$.
    Thus, we need only consider $u \in \largev \cap N(v)$ such that $\interval_u$ intersects either the leftmost or rightmost bag of $\interval_v$.

    Let $P_v$ be the set of $u \in \largev \cap N(v)$ such that $\interval_u$ contains the leftmost point of $\interval_v$.
    Let $L_v \subseteq P_v$ be the first $\min\{|P_v|, \minbag_v\}$ of these neighbors sorted by increasing leftmost point.
    Similarly, let $R_v \subseteq P_v$ be the last $\min\{|P_v|, \minbag_v\}$ of these neighbors sorted by increasing rightmost point. 
    Now notice that all $u \in P_v \setminus (L_v \cup R_v)$ have $\minbag_u > \minbag_v$ and thus $v \prec u$.
    This is because every point of $\interval_u$ either intersects all of $L_v$ or all of $R_v$, which have size at least $\minbag_v$ assuming $P_v \setminus (L_v \cup R_v)$ is non-empty.
    See Figure~\ref{fig:overlapinterval} for an illustration.
    Thus, the number of vertices $u \in P_v$ such that $u \prec v$ is at most $|L_v \cup R_v| \leq 2 \minbag_v$.
    We can similarly bound the number of vertices $u \in \largev \cap N(v)$ such that $\interval_u$ contains the rightmost point of $\interval_v$ and $u \prec v$ by $2 \minbag_v$.
    
    Putting these bounds together with the fact that $v \in \largev$, we can bound the total number of edges $uv \in E[\largev]$ with $u \prec v$ by $4 \minbag_v \leq 4 T^{-1} \cdot d_v$.
    Thus, we find that $|E[\largev]| \leq \sum_{u \in \largev} 4 T^{-1} \cdot d_u \leq 8T^{-1} \cdot |E|$.

\usetikzlibrary{decorations.pathreplacing}
\begin{figure}[htbp]
  \centering
\begin{tikzpicture}[scale=1.5,>=stealth, every node/.style={font=\scriptsize}]
  \foreach \y/\llen/\rlen in {
    1.0/1.8/0.3,   %
    1.2/2.3/0.2,   %
    1.4/3.0/0.4    %
  }{
    \draw[thick] (-\llen,\y) -- (\rlen,\y);
  }
  \draw[decorate, decoration={brace, amplitude=3pt}]
    (-3.1,0.95) -- (-3.1,1.45) node[midway,left=4pt] {$L_v$};

  \draw[thick] (0.6,1.25) -- (6,1.25);
  \draw[thick] (0.6,1) -- (6,1);

  \foreach \y/\llen/\rlen in {
    0.6/0.3/1.6,   %
    0.4/0.2/2.5,   %
    0.2/0.4/3.2    %
  }{
    \draw[thick] (-\llen,\y) -- (\rlen,\y);
  }
  \draw[decorate, decoration={brace, amplitude=3pt}]
    (-0.5,0.15) -- (-0.5,0.65) node[midway,left=4pt] {$R_v$};

  \draw[thick] (0,0) -- (6,0) node[midway, below=1pt] {$\interval_v$};

  \draw[thick] (-1,-0.5) -- (1,-0.5) node[midway, below=1pt] {$\interval_u$};
\end{tikzpicture}

  \caption{Representation of $\interval_u, \interval_v, L_v, R_v$ with $\minbag_v = 3$. Every point of $\interval_u$ is contained in either all of $L_v$ or all of $R_v$.}
  \label{fig:overlapinterval}

\end{figure}

\end{proof}

Unlike with large vertices, we cannot unconditionally bound $|E[\smallv]|$.
We instead introduce a condition that, if unsatisfied, will allow us to make progress towards a triangle packing.

\begin{lemma} \label{lem:fewsmall}
    If, for some $\varepsilon > 0$, all $t \in \mathbb R$ have $|B_t \cap \smallv| \leq \max\{1, \varepsilon \cdot |B_t|\}$, then $|E[\smallv]| \leq 4 \varepsilon \cdot |E|$.
\end{lemma}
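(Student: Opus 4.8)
I would prove this by a charging argument that sends each edge of $E[\smallv]$ to one of its endpoints, governed by where the two intervals start. Write $\ell_v$ for the leftmost point of $\interval_v$. The basic observation is that if $uv \in E$ and $\ell_u \le \ell_v$, then $\ell_v$ lies in $\interval_u$ as well as in $\interval_v$: since $\interval_u \cap \interval_v \neq \emptyset$ and $\interval_u$ starts no later than $\ell_v$, the point $\ell_v$ cannot exceed the right endpoint of $\interval_u$. Hence $u \in B_{\ell_v}$. Now charge each edge $uv \in E[\smallv]$ to its endpoint with the larger leftmost point (ties broken arbitrarily), say $v$; the other endpoint $u$ then lies in $B_{\ell_v} \cap \smallv$ and is distinct from $v$. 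Since distinct edges charged to $v$ have distinct other endpoints, at most $|B_{\ell_v}\cap\smallv| - 1$ edges are charged to $v$, and as each edge of $E[\smallv]$ is charged exactly once we get $|E[\smallv]| \le \sum_{v \in \smallv}(|B_{\ell_v}\cap\smallv| - 1)$.

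Next I would feed in the hypothesis, applied at each point $t = \ell_v$. Here the key point — and the reason the ``$\max\{1,\cdot\}$'' causes no harm — is that a bag containing only one small vertex carries no edge of $E[\smallv]$: if $|B_{\ell_v}\cap\smallv| \le 1$ then the summand is $0$ and can be dropped. For the surviving terms $|B_{\ell_v}\cap\smallv| \ge 2$, so $\max\{1,\varepsilon|B_{\ell_v}|\} \ge 2$ forces $\varepsilon|B_{\ell_v}| \ge |B_{\ell_v}\cap\smallv|$; we may assume $\varepsilon \le 1$ (otherwise $4\varepsilon|E| \ge 4|E| \ge |E[\smallv]|$ and there is nothing to prove), and then $|B_{\ell_v}\cap\smallv| - 1 \le \varepsilon|B_{\ell_v}| - 1 \le \varepsilon(|B_{\ell_v}| - 1)$. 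This last inequality also holds trivially when $|B_{\ell_v}\cap\smallv| \le 1$, so summing over $v \in \smallv$ gives $|E[\smallv]| \le \varepsilon\sum_{v\in\smallv}(|B_{\ell_v}| - 1)$.

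Finally I would bound $\sum_{v\in\smallv}(|B_{\ell_v}| - 1)$ by $2|E|$: every $u \in B_{\ell_v}\setminus\{v\}$ satisfies $\ell_v \in \interval_u \cap \interval_v$, hence $uv \in E$, so $B_{\ell_v}\setminus\{v\} \subseteq N(v)$ and $|B_{\ell_v}| - 1 \le d_v$; thus $\sum_{v\in\smallv}(|B_{\ell_v}| - 1) \le \sum_{v\in V}d_v = 2|E|$ and $|E[\smallv]| \le 2\varepsilon|E| \le 4\varepsilon|E|$, as claimed. I do not anticipate a genuine obstacle — the argument is short once the charging scheme is chosen, and it never uses the definition of a small vertex, only the stated bag condition — and the only point that needs care is exactly the interaction with the ``$\max\{1,\cdot\}$'': one must notice that single-small-vertex bags contribute nothing, rather than being left to absorb a stray additive $|\smallv|$-type term, which would be hopeless in general. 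The gap between the $2\varepsilon|E|$ the argument actually gives and the stated $4\varepsilon|E|$ simply absorbs the slack in crude steps such as $|B_{\ell_v}| - 1 \le d_v$ and the tie-breaking.
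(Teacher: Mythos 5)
Your proof is correct and takes essentially the same route as the paper: both charge each edge of $E[\smallv]$ to the endpoint whose interval has the later leftmost point, bound the edges charged to $v$ via the bag condition at $\ell_v$, and then relate $|B_{\ell_v}|$ to $d_v$ before summing degrees. Your bookkeeping is slightly more careful (using $|B_{\ell_v}|-1\le d_v$ rather than $|B_{\ell_v}|\le 2d_v$, and tracking the $-1$), yielding $2\varepsilon|E|$ where the paper settles for $4\varepsilon|E|$, but the argument is the same.
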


\begin{proof}
    Fix any non-isolated $v \in \smallv$.
    Let $t$ denote the leftmost point of $\interval_v$.
    Let $P_v \subseteq \smallv \cap N(v)$ denote the set of small neighbors of $v$ whose intervals contain $t$.
    Note that $\sum_{u \in \smallv} |P_u| \geq |E[\smallv]|$.
    By the definition of $B_t$, we can bound $|P_v| = |B_t \cap \smallv| - 1 \leq \varepsilon \cdot |B_t|$.
    Additionally, we have that $d_v \geq |B_t| - 1 \geq |B_t|/2$ and thus $|P_v| \leq 2 \varepsilon \cdot d_v$.
    Now we can bound $|E[\smallv]|$ by iterating over all small vertices $u \in \smallv$ \[
    |E[\smallv]| \leq \sum_{u \in \smallv} |P_u| \leq \sum_{u \in \smallv} 2 \varepsilon \cdot d_u \leq 4\varepsilon \cdot |E|.
    \]
\end{proof}

\subsection{Building a Triangle Packing}

\begin{lemma} \label{lem:buildtriangle}
    If, for some $\varepsilon > 0$ and $t \in \mathbb{R}$, $|B_t \cap \smallv| \geq \max\{2, \varepsilon \cdot |B_t|\}$, then either
    \begin{enumerate}
        \item  we can pack at least $\frac{\varepsilon}{30T} \cdot \sum_{u \in B_t \cap \smallv} d_u$ edge-disjoint triangles into $\bigcup_{u \in B_t \cap \smallv} \delta(u)$ or \label{condition1} %
        \item $G$ has a bridge. \label{condition2}
    \end{enumerate}
\end{lemma}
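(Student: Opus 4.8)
The plan is to branch on the size $c := |B_t \cap \smallv|$, exploiting two facts about the set $C := B_t \cap \smallv$: it is a clique of $G$, since all of its intervals contain the common point $t$; and every $u \in C$ is small with $\minbag_u \le |B_t|$ (because $t \in \interval_u$), so $d_u \le T \minbag_u \le T|B_t|$. Combining this with the hypothesis $c = |C| \ge \varepsilon |B_t|$ gives $d_u \le Tc/\varepsilon$ for each $u \in C$, hence $\sum_{u \in C} d_u \le Tc^2/\varepsilon$ and $\frac{\varepsilon}{30T}\sum_{u \in C} d_u \le c^2/30$. So it is enough to produce an edge-disjoint triangle packing of size at least $c^2/30$ when $c \ge 3$, and of size at least $1$ when $c = 2$ (where the target $\frac{\varepsilon}{30T}\sum_{u\in C} d_u$ is at most $\frac{4}{30} < 1$), all of whose triangles are contained in $\bigcup_{u \in C} \delta(u)$.

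For $c \ge 3$ I would simply invoke Lemma~\ref{lem:completetrianglepacking} on the clique $G[C] = K_c$, obtaining an edge-disjoint triangle packing of size at least $\frac{c(c-1)}{20} \ge \frac{c^2}{30}$; each triangle it produces has all three edges inside $C$, hence inside $\bigcup_{u \in C} \delta(u)$. This is Condition~\ref{condition1}.

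For $c = 2$, write $C = \{u,v\}$. If $N(u) \cap N(v) \neq \emptyset$, pick a common neighbor $z$: the single triangle $\{uv, vz, zu\}$ lies in $\delta(u) \cup \delta(v)$ and already exceeds the target, so Condition~\ref{condition1} holds. Otherwise $N(u) \cap N(v) = \emptyset$; since any vertex of $B_t \setminus \{u,v\}$ would be a common neighbor of $u$ and $v$, this forces $B_t = \{u,v\}$, and I claim $uv$ is a bridge, which is Condition~\ref{condition2}. To see this, suppose $G - uv$ contains a path between $u$ and $v$ and take a shortest one; its length is at least $3$ because $N(u) \cap N(v) = \emptyset$. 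Every internal vertex of this path is distinct from $u$ and $v$, so its interval misses $t$ and therefore lies strictly on one side of $t$; since consecutive internal vertices are adjacent, they all lie on the same side, say to the left of $t$. Let $v'$ be whichever of $u,v$ has the smaller left endpoint, and let $w$ be the internal vertex adjacent to $v'$. Then $\interval_w \subseteq (-\infty, t)$ meets $\interval_{v'}$, so $\interval_w$ contains a point that is at least the left endpoint of $\interval_{v'}$ and strictly less than $t$; this point also lies in the other of $\interval_u, \interval_v$, which reaches at least as far to the left and contains $t$. Hence $w \in N(u) \cap N(v)$, a contradiction. The case where the internal vertices lie to the right of $t$ is symmetric.

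I expect the bridge argument in the case $c = 2$, $N(u) \cap N(v) = \emptyset$ to be the main obstacle. The packing bookkeeping is routine once one notices that $B_t \cap \smallv$ is a clique whose vertices have degree $O(T|B_t|)$, but ruling out a detour between $u$ and $v$ through the rest of the graph genuinely requires the interval geometry near $t$: one must place the path's internal vertices consistently on one side of $t$ and then argue that the endpoint of that subpath which meets the "inner" of $u,v$ is forced into the neighbourhood of the "outer" one.
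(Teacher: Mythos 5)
Your proof follows the same case split as the paper's: for $c := |B_t \cap \smallv| \ge 3$ invoke Lemma~\ref{lem:completetrianglepacking} on the clique $B_t \cap \smallv$ and balance against the degree bound $d_u \le T|B_t| \le Tc/\varepsilon$, and for $c = 2$ either produce a single triangle from a common neighbour or declare $uv$ a bridge. The bookkeeping (in particular the chain $\tfrac{\varepsilon}{30T}\sum_{u \in C} d_u \le c^2/30 \le \tfrac{c(c-1)}{20}$ for $c \ge 3$, and the $c=2$ bound $\tfrac{4}{30} < 1$) is the same as the paper's up to cosmetic rearrangement. The one place you go beyond the paper is the bridge claim in the $c = 2$, $N(u)\cap N(v) = \emptyset$ case: the paper asserts it without proof (it is the standard chordality fact that an edge with no common neighbour is a bridge, since a shortest cycle through it of length $\ge 4$ would be chordless), whereas you give a direct interval-geometric argument. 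That argument is sound in spirit but has a sign slip: you take $v'$ to be the endpoint of $\{u,v\}$ with the \emph{smaller} left endpoint, yet you then need the \emph{other} interval to ``reach at least as far to the left'' so that the intersection point $p \in \interval_w \cap \interval_{v'}$ is also in it — which requires $v'$ to be the one with the \emph{larger} left endpoint. Your own closing commentary (``the endpoint $\ldots$ which meets the inner of $u,v$ is forced into the neighbourhood of the outer one'') shows you had the right picture; swap ``smaller'' for ``larger'' and the argument is correct.
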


\begin{proof}
    Suppose $|B_t \cap \smallv| \geq 3$. 
    Then by Lemma~\ref{lem:completetrianglepacking}, we can pack at least $|B_t \cap \smallv|^2/30$ edge-disjoint triangles into $B_t \cap \smallv$.
    By the definition of $\smallv$, we have that 
    \[
        \sum_{u \in B_t \cap \smallv} d_u \leq \sum_{u \in B_t \cap \smallv} T \cdot \minbag_u \leq |B_t \cap \smallv| \cdot T \cdot |B_t| \leq |B_t \cap S|^2 \cdot T \cdot \varepsilon^{-1}.
    \]
    This fulfills Condition~\ref{condition1}.
    
    Now suppose $|B_t \cap \smallv| = 2$. 
    Label $B_t \cap \smallv = \{v_1, v_2\}$. 
    If $N(v_1) \cap N(v_2) = \emptyset$, then $v_1v_2$ is a bridge of $G$ and thus Condition~\ref{condition2} is fulfilled.
    Otherwise, let $u \in N(v_1) \cap N(v_2)$.
    We can pack a single triangle into the edges $v_1v_2, uv_1, uv_2$.
    Further, we have that $\varepsilon \cdot |B_t| \leq 2$, and so $d_{v_1} + d_{v_2} \leq \frac{4T}{\varepsilon}$.
    This fulfills Condition~\ref{condition1}.
\end{proof}

This leads us to Algorithm~\ref{alg:intervalmaxcut}.
We will iteratively find a clique to pack triangles into, delete the clique, and continue.
If we reach a point where we have deleted at least $0.01 |E|$ edges, then we can finish as we have packed triangles into a constant fraction of the edges.
Otherwise, if we run out of cliques to pack triangles into, we certify that we have an almost-complete cut on the remaining graph, which still contains most of the original edges of $G$.
Finally, whenever we identify a bridge, we can simply delete it from the graph and add it to our final cut, as bridges can always be added to any cut.

\begin{algorithm}[H]
\caption{$\textsc{IntervalMaxCut}(G = (V, E), T, \varepsilon)$}
\label{alg:intervalmaxcut}
\begin{algorithmic}
\State $G_0 \gets G$, $\mathcal T \gets \varnothing$, $A \gets \varnothing$, $i \gets 0$
\While{$|\mathcal T| \leq 0.01 |E|$}
    \State label $V(G_i) = \smallv \cup \largev$ as defined
    \If{$G_i$ has a bridge $e$}
        \State $A \gets A \cup \{e\}$
        \State $G_{i+1} \gets (V(G_i), E(G_i) \setminus \{e\})$
    \ElsIf{some $t \in \mathbb R$ has $|B_t \cap \smallv| \geq \max\{2, \varepsilon \cdot |B_t|\}$}
        \State $\mathcal T \gets \mathcal T \cup \bigcup_{u \in B_t \cap \smallv} \delta(u)$
        \State $G_{i+1} \gets G_i[V(G_i) \setminus (B_t \cap \smallv)]$
    \Else
        \State return cut $\delta(\smallv) \cup A$
    \EndIf
    \State $i \gets i+1$
\EndWhile
\State return $\textsc{PerturbGW}(G)$
\end{algorithmic}
\end{algorithm}
Note that Algorithm~\ref{alg:intervalmaxcut} runs in polynomial time.
There are at most $|E|$ iterations of the while loop, as each iteration either returns or removes at least one edge.

\begin{lemma} \label{lem:bigcut}
    If $\textsc{IntervalMaxCut}(G, T, \varepsilon)$ returns the set $\delta(\smallv) \cup A$ from within the while loop, then $|\delta(\smallv) \cup A| \geq 0.99 \cdot (1 - 4 \varepsilon - 8T^{-1}) \cdot |E|$ and $\delta(\smallv) \cup A$ is a valid cut.
\end{lemma}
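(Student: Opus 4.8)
The plan is to extract exactly what must hold when the algorithm exits through the final \textbf{else} branch, and then combine the edge-counting lemmas of this section with a short accounting argument. Assume $\textsc{IntervalMaxCut}(G, T, \varepsilon)$ returns $\delta(\smallv) \cup A$ during its $i$-th iteration, so that $\smallv \sqcup \largev = V(G_i)$ is the labeling of the current graph $G_i$ and $\delta(\smallv)$ denotes $\delta_{G_i}(\smallv)$. The starting point is pure bookkeeping: every iteration of the loop moves its deleted edges into exactly one of $\mathcal T$ (on a bag removal) or $A$ (on a bridge removal) and never reinserts anything, so at every moment $E(G) = E(G_i) \sqcup \mathcal T \sqcup A$. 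Since the loop guard held on entry to iteration $i$, we have $|\mathcal T| \le 0.01\,|E|$, hence $|E(G_i)| = |E| - |\mathcal T| - |A| \ge 0.99\,|E| - |A|$. I would also record that $G_i$ is still an interval graph, so that Lemmas~\ref{lem:fewlarge} and~\ref{lem:fewsmall} apply to it: deleting a vertex just restricts the interval representation, and a bridge lies in no triangle, so the overlapping intervals of its two endpoints can be pulled apart without disturbing any other adjacency.

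Next I would bound $|\delta_{G_i}(\smallv)|$ from below. The edges of $G_i$ partition as $E(G_i) = \delta_{G_i}(\smallv) \sqcup E[\smallv] \sqcup E[\largev]$. Lemma~\ref{lem:fewlarge} gives $|E[\largev]| \le 8T^{-1}\,|E(G_i)|$ unconditionally. Because the algorithm reached the \textbf{else} branch, no $t$ satisfies $|B_t \cap \smallv| \ge \max\{2, \varepsilon \cdot |B_t|\}$, which implies that every $t$ has $|B_t \cap \smallv| \le \max\{1, \varepsilon \cdot |B_t|\}$, the hypothesis of Lemma~\ref{lem:fewsmall}; hence $|E[\smallv]| \le 4\varepsilon \cdot |E(G_i)|$. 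Therefore $|\delta_{G_i}(\smallv)| \ge (1 - 4\varepsilon - 8T^{-1})\,|E(G_i)|$. Since $\delta_{G_i}(\smallv) \subseteq E(G_i)$ is disjoint from $A$,
\[ |\delta(\smallv) \cup A| = |\delta_{G_i}(\smallv)| + |A| \;\ge\; (1 - 4\varepsilon - 8T^{-1})\bigl(0.99\,|E| - |A|\bigr) + |A|, \]
and since $1 - 4\varepsilon - 8T^{-1} \le 1$ the coefficient of $|A|$ on the right-hand side is nonnegative, so the bound is worst at $|A| = 0$ and we get $|\delta(\smallv) \cup A| \ge 0.99\,(1 - 4\varepsilon - 8T^{-1})\,|E|$, as claimed.

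It remains to show that $\delta(\smallv) \cup A$ is a valid cut, i.e.\ that the spanning subgraph $(V(G), \delta_{G_i}(\smallv) \cup A)$ of $G$ is bipartite. I would prove this by replaying the loop iterations in reverse while maintaining a $2$-coloring of $V(G_k)$ under which every edge of $(\delta_{G_i}(\smallv) \cup A) \cap E(G_k)$ is bichromatic. The base case $k = i$ is the coloring given by $\smallv$ and $\largev$, which makes all of $\delta_{G_i}(\smallv)$ bichromatic (and no edge of $A$ lies in $E(G_i)$). Undoing a bag-removal step reintroduces some small vertices together with their incident edges, and all of those edges were placed in $\mathcal T$ rather than in the returned cut, so the new vertices may be colored arbitrarily. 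Undoing a bridge-removal step reintroduces a single edge $e = uv$ that was a bridge of $G_k$, so $u$ and $v$ lie in distinct connected components of $G_{k+1} = G_k - e$; since every edge of $(\delta_{G_i}(\smallv) \cup A) \cap E(G_{k+1})$ lies within a single component of $G_{k+1}$, flipping the colors on $v$'s component (if necessary) makes $e$ bichromatic while keeping every previously bichromatic edge bichromatic. At $k = 0$ this yields a $2$-coloring of $V(G)$ witnessing that $\delta(\smallv) \cup A$ is a valid cut.

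The conceptual content is modest; the work is in the bookkeeping. The step I expect to need the most care is the partition identity $E(G) = E(G_i) \sqcup \mathcal T \sqcup A$ together with the disjointness of $\delta_{G_i}(\smallv)$ from $A$ — these are precisely what make the $|A|$ contribution harmless rather than harmful — and, relatedly, keeping the reverse induction honest about which edges live in $G_k$ at each stage (in particular, that an edge deleted into $\mathcal T$ at step $k$ is never one of the cut edges we are obliged to keep bichromatic). The only genuinely structural point along the way is that $G_i$ remains an interval graph, which is what licenses the use of Lemmas~\ref{lem:fewlarge} and~\ref{lem:fewsmall}.
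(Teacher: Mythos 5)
Your proof is correct and mirrors the paper's argument: apply Lemmas~\ref{lem:fewlarge} and~\ref{lem:fewsmall} to $G_i$, use the partition $E = E(G_i) \sqcup \mathcal T \sqcup A$ together with the loop guard $|\mathcal T| \le 0.01|E|$, and restore the bridges in $A$ by the observation that adding a bridge cannot create an odd cycle. You fill in several details the paper leaves implicit---that $G_i$ remains an interval graph (so the two lemmas apply), the explicit arithmetic showing the $|A|$ term only helps, and a careful reverse-induction for cut validity---but the core decomposition and the key lemmas are the same.
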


\begin{proof}
    Suppose that Algorithm~\ref{alg:intervalmaxcut} returns on the $i$th iteration of the while loop.
    By Lemmas~\ref{lem:fewlarge} and~\ref{lem:fewsmall}, we have that $|\delta(\smallv)| \geq (1 - 4\varepsilon - 8T^{-1}) \cdot |E_i|$ edges.
    Note that $E = E_i \cup A \cup \mathcal T$ and due to the while loop condition, we have that $|\mathcal T| \leq 0.01 |E|$.
    Thus, $|\delta(\smallv) \cup A| \geq 0.99 \cdot (1 - 4 \varepsilon - 8T^{-1}) \cdot |E|$.

    To see that $\delta(\smallv) \cup A$ is a valid cut, note that adding a bridge to a bipartite graph cannot introduce an odd cycle.
    Thus, we can iteratively add each edge of $A$ to $\delta(\smallv)$ without invalidating our cut.
\end{proof}

\begin{lemma} \label{lem:bigtriangle}
    If $\textsc{IntervalMaxCut}(G)$ exits the while loop, then $G$ has an edge-disjoint triangle packing of size at least $\frac{\varepsilon}{3000T} |E|$.
\end{lemma}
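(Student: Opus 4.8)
The plan is to follow the set $\mathcal T$ maintained by $\textsc{IntervalMaxCut}$ and show that it always carries a triangle packing of size $\tfrac{\varepsilon}{30T}|\mathcal T|$. The first step is to note that the algorithm ``exits the while loop'' (i.e.\ reaches the final \textsc{PerturbGW} line) only when the loop guard $|\mathcal T| \le 0.01|E|$ becomes false, so at that moment $|\mathcal T| > 0.01|E|$. Hence it suffices to produce an edge-disjoint triangle packing of $G$ of size at least $\tfrac{\varepsilon}{30T}|\mathcal T|$, since $\tfrac{\varepsilon}{30T}\cdot 0.01|E| = \tfrac{\varepsilon}{3000T}|E|$.

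Next I would analyze one execution of the branch that enlarges $\mathcal T$. Say it occurs on iteration $i$, for a point $t$ with $|B_t \cap \smallv| \ge \max\{2,\varepsilon|B_t|\}$ in the current graph $G_i$; it adds $F_i := \bigcup_{u \in B_t \cap \smallv}\delta_{G_i}(u)$ to $\mathcal T$ and then deletes the vertices $B_t \cap \smallv$. This branch is reached only when the bridge test on $G_i$ has failed, so $G_i$ has no bridge; therefore Condition~\ref{condition2} of Lemma~\ref{lem:buildtriangle} is impossible and Condition~\ref{condition1} must hold, giving an edge-disjoint packing of at least $\tfrac{\varepsilon}{30T}\sum_{u\in B_t\cap\smallv} d_u$ triangles lying entirely inside $F_i$, where the degrees $d_u$ are taken in $G_i$. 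Since $|F_i| \le \sum_{u \in B_t\cap\smallv}|\delta_{G_i}(u)| = \sum_{u\in B_t\cap\smallv}d_u$, this packing has size at least $\tfrac{\varepsilon}{30T}|F_i|$.

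Then I would assemble these per-iteration packings. Once iteration $i$ deletes $B_t \cap \smallv$ from the graph, every edge of $F_i$ disappears and can never re-enter $\mathcal T$ or any later $G_j$; consequently the sets $F_{i_1}, F_{i_2},\dots$ produced over the whole run are pairwise edge-disjoint and, at termination, $|\mathcal T| = \sum_j |F_{i_j}|$. Each $F_{i_j}$ is a subset of $E(G)$ and they are pairwise disjoint, so the union of the triangle packings found inside the individual $F_{i_j}$ is a single edge-disjoint triangle packing of $G$, of total size at least $\sum_j \tfrac{\varepsilon}{30T}|F_{i_j}| = \tfrac{\varepsilon}{30T}|\mathcal T| > \tfrac{\varepsilon}{3000T}|E|$, which is exactly the claimed bound.

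The work is almost entirely bookkeeping, and that is where the one real obstacle lies: one must be careful that the degrees and the sets $B_t$, $\smallv$ in the invocation of Lemma~\ref{lem:buildtriangle} refer to the \emph{current} graph $G_i$ rather than to $G$; that the guard ``$|B_t\cap\smallv|\ge\max\{2,\varepsilon|B_t|\}$'' of the \texttt{elsif} is literally the hypothesis of that lemma; that the bridge alternative is excluded because the bridge branch has priority in the conditional; and that removing $B_t \cap \smallv$ genuinely makes the contributions to $\mathcal T$ disjoint so that the separate packings can be concatenated with no shared edges. Beyond that, the only computation is $\tfrac{1}{30}\cdot\tfrac{1}{100} = \tfrac{1}{3000}$.
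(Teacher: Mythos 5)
Your proposal is correct and takes essentially the same approach as the paper: apply Lemma~\ref{lem:buildtriangle} to each iteration that enlarges $\mathcal T$, observe that the bridge branch has priority so Condition~\ref{condition2} is ruled out, note that deleting $B_t \cap \smallv$ makes the contributions to $\mathcal T$ pairwise edge-disjoint, and sum to get $\frac{\varepsilon}{30T}|\mathcal T| > \frac{\varepsilon}{3000T}|E|$. You supply the bookkeeping details (disjointness of the $F_{i_j}$, $|F_i| \le \sum_{u} d_u$, degrees taken in the current graph $G_i$) that the paper's one-paragraph proof leaves implicit.
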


\begin{proof}
    At the conclusion of the while loop, we have that $|\mathcal T| \geq 0.01 \cdot |E|$.
    Due to Lemma~\ref{lem:buildtriangle}, each time we expand $\mathcal T$ by $x$ edges, we can pack an additional $\frac{\varepsilon}{30T} x$ edge-disjoint triangles into $\mathcal T$.
    By iterating this process, there exists an edge-disjoint triangle packing of size at least $\frac{\varepsilon}{30T} |\mathcal T| \geq \frac{\varepsilon}{3000T} |E|$ in $G$.
\end{proof}

Now set $\varepsilon = 0.01$ and $T = 200$.
Lemmas~\ref{lem:bigcut} and~\ref{lem:bigtriangle} imply that either $G$ has a cut of size at least $0.99 \cdot 0.92 |E| > 0.9 |E|$ or an edge-disjoint triangle packing of size at least $\frac{1}{6} \cdot 10^{-7} |E| > 10^{-8} |E|$, as wanted.

\subsection{No Tradeoff for Chordal Graphs}

In this subsection, we will show that there is no equivalent tradeoff for chordal graphs as there are for interval graphs and split graphs.
Thus, improving upon $\alphagw$ for chordal graphs will likely require new algorithmic insights.

\begin{theorem}
    For all $c > 0$, there exists a chordal graph $G = (V, E)$ such that $\maxcut(G) < \alphagw |E|$ and $G$ has no edge-disjoint triangle packing of size $c |E|$.
\end{theorem}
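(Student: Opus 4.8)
The plan is to exhibit, for each $c>0$, an explicit chordal graph $G$ with $\maxcut(G)<\alphagw|E|$ and $\nu_\triangle(G)<c|E|$, where $\nu_\triangle$ denotes the maximum number of edge-disjoint triangles. The obvious candidates fail, and this dictates the shape of the construction: a disjoint union of a fixed bad-for-Goemans--Williamson chordal gadget (such as copies of $K_4$, where $\maxcut=\tfrac23|E|$) keeps \emph{both} the max-cut ratio and the triangle-packing ratio constant, so it never pushes $\nu_\triangle/|E|$ below a fixed constant; and attaching a large pendant forest or a large book to such a gadget does send $\nu_\triangle/|E|\to 0$ but simultaneously sends $\maxcut/|E|\to 1$. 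So I would look for a recursively defined family $\{G_k\}_{k\ge 1}$: fix a small chordal seed $G_1$ with $\maxcut(G_1)<\alphagw|E(G_1)|$ together with a perfect elimination ordering, and obtain $G_{k+1}$ from $G_k$ by a gadget operation (a controlled substitution or blow-up into a bounded-size bad-for-GW structure) tailored so that (a) chordality is preserved, (b) the max-cut ratio stays uniformly below $\alphagw$, and (c) the edge count grows strictly faster than the size of the largest edge-disjoint triangle packing. Taking $k=k(c)$ large then makes (c) yield $\nu_\triangle(G_{k(c)})<c|E(G_{k(c)})|$.

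With such a family, the argument splits into three lemmas. \emph{Chordality} of $G_k$ I would prove by induction on $k$, exhibiting a clique tree (equivalently, extending the perfect elimination ordering), using that the gadget operation creates no long induced cycle --- for plain vertex substitution this is the standard closure of chordal graphs under substitution, and for a custom gadget it reduces to a local check where the new structure attaches to $G_k$.

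The \emph{upper bound $\maxcut(G_k)<\alphagw|E(G_k)|$} is the step I expect to be the main obstacle. The tension is intrinsic: a chordal graph is non-bipartite only through triangles, so being $\Omega(|E|)$-far from bipartite forces $\Omega(|E|)$ worth of triangular frustration, while we need that frustration to be packable into only $o(|E|)$ edge-disjoint triangles. I would prove the bound by a direct charging argument on an arbitrary bipartition $S$ of $G_k$: decompose $E(G_k)$ along the recursive hierarchy and show that inside each gadget copy every $2$-colouring leaves at least a fixed fraction of that copy's edges uncut (here is exactly where the seed being bad for Goemans--Williamson enters), then sum these local deficits. Proving that these per-gadget uncut fractions do not erode as the recursion deepens --- that the gadget operation does not leak frustration --- is the heart of the matter, and is presumably what pins down the correct gadget.

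Finally, the \emph{upper bound on the triangle packing}: by induction I would produce a triangle transversal $S_k\subseteq E(G_k)$, i.e.\ a set of edges meeting every triangle, with $|S_k|=o(|E(G_k)|)$; since deleting the edges of any maximum edge-disjoint triangle packing leaves a triangle-free graph, $\nu_\triangle(G_k)\le|S_k|=o(|E(G_k)|)$. Equivalently, one charges each triangle of a maximum packing to an edge of a distinguished sparse core of $G_k$. Combining this with the max-cut bound and choosing $k=k(c)$ finishes the proof.
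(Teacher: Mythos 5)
Your proposal never exhibits the promised family $\{G_k\}$, and it explicitly flags the max-cut bound as the unresolved obstacle --- deferring it to a gadget you acknowledge you have not designed. That gadget is the entire content of the theorem, so what you have is a plan, not a proof. The framing also rests on an intuition the actual construction must evade: from ``a chordal graph is non-bipartite only through triangles'' you infer that $\Omega(|E|)$ max-cut deficit forces $\Omega(|E|)$ worth of triangular frustration, hence tension with triangle-sparsity. But max-cut deficit is charged to edge-disjoint \emph{odd cycles}, not edge-disjoint \emph{triangles}; in a chordal graph every odd cycle has chords and so meets triangles, yet those triangles can all funnel through a small set of chord edges and therefore need not be packable disjointly. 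The trick is to inject frustration through a single vertex of low triangle multiplicity, not by blowing up a bad-for-GW seed (and the blow-up/substitution route has a further unaddressed issue: chordal graphs are not closed under vertex substitution, e.g.\ replacing the middle vertex of $P_3$ by two nonadjacent vertices yields $C_4$).

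Concretely, the paper starts from the opposite end: a nearly bipartite, triangle-sparse interval graph $G'=(V',E')$ (a ``segment-tree'' of intervals on $k$ levels) with $|V'|<c|E'|$ and no triangle packing of size $c|E'|$. It then attaches fresh pendants $x_e,y_e$ to the endpoints of each $e=uv\in E'$ and adds one universal vertex $w$. Chordality follows from repeated simplicial additions, and each $e=uv\in E'$ now lies on the $5$-cycle $w x_e,\, x_e u,\, uv,\, v y_e,\, y_e w$; these $|E'|$ cycles are edge-disjoint, so $\maxcut(G)\le|E|-|E'|\le\tfrac56|E|<\alphagw|E|$. Yet every triangle of $G$ not inside $E'$ must use one of the $|V'|$ edges from $w$ into $V'$, so there are at most $|V'|+c|E'|<c|E|$ edge-disjoint triangles. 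Note that ``attaching a pendant forest,'' which you dismissed because alone it drives $\maxcut/|E|\to 1$, is exactly half the gadget --- the universal vertex is the missing half --- and no recursion, substitution, or per-level charging is needed at all.
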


\begin{proof}
    Suppose we are given a chordal graph $G' = (V', E')$ with $|V'| < c  |E'|$ %
    and no triangle packing of size $c |E'|$.
    We will show later how to obtain such a $G'$.
    Construct $G = (V, E)$ by setting $V := V' \cup \{w\} \cup \{x_e, y_e \mid e \in E'\}$ and $E := E' \cup \{x_e u, y_e v \mid e = uv \in E'\} \cup \{w v \mid v \in V \setminus \{w\}\}$.
    That is, for each edge $e$, we attach one fresh vertex to each endpoint of the edge, and create a ``universal'' vertex $w$ connected to all vertices in the graph.

    We first note that $G$ is chordal, as it is obtained from a chordal graph $G'$ by iteratively adding simplicial vertices.
    By inspection, any triangle $T \not \subseteq E'$ must contain at least one edge from $\{w v \mid v \in V'\}$.
    Thus, the maximum number of edge-disjoint triangles in $G$ is at most $|V'| + c  |E'| < 2c  |E'| < c  |E|$. 
    For each edge $e = uv \in E'$, $G$ contains a 5-cycle $w x_e, x_e u, uv, vy_e, y_e w$, and so $G$ contains $|E'|$ edge-disjoint 5-cycles. Thus, we have that $\maxcut(G) \leq |E| - |E'|$. 
    Note that $|E| = 5 |E'| + |V'| \leq 6 |E'|$, so $\maxcut(G) \leq \frac{5}{6} |E| < \alphagw |E|$.

    It remains to show that a chordal graph $G'$ with the desired properties exists.
    We give the following interval graph construction for $G'$. Select $k$ large enough. 
    Create $2^k - 1$ vertices in a ``segment-tree'' pattern as follows.
    In the first layer, create one vertex with interval $(0, 1)$.
    In the second layer, create two vertices with intervals $(0, 0.5)$ and $(0.5, 1)$ respectively.
    In the third layer, create four vertices with intervals $(0, 0.25), (0.25, 0.5), (0.5, 0.75)$ and $(0.75, 1)$ respectively.
    Then iterate this process for $k$ total layers, see Figure~\ref{fig:layered-intervals} for an illustration.

    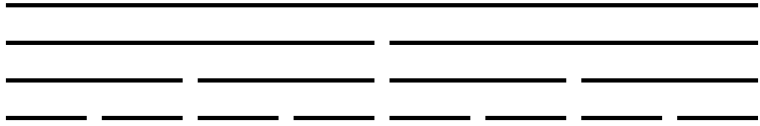
\begin{figure}[ht]
      \centering
      \begin{tikzpicture}[scale=10]
        \def\k{4}                %
        \def\spacing{0.05}       %
        \def\padding{0.00}       %
        \def\fixedgap{0.02}      %
    
        \foreach \i in {0,...,3} {
          \pgfmathtruncatemacro{\numintervals}{int(2^\i)}
          \pgfmathsetmacro{\y}{-\i * \spacing}
          
          \pgfmathsetmacro{\totalgap}{(\numintervals - 1) * \fixedgap}
          \pgfmathsetmacro{\layerwidth}{1 - 2 * \padding}
          \pgfmathsetmacro{\availablewidth}{\layerwidth - \totalgap}
          \pgfmathsetmacro{\intervalwidth}{\availablewidth / \numintervals}
    
          \foreach \j in {0,...,31} {
            \ifnum\j<\numintervals
              \pgfmathsetmacro{\xstart}{\padding + \j * (\intervalwidth + \fixedgap)}
              \pgfmathsetmacro{\xend}{\xstart + \intervalwidth}
              \draw[black, ultra thick] (\xstart,\y) -- (\xend,\y);
            \fi
          }
        }
      \end{tikzpicture}
      \caption{The interval representation of $G'$ for $k = 4$.}
      \label{fig:layered-intervals}
    \end{figure}

    We have that $|V'| = 2^k - 1$ and $|E'| \geq 2^{k-1} \cdot (k-1) > |V'| \cdot \frac{\log_2 |V'|}{2}$ by counting only the edges adjacent to the bottom layer. 
    Thus, for $k$ sufficiently large, we have that $|V'| < c |E'|$.

    To show that $G'$ has no edge-disjoint triangle packing of size $c  |E'|$, it is sufficient to show that $\maxcut(G') > (1 - \frac{c}{3}) \cdot |E'|$, as each triangle results in at least one un-cuttable edge.
    Consider the cut of $G'$ obtained by taking the bottom $t$ layers as one side of the cut.
    The number of edges in this cut is $(2^{k} - 2^{k-t}) \cdot (k - t)$.
    Note also that $|E'| \leq 2^k \cdot k$. 
    Therefore, we have that $\maxcut(G') \geq \frac{(2^k - 2^{k-t}) \cdot (k-t)}{2^k \cdot k} |E'|$.
    As $k$ tends to infinity, the term $\frac{(2^k - 2^{k-t}) \cdot (k-t)}{2^k \cdot k}$ tends to $1 - 2^{-t}$.
    So, by setting $t$ to be a sufficiently large constant based on $c$ and letting $k$ be sufficiently large based on $t$, we have that $\maxcut(G') > (1 - \frac{c}{3}) \cdot |E'|$ and so $G'$ has no edge-disjoint triangle packing of size $c  |E'|$.
    Thus, $G'$ fulfills all the conditions we needed to produce $G$.
\end{proof}

\section{Analysis of the Perturbed Goemans-Williamson Algorithm}

Our second main contribution is an improved approximation for \mcproblem in graphs with large triangle packings.
Given a fixed triangle $\{uv, vw, wu\}$, it is impossible for all angles $\theta_{uv}, \theta_{vw}, \theta_{wu}$ to be equal or very close to the critical angle $\critangle$.
However, it is possible for $\theta_{uv} = \theta_{vw} = \critangle$ to achieve the critical angle and have $\theta_{wu} = 0$. 
In this case, despite the entire graph being a single triangle, the Goemans-Williamson rounding algorithm will not perform beyond its worst-case guarantee. 
This is because the contribution of the edge $wu$ to the objective function is $0$, and so rounding it ``better'' does not actually increase our expected value.

To grapple with this issue, we introduce the ``Perturbed Goemans-Williamson Algorithm.''
Intuitively speaking, this algorithm randomly ``perturbs'' each vector slightly. 
We will see that edges with near-zero angle stand to gain much more in this perturbation process than any other edges have to lose besides those with an angle of nearly $\pi$.
Thus, any SDP solution with many near-zero angle edges and few near-$\pi$ angle edges can be rounded with a guarantee better than $\alphagw$.

Before presenting the algorithm, we must first define the semidefinite program from which we will round a solution.
\begin{alignat}{3}
\text{\bf maximize:} & \quad & \frac{1}{2} \sum_{uv \in E} (1 - x_u \cdot x_v) \tag{{\bf SDP-GW}} \label{SDP} \\
\text{\bf subject to:} && x_v \in \mathbb S^{n-1} \quad && ~\forall~ v \in V \notag
\end{alignat}

\begin{algorithm}[H]
\caption{$\textsc{PerturbGW}(G = (V, E), \eta)$}
\label{alg:perturbed}
\begin{algorithmic}
\State Solve \ref{SDP} and obtain optimal solution $\{x_v^*\}_{v \in V}$.
\State Sample a random $n$-dimensional vector $r \sim \mathcal N(0,1)^n$.
\For{all $v \in V$}
\State $s_v \gets \text{sign}(r \cdot x_v^*)$.
\If{$|r \cdot x_v^*| \geq \eta$}
\State $s_v' \gets s_v$.
\Else
\State Uniformly sample $s_v' \sim \{-1, 1\}$.
\EndIf
\EndFor
\State $S \gets \{v \mid s_v = 1\}, S' \gets \{v \mid s_v' = 1\}$.
\State Return \text{argmax}$_{\{\delta(S), \delta(S')\}}(|\delta(S)|, |\delta(S')|)$.
\end{algorithmic}
\end{algorithm}

Note that Algorithm~\ref{alg:perturbed} runs in polynomial time, as solving SDPs and sampling from a Gaussian distribution can be made to run in polynomial time.
\begin{restatable}{theorem}{apxratio} \label{thm:apxratio}
    If $G$ has an edge-disjoint triangle packing of size at least $t |E|$, then 
    \[
        \left|\textsc{PerturbGW}(G, \eta := \frac{t^2}{10^4})\right| \geq (\alphagw + 10^{-10} t^3) \cdot \maxcut(G).
    \]
\end{restatable}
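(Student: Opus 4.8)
The plan is to derive Theorem~\ref{thm:apxratio} from Lemma~\ref{lem:betterrounding}, supplying that lemma's hypothesis from the triangle packing and then accounting for the constants. Write $\mathrm{sdp}$ for the optimal value of~\ref{SDP}. Since any cut of $G$ gives a feasible solution of~\ref{SDP}, we have $\mathrm{sdp} \ge \maxcut(G)$, and trivially $|E| \ge \maxcut(G)$. It therefore suffices to show that, for $\eta = t^2/10^4$, the larger of the two cuts $\delta(S), \delta(S')$ produced by $\textsc{PerturbGW}(G,\eta)$ has expected size at least $\alphagw \cdot \mathrm{sdp} + 10^{-10} t^3 |E|$: adding the inequalities $\alphagw \cdot \mathrm{sdp} \ge \alphagw \maxcut(G)$ and $10^{-10} t^3 |E| \ge 10^{-10} t^3 \maxcut(G)$ then yields exactly $(\alphagw + 10^{-10} t^3)\maxcut(G)$. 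Finally, since $\textsc{PerturbGW}$ outputs whichever of $\delta(S), \delta(S')$ is larger, its expected output is at least $\max\{\E|\delta(S)|, \E|\delta(S')|\}$, so it is enough to bound this maximum of expectations; one repeats the algorithm, or derandomizes the two correlated roundings by conditional expectations, for a high-probability or deterministic guarantee.

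First I would extract many edges whose SDP-angle is bounded away from $\critangle$. Fix an optimal solution $\{x^*_v\}_{v \in V}$ of~\ref{SDP} and the $\ge t|E|$ edge-disjoint triangles of the packing. For any triangle $\{uv, vw, wu\}$, the triangle inequality for the angular distance on the unit sphere, applied to $x^*_u, x^*_v, -x^*_w$, gives $\theta_{uv} \le (\pi - \theta_{uw}) + (\pi - \theta_{vw})$, that is, $\theta_{uv} + \theta_{vw} + \theta_{wu} \le 2\pi$; hence some edge of the triangle has angle at most $2\pi/3 = 120^\circ$, which, since $\critangle \approx 134^\circ$, lies in the fixed interval $[0, 120^\circ]$ and is therefore bounded away from $\critangle$ by more than $13^\circ$. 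As the triangles are edge-disjoint, choosing one such edge from each produces $\ge t|E|$ distinct edges with this property, which is the hypothesis of Lemma~\ref{lem:betterrounding}.

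Lemma~\ref{lem:betterrounding} (assembled from Lemmas~\ref{lem:smallanglesbetter}, \ref{lem:manysmall} and~\ref{lem:manylarge}) then shows that the better of the two roundings beats $\alphagw$; the quantitative bound comes from unwinding its proof. Split the $\ge t|E|$ edges just produced according to whether their angle is bounded away from $0$ or is very close to $0$. The contribution of an edge $e$ of angle $\theta_e$ to $(\text{unperturbed cut}) - \alphagw \cdot \mathrm{sdp}$ equals $\tfrac{\theta_e}{\pi} - \tfrac{\alphagw}{2}(1 - \cos\theta_e) \ge 0$, which is a positive constant multiple of $1 - \cos\theta_e$ whenever $\theta_e$ is bounded away from $\critangle$; hence the edges with angle bounded away from \emph{both} $0$ and $\critangle$ hand the unperturbed rounding enough of a surplus. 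For the edges with angle near $0$ this contribution is negligible because $1 - \cos\theta_e \to 0$, and here one invokes Lemma~\ref{lem:smallanglesbetter} instead: the perturbation strictly increases the cut probability of every edge with angle below $\pi/2$, and $\eta$ is calibrated so that this gain, summed over the near-$0$ edges, outweighs both the target surplus and the perturbation's loss on edges with angle above $\pi/2$; the only edges losing more than a lower-order amount are those with angle near $\pi$, but an abundance of such edges is exactly the regime in which Lemma~\ref{lem:manylarge} already gives the unperturbed rounding a constant advantage, so that subcase is absorbed. Collecting the cases gives the claimed bound $\alphagw \cdot \mathrm{sdp} + 10^{-10} t^3 |E|$.

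The main obstacle is Lemma~\ref{lem:smallanglesbetter}: obtaining a clean signed estimate for how $\textsc{PerturbGW}$ shifts an edge's cut probability as a function of $\theta_e$ and $\eta$ — positive for $\theta_e < \pi/2$, negative for $\theta_e > \pi/2$, growing with the distance of $\theta_e$ from $\pi/2$, and with the delicate feature that the shift is $\Theta(\eta)$ only for angles close to the extremes $0$ and $\pi$ and is $O(\eta^2)$ for angles bounded away from both — since the two regime lemmas and the final case split all rest on it. Everything after that is bookkeeping, but finicky bookkeeping: one must tune $\eta$ and the various angle thresholds so that in every regime the surplus exceeds $10^{-10} t^3 |E|$ with a little room, which is where the choice $\eta = t^2/10^4$ comes from.
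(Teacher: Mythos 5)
Your proposal follows essentially the same route as the paper: extract $\geq t|E|$ triangle-packed edges with angle at most $2\pi/3$ (this is Lemma~\ref{lem:eprimebig}), then split into three cases according to the sizes of $E_\pi$ and $E_z$, invoking Lemma~\ref{lem:manylarge}, Lemma~\ref{lem:manysmall}, and Lemma~\ref{lem:betterrounding} respectively, with the perturbation-shift estimates of Lemmas~\ref{lem:smalldecrease} and~\ref{lem:smallanglesbetter} supplying the quantitative work underneath. The one imprecision is calling the $t|E|$ small-angle edges ``the hypothesis of Lemma~\ref{lem:betterrounding}''---that lemma's stated hypothesis is $t \geq 97\sqrt\eta$ and $|E_z| \leq 96\sqrt\eta\,|E|$, with the triangle-packing count used internally via Lemma~\ref{lem:eprimebig}---but your subsequent case split makes clear you have the correct structure in mind.
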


Theorem~\ref{thm:apxratio}, when combined with Theorem~\ref{thm:intervaltradeoff} and Theorem~\ref{thm:splittradeoff}, immediately shows that there is a polynomial-time $(\alphagw+10^{-34})$-approximation for \mcproblem on interval graphs and a polynomial-time $(\alphagw+10^{-16})$-approximation for \mcproblem on split graphs.

Note that $\delta(S)$ is the result of running the original Goemans-Williamson algorithm, so the result of Algorithm~\ref{alg:perturbed} is immediately at least $\expect[|\delta(S)|] \geq \alphagw \cdot \maxcut(G)$.

For an edge $e \in E$, let $C_e := \mathbb I[e \in \delta(S)]$ and $C_e' := \mathbb I[e \in \delta(S')]$ be the random variables indicating that $e$ is cut by $S$ and $S'$, respectively.
Define $\theta_e := \arccos (x_u^* \cdot x_v^*)$ as the angle between $u$ and $v$.
For $\theta \in [0, \pi]$, let
\[
    E_{\theta} := \{e \in E \mid |\theta_e - \theta| \leq \sqrt \eta\}
\]
be the set of edges with angle ``close to'' $\theta$, where $\eta$ is a parameter of Algorithm~\ref{alg:perturbed}.
We first deal with the case where there are many edges in $E_{\pi}$.
\begin{lemma} \label{lem:manylarge}
    If $\eta \leq 0.01$ and $|E_\pi| \geq \eta^{3/2} |E|$, then
    $\expect[|\delta(S)|] \geq (\alphagw + 10^{-2} \eta^{3/2}) \cdot \maxcut(G)$.
\end{lemma}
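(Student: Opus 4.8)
The plan is to exploit the fact that, edge by edge, the Goemans--Williamson algorithm cuts an edge $e$ with probability exactly $\theta_e/\pi$, and that this ratio is strictly better than $\alphagw$ away from the critical angle $\critangle$. Since every $e \in E_\pi$ has $\theta_e \geq \pi - \sqrt\eta$, each such edge is cut with probability at least $1 - \sqrt\eta/\pi$, which for small $\eta$ is very close to $1$ and hence far above $\alphagw$ (its SDP contribution is also close to $1$). So I would first write
\[
\expect[|\delta(S)|] = \sum_{e \in E} \frac{\theta_e}{\pi} \geq \sum_{e \notin E_\pi} \frac{\theta_e}{\pi} + \sum_{e \in E_\pi} \frac{\theta_e}{\pi},
\]
bound the first sum from below by $\alphagw \cdot \frac12 \sum_{e \notin E_\pi}(1 - \cos\theta_e)$ using the definition of $\alphagw$, and bound each term of the second sum by $\alphagw \cdot \frac12 (1 - \cos\theta_e) + \gamma$ for a concrete additive gain $\gamma = \gamma(\eta) > 0$ coming from the graph of $\frac{2}{\pi}\frac{\theta}{1-\cos\theta}$ being bounded away from $\alphagw$ on $[\pi - \sqrt\eta, \pi]$. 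Combining, $\expect[|\delta(S)|] \geq \alphagw \cdot \mathrm{OPT}_{\mathrm{SDP}} + \gamma |E_\pi| \geq \alphagw \cdot \maxcut(G) + \gamma \eta^{3/2}|E|$, and then I'd use $\maxcut(G) \leq |E|$ to turn the additive $|E|$ term into a multiplicative gain on $\maxcut(G)$.

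The one quantitative point that needs care is the lower bound on $\gamma$: near $\theta = \pi$ we have $1 - \cos\theta \approx 2$, so $\frac{2}{\pi}\frac{\theta}{1-\cos\theta} \approx \frac{\theta}{\pi}$, and at $\theta = \pi - \sqrt\eta$ this is $1 - \sqrt\eta/\pi$, while $\alphagw \approx 0.878$; so for $\eta \leq 0.01$ (hence $\sqrt\eta \leq 0.1$) the ratio is at least $1 - 0.1/\pi > 0.968 > \alphagw + 0.09$. Translating a multiplicative gap of that size on the ratio into an additive gap $\gamma$ per edge is where I'd have to be slightly careful, since the per-edge SDP weight $\frac12(1-\cos\theta_e)$ can be as large as $1$; but that is exactly what makes it work --- $\frac{\theta_e}{\pi} - \alphagw\cdot\frac12(1-\cos\theta_e) = \frac12(1-\cos\theta_e)\bigl(\frac{2}{\pi}\frac{\theta_e}{1-\cos\theta_e} - \alphagw\bigr) \geq \frac{1}{2}(1 - \cos(\pi - \sqrt\eta))\cdot 0.09$, and $1 - \cos(\pi - \sqrt\eta) \geq 1$ already for $\sqrt\eta$ small, so $\gamma \geq 0.04$ comfortably, far more than the claimed $10^{-2}$. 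The stated constant $10^{-2}\eta^{3/2}$ in the lemma is therefore very loose, which makes the argument robust to sloppiness in the constants.

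I do not anticipate a real obstacle here; the main thing to get right is the bookkeeping that separates $E_\pi$ from the rest and the monotonicity fact that $\frac{2}{\pi}\frac{\theta}{1-\cos\theta}$ is increasing for $\theta$ past $\critangle$ (so its minimum over $[\pi-\sqrt\eta,\pi]$ is attained at the left endpoint) --- this is clear from Figure~\ref{fig:ratio} but I would cite it or give the one-line derivative computation. The final step, replacing $\mathrm{OPT}_{\mathrm{SDP}} \geq \maxcut(G)$ and $|E| \geq \maxcut(G)$, is routine. Altogether the proof is short: expand $\expect[|\delta(S)|]$ as a sum of $\theta_e/\pi$, split off $E_\pi$, apply the definition of $\alphagw$ to the complement and the improved ratio bound on $E_\pi$, and collect terms.
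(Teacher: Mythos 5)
Your proposal is correct and follows essentially the same route as the paper: write $\expect[|\delta(S)|] = \sum_e \theta_e/\pi$, factor each term as $\tfrac{1-\cos\theta_e}{2}\cdot\tfrac{2\theta_e}{\pi(1-\cos\theta_e)}$, apply the definition of $\alphagw$ off $E_\pi$, and use that the ratio $\tfrac{2}{\pi}\tfrac{\theta}{1-\cos\theta}$ is bounded strictly above $\alphagw$ (by at least $10^{-2}$) on $[\pi-\sqrt\eta,\pi]$ when $\eta \leq 0.01$. The one cosmetic difference is in the last step: you convert the per-edge additive gain $\gamma|E_\pi|$ into a multiplicative bound via the crude $\maxcut(G)\le|E|$, whereas the paper instead observes that, since $1-\cos\theta$ is increasing on $[0,\pi]$, the edges in $E_\pi$ carry at least a proportional share $\tfrac{|E_\pi|}{|E|}SDP^*$ of the SDP value, yielding the multiplicative bound relative to $SDP^*$ directly. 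Both finishes are valid and give the stated constant; your version is a touch more elementary and, as you note, gives room to spare in $\gamma$.
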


\begin{proof}
    Let $SDP^*$ equal the value of \ref{SDP} at optimal solution $\{x_v^*\}_{v \in V}$.
    Consider any $e = uv \in E$.
    Recall that the contribution of $e$ to $SDP^*$ is $\frac{1 - x_u^* \cdot x_v^*}{2} = \frac{1 - \cos \theta_e}{2}$.
    Also, by simple calculation, we have that $\prob[e \in \delta(S)] = \frac{\theta_e}{\pi}$. 
    We calculate
    \begin{align*}
        \expect[|\delta(S)|] &= \sum_{e \in E} \frac{\theta_e}{\pi} \\
        &= \sum_{e \in E} \frac{1 - \cos \theta_e}{2} \cdot \frac{2 \theta_e}{\pi (1 - \cos \theta_e)} \\
        &= \sum_{e \in E \setminus E_{\pi}} \frac{1 - \cos \theta_e}{2} \cdot \frac{2 \theta_e}{\pi (1 - \cos \theta_e)} + \sum_{e \in E_\pi} \frac{1 - \cos \theta_e}{2} \cdot \frac{2 \theta_e}{\pi (1 - \cos \theta_e)} \\
        &\geq \alphagw \sum_{e \in E \setminus E_\pi} \frac{1 - \cos \theta_e}{2} + (\alphagw + 10^{-2}) \sum_{e \in E_\pi} \frac{1 - \cos \theta_e}{2} \\
        &= \alphagw \cdot SDP^* + 10^{-2} \sum_{e \in E_\pi} \frac{1 - \cos \theta_e}{2} \\
        &\geq \alphagw \cdot SDP^* + 10^{-2}  \cdot \frac{|E_\pi|}{|E|} \cdot SDP^*.
    \end{align*}
    The first inequality is by calculating $\frac{2}{\pi} \frac{\theta_e}{1  - \cos \theta_e} \geq \alphagw + 10^{-2}$ for $\theta_e > \pi - \sqrt{0.01}$.
   The second inequality follows from the fact that $1 - \cos \theta_e$ is increasing on $[0, \pi]$.
    The lemma now follows from the fact that $SDP^* \geq \maxcut(G)$.
\end{proof}

Now we show that for edges not in $E_{\pi}$, $S'$ is not much worse than $S$.
For technical reasons, our lemma statement also excludes edges in $E_0$.
We will see later that for all edges with angle at most $\frac{\pi}{2}$, $S'$ is no worse than $S$.
Additionally, for edges with angle very close to $0$, $S'$ is substantially better than $S$.

\begin{lemma} \label{lem:smalldecrease}
    For all $e \not \in E_0 \cup E_{\pi}$, we have that $\prob[C_e'] \geq \prob[C_e] - 10 \eta^{3/2}$.
\end{lemma}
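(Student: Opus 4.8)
The plan is to analyze an edge $e = uv$ by conditioning on the two Gaussian projections $a := r\cdot x_u^*$ and $b := r\cdot x_v^*$. Writing $x_v^* = \cos\theta_e\,x_u^* + \sin\theta_e\,w$ for a unit vector $w\perp x_u^*$ and setting $c := r\cdot w$, this reduces to two independent standard Gaussians $a,c$ with $b = \cos\theta_e\,a + \sin\theta_e\,c$; note that $C_e$ occurs exactly when $ab<0$. First I would isolate the ``marginal'' event $M_e := \{|a|<\eta\}\cup\{|b|<\eta\}$: off $M_e$ the perturbed rounding keeps both signs, so $C_e' = C_e$, whereas on $M_e$ at least one of $s_u',s_v'$ is an independent fair coin, so $\prob[C_e'\mid M_e] = \tfrac12$. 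This yields the identity
\[
\prob[C_e'] - \prob[C_e] \;=\; \tfrac12\,\prob[M_e] \;-\; \prob[M_e,\ ab<0].
\]

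Next I would expand both terms by inclusion--exclusion, using that $a$ and $b$ are each marginally $\mathcal N(0,1)$ and that swapping $u$ and $v$ leaves the joint law of $(a,b)$ unchanged. With $p := \prob[|a|<\eta]$, $p_{11} := \prob[|a|<\eta,\,|b|<\eta]$ and $q := \prob[|a|<\eta,\,ab<0]$, this gives
\[
\prob[C_e'] - \prob[C_e] \;=\; (p-2q) \;-\; \tfrac12 p_{11} \;+\; \prob[|a|<\eta,\,|b|<\eta,\,ab<0] \;\geq\; (p-2q) - \tfrac12 p_{11},
\]
so it suffices to show $p-2q \geq -2\eta^{3/2}$ and $p_{11} = O(\eta^{3/2})$. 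For the first, I would condition on $a$: for $a\in(0,\eta)$ the event $ab<0$ is $\{c < -a\cot\theta_e\}$, of probability $\Phi(-a\cot\theta_e)$ where $\Phi$ is the standard Gaussian cdf; since $e\notin E_0\cup E_\pi$ we have $\theta_e\in(\sqrt\eta,\,\pi-\sqrt\eta)$, hence $\sin\theta_e\geq\sin\sqrt\eta\geq\tfrac2\pi\sqrt\eta$ and $|a\cot\theta_e|\leq\tfrac\pi2\sqrt\eta$, so $\Phi(-a\cot\theta_e)\leq\tfrac12+\sqrt\eta$. Integrating over $a$ (and using the $r\mapsto -r$ symmetry to handle $a<0$) gives $q\leq(\tfrac12+\sqrt\eta)p$, hence $p-2q\geq -2\sqrt\eta\cdot p\geq -2\eta^{3/2}$ using $p\leq\eta$ from Lemma~\ref{lem:gaussian}. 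For the second, $|a|,|b|<\eta$ forces $|\sin\theta_e\,c|\leq|b|+|a|<2\eta$, so $|c|<2\eta/\sin\theta_e\leq\pi\sqrt\eta$; independence of $a$ and $c$ together with Lemma~\ref{lem:gaussian} gives $p_{11}\leq\eta\cdot\pi\sqrt\eta$. Combining, $\prob[C_e'] - \prob[C_e]\geq -2\eta^{3/2}-\tfrac\pi2\eta^{3/2}\geq -10\eta^{3/2}$, provided $\eta$ is small enough for Lemma~\ref{lem:gaussian} to apply (e.g.\ $\eta\leq 0.01$).

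The main obstacle is the $p_{11}$ term: the naive bound $p_{11}\leq p\leq\eta$ is only $O(\eta)$, which is too weak. One must exploit that being marginal at \emph{both} endpoints at once is far rarer than at one endpoint, because it pins the independent coordinate $c$ into a window of width $O(\sqrt\eta)$ --- and this is precisely where the hypothesis $e\notin E_0\cup E_\pi$ (equivalently, $\sin\theta_e$ bounded away from $0$) is used. The same lower bound on $\sin\theta_e$ keeps $\prob[ab<0\mid|a|<\eta]$ within $O(\sqrt\eta)$ of $\tfrac12$; without it this conditional probability could be near $1$ (for $\theta_e$ close to $\pi$) or near $0$ (for $\theta_e$ close to $0$), which is exactly why those edges must be excluded. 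Everything else is routine bookkeeping with inclusion--exclusion and the elementary tail estimate of Lemma~\ref{lem:gaussian}.
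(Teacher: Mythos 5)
Your proof is correct and follows essentially the same strategy as the paper: both arguments rotate to the two-dimensional span of $x_u^*, x_v^*$ to get independent Gaussians, observe that away from the margin event nothing changes while on it $C_e'$ becomes a fair coin, bound the one-sided conditional cut probability near $\tfrac12$ using $\sin\theta_e \geq \frac{2}{\pi}\sqrt{\eta}$ (which is where $e\notin E_0\cup E_\pi$ enters), and dispose of the doubly-marginal event $\{|a|<\eta\}\cap\{|b|<\eta\}$ as a lower-order $O(\eta^{3/2})$ term via Lemma~\ref{lem:gaussian}. Your exact identity $\prob[C_e']-\prob[C_e]=\tfrac12\prob[M_e]-\prob[M_e,\,ab<0]$ plus inclusion--exclusion is somewhat cleaner bookkeeping than the paper's bound $\prob[C_e\mid R_u\vee R_v]\leq \max\{\prob[C_e\mid R_u],\prob[C_e\mid R_v]\}+\prob[R_u\wedge R_v]/\prob[R_u\vee R_v]$, but the mechanism and the role of every hypothesis are the same.
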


\begin{proof}
    For any vertex $v \in V$, let $R_v := \mathbb I[|r \cdot x_v^*| < \eta]$ be the random variable indicating that $s_v'$ is randomly selected. %
    Fix any $e = uv \not \in E_0 \cup E_{\pi}$.
    We first note that $\prob[C_e' \mid \neg R_u \wedge \neg R_v] = \prob[C_e \mid \neg R_u \wedge \neg R_v]$ and $\prob[C_e' \mid R_u \vee R_v] = \frac{1}{2}$.

    We now turn our attention to $\prob[C_e \mid R_v]$.
    Note that $x_u^*$ and $x_v^*$ lie on a single plane through the origin, so by symmetry of $\mathbb S^{n-1}$, we can assume without loss of generality that $x_v^* = (1, 0, \ldots, 0)$ and $x_u^* = (x, y, 0, \ldots, 0)$ for $y \geq 0$.
    Label the random variable $r = (r_1, r_2, \ldots, r_n)$.
    By symmetry of output, we can assume without loss of generality that $r_1 \geq 0$.
    Note that these assumptions imply that $R_v$ is independent of the value of $r_2$.

    Suppose that $|r_2| > \frac{\eta}{y}$.
    Then we have that $|r_2 \cdot y| > \eta > |r_1 \cdot x|$, and so $\text{sign}(r \cdot x_u^*) = \text{sign}(r_2)$.
    That is, the sign of $r \cdot x_u^*$ is entirely determined by the sign of $r_2$.
    This implies that, when $|r_2| > \frac{\eta}{y}$,  $C_e = \mathbb I[\text{sign}(r_2) < 0]$.
    Thus, by independence of $R_v$ and the value of $r_2$,
    \[
        \prob[C_e \mid R_v \wedge |r_2| > \frac{\eta}{y}] = \prob[\text{sign}(r_2) < 0 \mid R_v \wedge |r_2| > \frac{\eta}{y}] = \prob[\text{sign}(r_2) < 0] = \frac{1}{2}.
    \]
    We then apply Lemma~\ref{lem:gaussian} to bound
    \[
        \prob[C_e \mid R_v] \leq \frac{1}{2} + \prob[|r_2| \leq \frac{\eta}{y} \mid R_v] \leq \frac{1}{2} + \frac{\eta}{y}.
    \]

    Let $M := \max\{\prob[C_e \mid R_u], \prob[C_e \mid R_v]\} \leq \frac{1}{2} + \frac{\eta}{y}$. We wish to bound $\prob[C_e \mid R_u \vee R_v] \leq M + \frac{4\eta}{y}$. To this end, we manipulate probabilities
    \begin{align*}
        \prob[C_e \wedge (R_u \vee R_v)] &\leq \prob[C_e \wedge R_u] + \prob[C_e \wedge R_v] \\
        &= \prob[C_e \mid R_u] \prob[R_u] + \prob[C_e \mid R_v] \prob[R_v] \\
        &\leq M (\prob[R_u] + \prob[R_v]) \\
        &= M (\prob[R_u \vee R_v] + \prob[R_u \wedge R_v]).
    \end{align*}
    Now we find
    \[
        \prob[C_e \mid R_u \vee R_v] = \frac{\prob[C_e \wedge (R_u \vee R_v)]}{\prob[R_u \vee R_v]} 
        \leq M + \frac{\prob[R_u \wedge R_v]}{\prob[R_u \vee R_v]}.
    \]

    To bound the numerator of the error term, first consider $\prob[R_u \mid R_v]$. 
    Recall that $R_u = \mathbb I[|r_1 x + r_2y| < \eta]$ and $R_v = \mathbb I[|r_1| < \eta]$. 
    Thus, assuming $R_v$ holds, in order for $R_u$ to hold, we must have that $|r_2 y| < 2 \eta$.
    Using Lemma~\ref{lem:gaussian}, we can bound
    \[
        \prob[R_u \mid R_v] \leq \prob[|r_2y | \leq 2 \eta \mid R_v] = \prob[|r_2| \leq \frac{2\eta}{y}] \leq \frac{2\eta}{y}.
    \]
    and using Lemma~\ref{lem:gaussian} again,
    \[
        \prob[R_u \wedge R_v] = \prob[R_u \mid R_v] \prob[R_v] \leq \frac{2\eta^2}{y}.
    \]
    Applying this with yet another application of Lemma~\ref{lem:gaussian} gives
    \[
        \frac{\prob[R_u \wedge R_v]}{\prob[R_u \vee R_v]} \leq \frac{4\eta}{y}.
    \]

    This yields the desired result, that $\prob[C_e \mid R_u \vee R_v] \leq \frac{1}{2} + \frac{5\eta}{y}$.
    
    Putting it all together, we find that
    \begin{align*}
        \prob[C_e] &= \prob[C_e \mid \neg R_u \wedge \neg R_v] \cdot \prob[\neg R_u \wedge \neg R_v] + \prob[C_e \mid  R_u \vee R_v] \cdot \prob[R_u \vee R_v]\\
        &\leq \prob[C_e \mid \neg R_u \wedge \neg R_v] \cdot \prob[\neg R_u \wedge \neg R_v] + (\frac{1}{2} + \frac{5\eta}{y}) \cdot \prob[R_u \vee R_v].
    \end{align*}
    and using Lemma~\ref{lem:gaussian},
    \begin{align*}
        \prob[C_e'] &= \prob[C_e' \mid \neg R_u \wedge \neg R_v] \cdot \prob[\neg R_u \wedge \neg R_v] + \prob[C_e' \mid R_u \vee R_v] \cdot \prob[R_u \vee R_v] \\
        &= \prob[C_e \mid \neg R_u \wedge \neg R_v] \cdot \prob[\neg R_u \wedge \neg R_v] + \frac{1}{2} \cdot \prob[R_u \vee R_v] \\
        &\geq \prob[C_e] - \frac{5\eta}{y} \cdot \prob[R_u \vee R_v] \\
        &\geq \prob[C_e] - \frac{10\eta^2}{y}.
    \end{align*}
    Noting that $y = \sin \theta_e \geq \sqrt \eta$ because $e \not \in E_0 \cup E_\pi$ completes the proof of the lemma.
\end{proof}

Lemma~\ref{lem:smalldecrease} allows us to bound the perturbation loss on all angles sufficiently bounded away from $0$ and $\pi$.
However, we will not be able to guarantee that the angles we consider are sufficiently bounded away from $0$ to properly utilize Lemma~\ref{lem:smalldecrease}.
Thus, we strengthen Lemma~\ref{lem:smalldecrease} to show that perturbation does not cause \emph{any} loss on angles below $\frac{\pi}{2}$.

\begin{lemma} \label{lem:smallanglesbetter}
    For all $e \in E$ with $\theta_e \leq \frac{\pi}{2}$, we have that $\prob[C_e'] \geq \prob[C_e]$.
\end{lemma}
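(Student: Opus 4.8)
The plan is to reduce the statement to the single clean inequality $\prob[C_e\mid R_u\vee R_v]\le \tfrac12$, where $R_v:=\mathbb I[|r\cdot x_v^*|<\eta]$ and $R_u:=\mathbb I[|r\cdot x_u^*|<\eta]$ are the indicators that the respective endpoint gets re-randomized, as in the proof of Lemma~\ref{lem:smalldecrease}. On $\neg R_u\wedge\neg R_v$ we have $s_u'=s_u$ and $s_v'=s_v$, so $C_e'=C_e$ there; on $R_u\vee R_v$ at least one of $s_u',s_v'$ is an independent fair $\pm1$ coin, so $\prob[C_e'\mid R_u\vee R_v]=\tfrac12$. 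Splitting both $\prob[C_e']$ and $\prob[C_e]$ along the partition $\{\neg R_u\wedge\neg R_v\}$ versus $\{R_u\vee R_v\}$ and subtracting yields
\[
    \prob[C_e']-\prob[C_e] = \left(\tfrac12-\prob[C_e\mid R_u\vee R_v]\right)\cdot\prob[R_u\vee R_v],
\]
so everything comes down to the displayed inequality. I want to stress that it would \emph{not} suffice to bound $\prob[C_e\mid R_u]$ and $\prob[C_e\mid R_v]$ individually by $\tfrac12$; the bound has to be proved directly on the union event.

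To handle $\prob[C_e\mid R_u\vee R_v]$, I would pass to the pair $(p,q):=(r\cdot x_v^*,\ r\cdot x_u^*)$, which is a centered bivariate Gaussian with unit variances and correlation $\rho:=x_u^*\cdot x_v^*=\cos\theta_e$; crucially $\rho\ge 0$ since $\theta_e\le\tfrac\pi2$. In these coordinates $C_e=\{\operatorname{sign}(p)\ne\operatorname{sign}(q)\}=\{pq<0\}$ up to a null event, and $W:=R_u\vee R_v=\{|p|<\eta\}\cup\{|q|<\eta\}$ is the ``cross'' around the two axes. If $\theta_e=0$ then $p\equiv q$, $C_e$ never occurs, and the lemma is immediate; so I may assume $0<\theta_e\le\tfrac\pi2$, hence $|\rho|<1$ and $(p,q)$ is nondegenerate.

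The heart of the argument is a reflection/density-domination step. Let $\phi(p,q):=(-p,q)$. Then $\phi$ is volume-preserving, is its own inverse, maps $W$ onto $W$, and carries $\{pq<0\}\cap W$ bijectively onto $\{pq>0\}\cap W$. Writing $f$ for the density of $(p,q)$, a one-line computation gives $f(-p,q)/f(p,q)=\exp\!\big(-\tfrac{2\rho pq}{1-\rho^2}\big)$, which is $\ge 1$ exactly when $pq<0$ (using $\rho\ge 0$). Hence
\[
    \prob[C_e\wedge W]=\int_{\{pq<0\}\cap W}\! f \ \le\ \int_{\{pq<0\}\cap W}\! f\circ\phi \ =\ \int_{\{pq>0\}\cap W}\! f\ =\ \prob[\{pq>0\}\wedge W],
\]
and since $\prob[pq=0]=0$ these two quantities sum to $\prob[W]$, giving $\prob[C_e\wedge W]\le\tfrac12\prob[W]$, i.e. $\prob[C_e\mid R_u\vee R_v]\le\tfrac12$ as required. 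The intuition is simply that positive correlation concentrates Gaussian mass near the diagonal $p=q$, so inside the cross the ``uncut'' region $pq>0$ weighs at least as much as the ``cut'' region $pq<0$.

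I expect the main difficulty to be conceptual rather than computational: first recognizing that the lemma is equivalent to the crisp statement $\prob[C_e\mid R_u\vee R_v]\le\tfrac12$ and that per-endpoint bounds are insufficient, and then spotting the right symmetry — the reflection $\phi(p,q)=(-p,q)$ paired with the sign of $\rho$ — which dispatches it cleanly instead of through a tedious case analysis over the five pieces (the center square $\{|p|<\eta,|q|<\eta\}$ and the four arms) of the cross region.
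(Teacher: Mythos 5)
Your proof is correct, and it takes a genuinely different and arguably cleaner route than the paper. Both arguments rest on the same reduction to $\prob[C_e \mid R_u \vee R_v] \leq \tfrac12$. The paper fixes a 2-dimensional coordinate system with $x_v^* = (1,0)$ and $x_u^* = (\cos\theta_e, \sin\theta_e)$, assumes WLOG $r_1 \geq 0$, observes that $\cos\theta_e \geq 0$ forces a cut to have $r_2 < 0$, and concludes $\prob[C_e \mid R_v] \leq \tfrac12$; it then separately argues (via a monotonicity claim about conditioning on $\neg R_u$) that $\prob[C_e \mid R_v \wedge \neg R_u] \leq \tfrac12$ and, by swapping $u$ and $v$, $\prob[C_e \mid R_u \wedge \neg R_v] \leq \tfrac12$, from which the desired bound follows via the disjoint decomposition $R_u \vee R_v = R_v \sqcup (R_u \wedge \neg R_v)$. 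You instead view $(p,q) = (r\cdot x_v^*,\, r\cdot x_u^*)$ as a centered bivariate Gaussian with correlation $\rho = \cos\theta_e \geq 0$ and apply the measure-preserving involution $\phi(p,q) = (-p,q)$, which maps the cross $W = \{|p|<\eta\}\cup\{|q|<\eta\}$ to itself, swaps the cut region $\{pq<0\}$ with the uncut region $\{pq>0\}$, and satisfies $f\circ\phi \geq f$ exactly on $\{pq<0\}$ because $f(-p,q)/f(p,q) = \exp(-2\rho pq/(1-\rho^2))$. This handles the union event in one stroke, makes the role of $\theta_e \leq \tfrac\pi2$ (i.e.\ $\rho\ge 0$) completely transparent, and avoids the delicate conditional-probability estimates the paper needs for the $\neg R_u$ side. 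One small correction: your parenthetical that the bound ``has to be proved directly on the union event'' overstates the case — the paper's disjoint decomposition also achieves it without touching the union head-on — but your underlying caution (that $\prob[C_e\mid R_u]\le\tfrac12$ and $\prob[C_e\mid R_v]\le\tfrac12$ alone do not yield $\prob[C_e\mid R_u\vee R_v]\le\tfrac12$ because of the overlap $R_u\wedge R_v$) is a real pitfall and worth flagging.
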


\begin{proof}
    Take any $e = uv \in E$ such that $\theta_e \leq \frac{\pi}{2}$.
    As in the proof of Lemma~\ref{lem:smalldecrease}, restrict to two dimensions, rotate, and reflect so we may assume $x_v^* = (1, 0, \ldots, 0), x_u^* = (x, y, 0, \ldots, 0)$ for $y \geq 0$, and $r_1 \geq 0$.
    Due to our assumption on $\theta_e$, we have that $x = \cos \theta_e \geq 0$.
    As earlier, define the event $R_w := \mathbb I[|r \cdot x_w^*| < \eta]$ for $w \in V$.
    As in the proof of Lemma~\ref{lem:smalldecrease}, our main task is to bound $\prob[C_e \mid R_v]$.
    However this time, we will show $\prob[C_e \mid R_v] \leq \frac{1}{2}$.

    The event $R_v$ is equivalent to $\mathbb I[|r_1| < \eta]$.
    Since $x, y \geq 0$, if $C_e$ happens, then we must have $r_2 \leq 0$, so
    \[
        \prob[C_e | R_v] = \prob[r_1 x + r_2y < 0 \mid r_1 \in [0, \eta)] \leq \prob[r_2 \leq 0] = \frac{1}{2}.
    \]

    We have that
    \[
        R_u = \mathbb I[|r_1 x + r_2 y| < \eta] = \prob[r_2 y \in (-\eta - r_1 x, \eta - r_1 x)].
    \]
    Recall that $x \geq 0$ and $r_1 \geq 0$, so $|-\eta - r_1x| > |\eta - r_1 x|$.
    Thus, the event $R_u$ contains a larger portion of the negative space than the positive space and
    \[
        \prob[r_2 y < 0 \mid \neg R_u] \leq \frac{1}{2}.
    \]
    Recalling that $y > 0$, and $r_1$ and $r_2$ are independent, we can now calculate
    \begin{align*}
        \prob[C_e \mid R_v \wedge \neg R_u] &= \prob[r_1x + r_2 y < 0 \mid r_1 \in [0, \eta) \wedge \neg R_u] \\
        &\leq \prob[r_2 < 0 \mid r_1 \in [0, \eta) \wedge \neg R_u] \\
        &\leq \prob[r_2 < 0 \mid \neg R_u] \\
        &\leq \frac{1}{2}.
    \end{align*}
    and by symmetry, $\prob[C_e \mid R_v \wedge \neg R_u] \leq \frac{1}{2}$.
    A similar computation as that in the proof of Lemma~\ref{lem:smalldecrease} completes the proof.
\end{proof}

\begin{lemma} \label{lem:minvalue}
    For all $e \in E$, we have that $\prob[C_e'] \geq \frac{\eta}{4}$.
\end{lemma}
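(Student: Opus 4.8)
The plan is to throw away everything except the event that at least one endpoint of $e$ gets its sign freshly re-randomized, and to observe that conditioned on this event the edge is cut with probability exactly $\tfrac12$. Concretely, fix $e = uv$ and recall the events $R_w := \mathbb I[|r \cdot x_w^*| < \eta]$ from the proof of Lemma~\ref{lem:smalldecrease}, which exactly record whether $s_w'$ is sampled as a fresh fair coin. Conditioning on whether $R_u \vee R_v$ holds,
\[
    \prob[C_e'] = \prob[C_e' \mid \neg R_u \wedge \neg R_v]\,\prob[\neg R_u \wedge \neg R_v] + \prob[C_e' \mid R_u \vee R_v]\,\prob[R_u \vee R_v] \geq \prob[C_e' \mid R_u \vee R_v]\,\prob[R_u \vee R_v],
\]
since the first term is nonnegative. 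It remains to lower bound the two factors of the surviving term.

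For the conditional probability, the point is that whenever $R_u \vee R_v$ holds, at least one of $s_u', s_v'$ is an independent uniform $\pm 1$ sample, drawn independently of $r$ and of the other sign. The edge $e$ is cut by $S'$ exactly when $s_u' \neq s_v'$, and flipping one of the two arguments of this inequality test by an independent fair coin makes the outcome a fair coin regardless of the other argument. Hence $\prob[C_e' \mid R_u \vee R_v] = \tfrac12$. For the probability of the event itself, note $\prob[R_u \vee R_v] \geq \prob[R_v] = \prob[|r \cdot x_v^*| < \eta]$. Since $x_v^* \in \mathbb S^{n-1}$ is a unit vector and $r \sim \mathcal N(0,1)^n$, the inner product $r \cdot x_v^*$ is distributed as $\mathcal N(0,1)$, so Lemma~\ref{lem:gaussian} (applicable because $\eta \leq 1$ in our setting) gives $\prob[|r \cdot x_v^*| < \eta] \geq \tfrac{\eta}{2}$.

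Combining, $\prob[C_e'] \geq \tfrac12 \cdot \tfrac{\eta}{2} = \tfrac{\eta}{4}$, as claimed. I do not expect any genuine obstacle here: the only subtlety worth spelling out carefully is the independence of the re-randomized signs from $r$ and from each other, which is what makes the conditional cut probability exactly $\tfrac12$; everything else is a one-line application of Lemma~\ref{lem:gaussian}.
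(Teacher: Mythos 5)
Your proposal is correct and takes essentially the same approach as the paper: lower bound $\prob[C_e']$ by conditioning on the event that a re-randomization occurs, observe that the conditional cut probability is exactly $\tfrac12$, and bound the probability of re-randomization from below via Lemma~\ref{lem:gaussian}. The only (cosmetic) difference is that you condition on $R_u \vee R_v$, while the paper simply conditions on $R_v$ alone, writing $\prob[C_e'] \geq \prob[C_e' \mid R_v] \cdot \prob[R_v] \geq \tfrac12 \cdot \tfrac{\eta}{2}$; this sidesteps the (true but mildly subtle) claim that $\prob[C_e' \mid R_u \vee R_v] = \tfrac12$, which strictly speaking requires partitioning the union into $R_v$ and $R_u \wedge \neg R_v$ and averaging, rather than the informal "at least one sign is re-randomized" argument you give.
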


\begin{proof}
    Take any $e = uv \in E$ and calculate, using Lemma~\ref{lem:gaussian},
    \[
        \prob[C_e'] \geq \prob[C_e' \mid R_v] \cdot \prob[R_v] \geq \frac{1}{2} \cdot \frac{\eta}{2}.
    \]
\end{proof}

Now notice that if $\theta_e < \frac{\pi}{4} \eta$, then $\prob[C_e'] \geq \frac{\eta}{4} > \frac{\theta_e}{\pi} = \prob[C_e]$.
Thus, if a substantial fraction of $E$ has very small angle, then $\delta(S')$ will be noticeably larger than $\delta(S)$.
To this end, define
\[
    E_z := \{ e \in E : \theta_e \leq \frac{\pi}{8} \eta\}.
\]
Note that for each $e \in E_z$, we have $\prob[C_e] \leq \frac{\pi}{8} \eta \cdot \frac{1}{\pi} = \frac{\eta}{8} \leq 2 \prob[C_e']$.

\begin{lemma} \label{lem:manysmall}
    If $|E_z| \geq 96 \sqrt \eta |E|$ and $|E_\pi| < \eta^{3/2} |E|$, then 
    \[
        \expect[|\delta(S')|] \geq \mathbb \expect[|\delta(S)|] + \eta^{3/2} |E| \geq (\alphagw + \eta^{3/2}) \cdot \maxcut(G).
    \]
\end{lemma}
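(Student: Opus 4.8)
The plan is to control the per-edge change in cut probability, $\prob[C_e'] - \prob[C_e]$, by splitting $E$ into angle ranges and invoking for each range whichever of Lemmas~\ref{lem:smalldecrease},~\ref{lem:smallanglesbetter}, and~\ref{lem:minvalue} applies, and then summing. Concretely, I would partition $E$ into the four sets $E_z$, the ``small-angle remainder'' $\{e\in E : \theta_e \le \pi/2\}\setminus E_z$, $E_\pi$, and the ``large-angle remainder'' $\{e\in E : \theta_e > \pi/2\}\setminus E_\pi$; since $\pi-\sqrt\eta > \pi/2$ for $\eta\le 0.01$, these four sets genuinely partition $E$.

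On $E_z$, the displayed observation preceding the lemma gives $\prob[C_e]\le \eta/8$, while Lemma~\ref{lem:minvalue} gives $\prob[C_e']\ge \eta/4$, so each such edge contributes at least $\eta/8$ to the gain $\expect[|\delta(S')|]-\expect[|\delta(S)|] = \sum_{e\in E}(\prob[C_e']-\prob[C_e])$. On the small-angle remainder, Lemma~\ref{lem:smallanglesbetter} applies (its hypothesis $\theta_e\le \pi/2$ holds), so each such edge contributes at least $0$. On the large-angle remainder, every edge has $\theta_e > \pi/2 > \sqrt\eta$ (using $\eta\le 0.01$), hence lies outside $E_0\cup E_\pi$, so Lemma~\ref{lem:smalldecrease} applies and each such edge contributes at least $-10\eta^{3/2}$. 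On $E_\pi$ I would simply use the trivial bound $\prob[C_e']-\prob[C_e]\ge -1$; this is the only place the hypothesis $|E_\pi| < \eta^{3/2}|E|$ is used.

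Summing these four bounds, using $|\{e\in E : \theta_e > \pi/2\}\setminus E_\pi| \le |E|$ and the hypothesis $|E_z|\ge 96\sqrt\eta\,|E|$, gives
\begin{align*}
\expect[|\delta(S')|] - \expect[|\delta(S)|] &= \sum_{e\in E}\bigl(\prob[C_e'] - \prob[C_e]\bigr) \\
&\ge \frac{\eta}{8}\,|E_z| - 10\eta^{3/2}|E| - |E_\pi| \\
&\ge 12\eta^{3/2}|E| - 10\eta^{3/2}|E| - \eta^{3/2}|E| = \eta^{3/2}|E|.
\end{align*}
Finally, $\delta(S)$ is exactly the Goemans--Williamson output, so $\expect[|\delta(S)|]\ge \alphagw\cdot\maxcut(G)$ as already noted, and any cut uses at most $|E|$ edges, so $|E|\ge\maxcut(G)$; combining the two inequalities yields $\expect[|\delta(S')|]\ge (\alphagw + \eta^{3/2})\maxcut(G)$, as claimed.

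There is no deep obstacle here: the whole argument just routes each edge to one of the three already-proved per-edge lemmas. The only points requiring care are that the four angle ranges cover $E$ and that the large-angle remainder genuinely avoids $E_0$ (this is why $\eta\le 0.01$ is invoked, ensuring $\sqrt\eta < \pi/2$), and that the constant $96$ in the hypothesis is reverse-engineered so that the surplus $12\eta^{3/2}|E|$ from $E_z$ beats the total possible loss of at most $11\eta^{3/2}|E|$ from $E_\pi$ and the large-angle remainder.
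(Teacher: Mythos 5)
Your proof is correct and follows essentially the same route as the paper's: partition the edges by angle, apply Lemma~\ref{lem:minvalue} on $E_z$, Lemma~\ref{lem:smalldecrease} or Lemma~\ref{lem:smallanglesbetter} on the middle range, the trivial bound on $E_\pi$, and sum. If anything your version is slightly more careful than the paper's terse chain of inequalities, since you explicitly route the edges with $\theta_e\in(\tfrac{\pi}{8}\eta,\sqrt\eta]$ (which lie in $E_0\setminus E_z$ and hence are \emph{not} covered by Lemma~\ref{lem:smalldecrease}) through Lemma~\ref{lem:smallanglesbetter} instead; the paper's proof relies on exactly this point without spelling it out.
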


\begin{proof}
    Calculate, using Lemmas~\ref{lem:smalldecrease}, \ref{lem:smallanglesbetter}, and \ref{lem:minvalue}
    \begin{align*}
        \expect[|\delta(S')|] &= \sum_{e \in E_z} \prob[C_e'] + \sum_{e \in E_\pi} \prob[C_e'] + \sum_{e \in E \setminus (E_\pi \cup E_z)} \prob [C_e'] \\
        &\geq \sum_{e \in E_z} \frac{\eta}{4} + \sum_{e \in E \setminus (E_\pi \cup E_z)} \prob[C_e] - 10\eta^{3/2} \\
        &\geq \sum_{e \in E_z} (\prob[C_e] +\frac{\eta}{8}) + \sum_{e \in E \setminus (E_\pi \cup E_z)} \prob[C_e] - 10\eta^{3/2} \\
        &\geq \expect[|\delta(S)|]  + \frac{\eta}{8} |E_z| - 10\eta^{3/2} |E| - |E_\pi|\\
        &\geq \expect[|\delta(S)|] + \eta^{3/2} |E|.
    \end{align*}
\end{proof}

We have shown that if either $E_z$ or $E_{\pi}$ contain a constant fraction of $E$, then we can improve upon Goemans-Williamson, so we may assume from now on that both of these sets have negligible size.
Now we will utilize the fact that $G$ has an edge-disjoint triangle packing of size $t \cdot |E|$ to show that there are many edges not near the critical angle $\critangle$.
Define the set
\[
    E' = \{e \in E \mid \theta_e \leq \frac{2\pi}{3}\} \setminus E_z.
\]

\begin{lemma} \label{lem:eprimebig}
    $|E'| \geq t \cdot |E| - |E_z|$.
\end{lemma}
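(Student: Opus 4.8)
The plan is to use the triangle packing directly. Recall that in any triangle $\{uv, vw, wu\}$ of the SDP solution, the three angles satisfy $\theta_{uv} + \theta_{vw} + \theta_{wu} \leq 2\pi$, so at least one of the three angles is at most $\frac{2\pi}{3}$. Fix an edge-disjoint triangle packing $T_1, \ldots, T_k$ with $k \geq t|E|$. For each triangle $T_i$, pick an edge $e_i \in T_i$ with $\theta_{e_i} \leq \frac{2\pi}{3}$. Since the triangles are edge-disjoint, the edges $e_1, \ldots, e_k$ are distinct, so the set $\{e \in E \mid \theta_e \leq \frac{2\pi}{3}\}$ has size at least $k \geq t|E|$.

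Now $E'$ is obtained from this set by removing the edges in $E_z$, so $|E'| \geq t|E| - |E_z|$, which is exactly the claimed bound. The argument is essentially a one-line consequence of the pigeonhole-type bound on triangle angles combined with edge-disjointness of the packing.

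I would write it as follows.

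\begin{proof}
    Let $T_1, T_2, \ldots, T_k$ be an edge-disjoint triangle packing of $G$ with $k \geq t |E|$. Fix any $i \in [k]$ and write $T_i = \{u_iv_i, v_iw_i, w_iu_i\}$. As noted earlier, the angles of the SDP solution satisfy
    \[
        \theta_{u_iv_i} + \theta_{v_iw_i} + \theta_{w_iu_i} \leq 2\pi,
    \]
    so at least one edge $e_i \in T_i$ has $\theta_{e_i} \leq \frac{2\pi}{3}$. Since $T_1, \ldots, T_k$ are edge-disjoint, the edges $e_1, \ldots, e_k$ are pairwise distinct. Hence
    \[
        \left| \{ e \in E \mid \theta_e \leq \tfrac{2\pi}{3} \} \right| \geq k \geq t |E|.
    \]
    Since $E' = \{e \in E \mid \theta_e \leq \frac{2\pi}{3}\} \setminus E_z$, removing the at most $|E_z|$ edges of $E_z$ gives $|E'| \geq t|E| - |E_z|$.
\end{proof}

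The only step requiring any care is the claim $\theta_{uv} + \theta_{vw} + \theta_{wu} \leq 2\pi$, which the excerpt already invokes; since it is stated as a ``straightforward analysis'' in Section~\ref{subsection:techniques} and I am permitted to assume earlier results, I do not expect any obstacle here. If a self-contained justification were wanted, it follows from the fact that $\theta_{uv}$, $\theta_{vw}$, $\theta_{wu}$ are the pairwise geodesic distances between three points on a sphere, and geodesic distance satisfies the triangle inequality, so $\theta_{wu} \leq \theta_{uv} + \theta_{vw}$ (and cyclically); adding the observation that the total perimeter of a spherical triangle is at most $2\pi$ closes it, but this is not needed given the earlier statement.
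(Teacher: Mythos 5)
Your proof is correct and follows essentially the same approach as the paper: both use the sum-of-angles bound $\theta_{uv} + \theta_{vw} + \theta_{wu} \leq 2\pi$ to extract one edge with angle at most $\frac{2\pi}{3}$ from each triangle in the packing, use edge-disjointness to count these as distinct edges, and subtract $|E_z|$. The extra remark about spherical triangle perimeter is not needed (and the pairwise triangle-inequality step alone would not give the bound), but since you ultimately invoke the already-stated inequality, the argument stands.
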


\begin{proof}
    Fix any triangle $T = \{uv, vw, wu\}$.
    We can bound the sum of angles $\theta_{uv} + \theta_{vw} + \theta_{uw} \leq 2 \pi$, and so at least one $e \in T$ has $\theta_e \leq \frac{2 \pi}{3}$.
    Thus, in any triangle $T$, we have that $T \cap (E' \cup E_z)$ is non-empty.
    Because $G$ has $t \cdot |E|$ edge-disjoint triangles, there are at least $t \cdot |E|$ edges in $(E' \cup E_z)$.
    Subtracting out those edges in $E_z$ completes the proof.
\end{proof}

We note that $\min_{\theta \in [0, \frac{2\pi}{3}]} \frac{2}{\pi} \frac{\theta}{1 - \cos \theta} = \frac{8}{9} > \alphagw + 0.01$, which motivates the following lemma.

\begin{lemma} \label{lem:betterrounding}
    If $t \geq 97 \sqrt \eta$ and $|E_z| \leq 96 \sqrt \eta |E|$, then 
    \[
        \expect[|\delta(S)|] \geq (\alphagw + 10^{-4} \eta^{3/2}) \cdot \maxcut(G).
    \]
\end{lemma}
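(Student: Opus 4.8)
The plan is to show that $\expect[|\delta(S)|]$ exceeds the Goemans--Williamson baseline $\alphagw\cdot SDP^*$ by an additive $\Omega(\eta^{3/2}|E|)$, which suffices since $SDP^*\ge\maxcut(G)$ and $|E|\ge\maxcut(G)$. The driver is the set $E'$: by Lemma~\ref{lem:eprimebig} and the hypotheses $t\ge 97\sqrt\eta$ and $|E_z|\le 96\sqrt\eta\,|E|$, we get $|E'|\ge t|E|-|E_z|\ge\sqrt\eta\,|E|$. Note also that $t$ is at most a constant while $t\ge 97\sqrt\eta$, so $\eta$ is tiny (certainly $\eta\le 10^{-4}$), which keeps us inside the range where the elementary trigonometric estimates below hold. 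Every $e\in E'$ satisfies $\tfrac{\pi}{8}\eta<\theta_e\le\tfrac{2\pi}{3}$, and the point is that each such edge contributes a surplus of order $\eta$.

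Concretely, I would start from
\[
  \expect[|\delta(S)|]=\sum_{e\in E}\frac{\theta_e}{\pi}=\alphagw\cdot SDP^*+\sum_{e\in E}\left(\frac{\theta_e}{\pi}-\alphagw\frac{1-\cos\theta_e}{2}\right),
\]
where every summand on the right is nonnegative because $\tfrac{2}{\pi}\tfrac{\theta}{1-\cos\theta}\ge\alphagw$ for all $\theta\in[0,\pi]$. It then remains to bound $\tfrac{\theta_e}{\pi}-\alphagw\tfrac{1-\cos\theta_e}{2}$ below by a fixed multiple of $\eta$ for each $e\in E'$, and sum.

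For this per-edge bound I would split on the size of $\theta_e$. If $\theta_e\le\tfrac12$, use $1-\cos\theta_e\le\tfrac{\theta_e^2}{2}$ to get $\tfrac{\theta_e}{\pi}-\alphagw\tfrac{1-\cos\theta_e}{2}\ge\tfrac{\theta_e}{\pi}\left(1-\tfrac{\pi\alphagw\theta_e}{4}\right)\ge\tfrac{\theta_e}{2\pi}>\tfrac{\eta}{16}$, since $\tfrac{\pi\alphagw}{8}<\tfrac12$ and $\theta_e>\tfrac{\pi}{8}\eta$. If $\tfrac12<\theta_e\le\tfrac{2\pi}{3}$, then because $\tfrac{2\pi}{3}<\critangle$ the paper's observation gives $\tfrac{2}{\pi}\tfrac{\theta_e}{1-\cos\theta_e}\ge\tfrac89>\alphagw+0.01$, so $\tfrac{\theta_e}{\pi}-\alphagw\tfrac{1-\cos\theta_e}{2}=\tfrac{1-\cos\theta_e}{2}\left(\tfrac{2}{\pi}\tfrac{\theta_e}{1-\cos\theta_e}-\alphagw\right)\ge\tfrac{1-\cos(1/2)}{2}\cdot 0.01$, a fixed positive constant that exceeds $\tfrac{\eta}{16}$ when $\eta\le 10^{-4}$. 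Hence each $e\in E'$ contributes at least $\tfrac{\eta}{16}$, so $\expect[|\delta(S)|]\ge\alphagw\cdot SDP^*+\tfrac{\eta}{16}|E'|\ge\alphagw\cdot SDP^*+\tfrac{\eta^{3/2}}{16}|E|\ge(\alphagw+10^{-4}\eta^{3/2})\maxcut(G)$.

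I expect the case split to be the only real obstacle. It is tempting to bound the surplus uniformly using only $\tfrac{2}{\pi}\tfrac{\theta_e}{1-\cos\theta_e}\ge\tfrac89$, but for $\theta_e$ near $\tfrac{\pi}{8}\eta$ the SDP weight $\tfrac{1-\cos\theta_e}{2}$ is only of order $\eta^2$, which would yield a surplus of order $\eta^{5/2}|E|$ --- too weak to beat $10^{-4}\eta^{3/2}\maxcut(G)$. The fix is that for small angles it is the linear term $\tfrac{\theta_e}{\pi}$ (the true rounding probability), of order $\eta$ rather than $\eta^2$, that dominates the surplus.
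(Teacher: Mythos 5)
Your proof is correct, and the caveat you flag at the end is in fact the gap in the paper's own chain of inequalities. The paper bounds $\frac{\theta_e}{\pi}\geq\frac{8}{9}\cdot\frac{1-\cos\theta_e}{2}$ for every $e\in E'$, extracts a surplus of $0.01\sum_{e\in E'}\frac{1-\cos\theta_e}{2}$ over $\alphagw\cdot SDP^*$, and then asserts that this surplus is at least $10^{-3}\sum_{e\in E'}\frac{\theta_e}{\pi}$ before plugging in $\theta_e>\frac{\pi}{8}\eta$. That intermediate step amounts to the per-edge inequality $\frac{1-\cos\theta_e}{2}\geq\frac{\theta_e}{10\pi}$, which fails whenever $\theta_e$ is below roughly $0.13$ (the left side is $\Theta(\theta_e^2)$), and $E'$ only excludes $\theta_e\leq\frac{\pi}{8}\eta$, which is orders of magnitude smaller. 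In the worst case (all of $E'$ with $\theta_e$ just above $\frac{\pi}{8}\eta$) the paper's chain delivers only a surplus of order $\eta^{2}|E'|\geq\eta^{5/2}|E|$, not the claimed $\eta^{3/2}|E|$. Your two-case split repairs exactly this: for $\theta_e\leq\frac{1}{2}$ you keep the exact surplus $\frac{\theta_e}{\pi}-\alphagw\frac{1-\cos\theta_e}{2}$ and observe that the SDP term is $O(\theta_e^2)$, negligible against the linear $\frac{\theta_e}{\pi}\geq\frac{\eta}{8}$; for $\theta_e\in(\frac{1}{2},\frac{2\pi}{3}]$ the crude surplus $0.01\cdot\frac{1-\cos\theta_e}{2}\geq 0.01\cdot\frac{1-\cos(1/2)}{2}$ is a fixed constant, larger than $\frac{\eta}{16}$ since $\eta$ is tiny. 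Either way each $e\in E'$ contributes at least $\frac{\eta}{16}$, which, combined with $|E'|\geq\sqrt\eta|E|$, gives the stated bound. So this is not merely a different route from the paper: it is the one that actually closes the argument, and the paper's proof would need to be rewritten along your lines (or a step would need to be inserted to justify the aggregate inequality, which I do not see how to do).
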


\begin{proof}
    The assumptions on $t$ and $|E_z|$ imply, through Lemma~\ref{lem:eprimebig}, that $|E'| \geq \sqrt \eta |E|$.
    We calculate
    \begin{align*}
        \expect[|\delta(S)|] &= \sum_{e \in E'} \frac{\theta_e}{\pi} + \sum_{e \in E \setminus E'} \frac{\theta_e}{\pi} \\
        &\geq \frac{8}{9} \sum_{e \in E'} \frac{1 - \cos \theta_e}{2} + \alphagw \sum_{e \in E \setminus E'} \frac{1 - \cos \theta_e}{2} \\
        &\geq \alphagw \cdot SDP^* + 0.01 \sum_{e \in E'} \frac{1 - \cos \theta_e}{2} \\
        &\geq \alphagw \cdot SDP^* + 10^{-3} \sum_{e \in E'} \frac{\theta_e}{\pi} \\
        &\geq \alphagw \cdot SDP^* + 10^{-4} \cdot \eta \cdot |E'| \\
        &\geq \alphagw \cdot SDP^* + 10^{-4} \cdot \eta^{3/2} \cdot |E| \\
        &\geq (\alphagw + 10^{-4} \cdot \eta^{3/2}) \cdot \maxcut(G).        
    \end{align*}
\end{proof}

Now we can recall and prove Theorem~\ref{thm:apxratio}.
\apxratio*

\begin{proof}

Note that the definition of $\eta$ implies $\eta < 0.01$ and $t > 97 \sqrt \eta$.
If $|E_\pi| \geq \eta^{3/2} |E|$, then apply Lemma~\ref{lem:manylarge} to find that 
\[
    \expect[|\delta(S)|] \geq (\alphagw + 10^{-2} \eta^{3/2}) \cdot \maxcut(G) = (\alphagw + 10^{-8} t^3) \cdot \maxcut(G).
\]
If $|E_\pi| < \eta^{3/2} |E|$ and $|E_z| \geq 96 \sqrt \eta |E|$, then apply Lemma~\ref{lem:manysmall} to find that
\[
    \expect[|\delta(S')|] \geq (\alphagw + \eta^{3/2}) \cdot \maxcut(G) = (\alphagw + 10^{-6} t^3) \cdot \maxcut(G).
\]
Finally, if $|E_\pi| < \eta^{3/2} |E|$ and $|E_z| < 96 \sqrt \eta |E|$, then apply Lemma~\ref{lem:betterrounding} to find that
\[
    \expect[|\delta(S')|] \geq (\alphagw + 10^{-4}\eta^{3/2}) \cdot \maxcut(G) = (\alphagw + 10^{-10} t^3) \cdot \maxcut(G).
\]
\end{proof}

\section{Hardness of Approximating Maximum Cut on Split Graphs} \label{section:hardness}

In this section, we show that, subject to the Small Set Expansion Hypothesis, \mcproblem on split graphs is hard to approximate to some constant.
Our starting point is hardness of finding a small balanced cut on weighted graphs.
Given a graph $G = (V, E)$ with weights $w_e \in [0, 1]$ for $e \in E$, we define
\[
    \mu(S) := \frac{\sum_{u \in S} d_u}{\sum_{v \in V} d_v}
\]
as the \emph{normalized set size} of $S \subseteq V$.
Here, $d_u := \sum_{e \in \delta(u)} w_e$ is defined as the \emph{weighted} degree of $u$.
In an unweighted regular graph, we have that $\mu(S) = \frac{|S|}{|V|}$.
Also define
\[
    \Phi(S) := \frac{\sum_{e \in \delta(S)} w_e}{\sum_{u \in S} d_u}
\]
as the \emph{edge expansion} of $S \subseteq V$.
In an unweighted $d$-regular graph, we have that $\Phi(S) = \frac{|\delta(S)|}{d|S|}$.

\begin{lemma}[Corollary~3.6 of \cite{raghavendra2012reductions}] \label{lem:hard1}
    There is a constant $c_1 > 0$ such that for any sufficiently small $\epsilon > 0$, it is SSE-hard to distinguish between the following two cases for a weighted graph $G = (V, E)$: \\

    \textbf{Yes:} There exists $S \subseteq V$ such that $\mu(S) = \frac{1}{2}$ and $\Phi(S) \leq 2 \epsilon$. \\
    
    \textbf{No:} Every $S \subseteq V$ with $\mu(S) \in [\frac{1}{10}, \frac{1}{2}]$ satisfies $\Phi(S) \geq c_1 \sqrt \epsilon$.
\end{lemma}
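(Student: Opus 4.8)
Since Lemma~\ref{lem:hard1} is quoted verbatim from \cite{raghavendra2012reductions} (Corollary~3.6), the proof is ultimately a pointer to that reference; the plan here is to recall how that hardness is produced, so that the origin of the stated parameters is transparent. The statement is a hardness result for \emph{balanced separator} obtained from the Small Set Expansion Hypothesis (SSEH) of \cite{raghavendra2010graph}, which posits that for every $\eta > 0$ there is $\delta > 0$ for which the following two cases, for a regular weighted graph $H$, cannot be distinguished in polynomial time: \textbf{(Yes)} there is $S$ with $\mu(S) = \delta$ and $\Phi(S) \le \eta$; \textbf{(No)} every $S$ with $\mu(S) \le \delta$ has $\Phi(S) \ge 1 - \eta$. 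The two directions of the reduction --- producing a balanced sparse cut from a Yes instance and ruling out all somewhat-balanced sparse cuts in a No instance --- are handled quite differently, and only the second is hard.

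For completeness one first turns the sparse set into a balanced one: passing to a suitable power of $H$ (or, equivalently, to the ``many disjoint sparse sets'' strengthening of SSEH) and taking the union of enough translates of the sparse small set yields a set whose volume is arbitrarily close to $1/2$ and whose expansion is still $O(\eta)$, because every edge leaving a union of disjoint sets leaves one of the sets individually; a negligible trimming makes the volume exactly $1/2$. Taking $\eta$ to be a small constant multiple of $\epsilon$ then gives the Yes case of the lemma with expansion at most $2\epsilon$.

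Soundness is the technical core, and it is here that the reduction must do more than output $H$ unchanged: the guarantee ``every set of volume at most $\delta$ expands'' does \emph{not} by itself forbid a sparse \emph{balanced} cut --- two disjoint small-set expanders joined by a sparse matching is a valid No instance with an $o(1)$-sparse bisection --- so \cite{raghavendra2012reductions} composes $H$ with an auxiliary ``cloud'' structure (equivalently, one may route through the $\mathrm{SSEH}\Rightarrow$Unique-Games implication of \cite{raghavendra2010graph} and the label-extended-graph construction). In the resulting graph, any cut of volume in $[1/10, 1/2]$ and expansion $\phi$ is ``decoded'' --- by averaging over the cloud coordinates and then applying a Cheeger-type (Fourier-analytic) inequality to the cloud --- into a set of volume at most $\delta$ in $H$ whose expansion is bounded by a fixed increasing function of $\phi$ that vanishes as $\phi \to 0$. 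Against a No instance this is impossible unless $\phi \ge c_1\sqrt{\eta} = \Omega(\sqrt{\epsilon})$; the square root is exactly the loss incurred by the Cheeger step, the absolute constant $c_1 > 0$ is governed by the expansion of the fixed cloud, and the constant balance $[1/10, 1/2]$ is all the averaging argument needs. Unwinding the constants with $\eta = \Theta(\epsilon)$ produces the stated gap between $2\epsilon$ and $c_1\sqrt{\epsilon}$.

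I expect the soundness decoding --- designing the composition gadget and proving the Cheeger-type inequality responsible for the $\sqrt{\epsilon}$ loss --- to be the main obstacle; the strengthening of SSEH and the completeness argument are by now routine.
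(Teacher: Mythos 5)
The paper offers no proof of this lemma at all---it is imported as a black box, cited verbatim as Corollary~3.6 of \cite{raghavendra2012reductions}---so your pointer to that reference is exactly the paper's own ``proof,'' and your sketch of how the reference establishes it (the many-disjoint-sparse-sets strengthening of SSEH for completeness, composition with a cloud structure plus a Gaussian/Cheeger-type isoperimetric step accounting for the $\sqrt{\epsilon}$ loss in soundness) is a faithful summary of that argument. The only caveat is your parenthetical claim that one could equivalently route through the SSEH\,$\Rightarrow$\,Unique-Games implication of \cite{raghavendra2010graph} and label-extended graphs, which is not how the balanced-separator hardness is actually obtained and is not known to yield it; since the lemma is used here purely as a cited result, this does not affect correctness.
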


To utilize this hardness, we first create an unweighted instance as follows.
\begin{lemma} \label{lem:hard2}
    There is a constant $c_2 > 0$ such that for any sufficiently small $\epsilon, \eta > 0$, it is SSE-hard to distinguish between the following two cases for an unweighted, non-simple graph $G = (V, E)$ with $|E| \leq |V|^5$: \\

    \textbf{Yes:} There exists $S \subseteq V$ such that $\mu(S) \in [\frac{1}{2} - \eta, \frac{1}{2} + \eta]$ and $\Phi(S) \leq 3 \epsilon$. \\
    
    \textbf{No:} Every $S \subseteq V$ with $\mu(S) \in [\frac{1}{10} + \eta, \frac{1}{2}]$ satisfies $\Phi(S) \geq c_2 \sqrt \epsilon$.
\end{lemma}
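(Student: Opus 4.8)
The plan is to reduce from the weighted instance of Lemma~\ref{lem:hard1} by scaling and rounding the weights: for a large integer $M$ to be fixed, replace each edge $e$ of $G$ by $m_e := \lceil w_e M\rceil$ parallel copies, obtaining an unweighted multigraph $G'$ on the same vertex set; write $\mu'(S), \Phi'(S)$ for the corresponding quantities in $G'$. First I would record three harmless normalizations of the Lemma~\ref{lem:hard1} instance. Since $\mu$ and $\Phi$ are invariant under scaling all weights by a common factor, we may assume $\max_e w_e = 1$, so that $W := \sum_v d_v = 2\sum_e w_e \ge 2$; we may also assume the underlying graph of $G$ is simple, hence $|E| \le \binom{|V|}{2}$. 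By complementation --- if $\mu(S) \in (\tfrac12, \tfrac9{10}]$ then $\mu(V\setminus S) \in [\tfrac1{10}, \tfrac12)$ and $\delta(V\setminus S) = \delta(S)$, whence $\Phi(S) = \tfrac{1-\mu(S)}{\mu(S)}\Phi(V\setminus S) \ge \tfrac19\Phi(V\setminus S)$ --- the \textbf{No} case of Lemma~\ref{lem:hard1} strengthens to: every $S$ with $\mu(S)\in[\tfrac1{10},\tfrac9{10}]$ has $\Phi(S) \ge c_1'\sqrt\epsilon$ with $c_1' := c_1/9$. Finally, because the hard instances underlying Lemma~\ref{lem:hard1} may be taken arbitrarily large, I would treat $|V|$ as large relative to $1/\epsilon$ and $1/\eta$.

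I would then take $M := |V|^3$, so $G'$ is an unweighted multigraph with $|E(G')| \le \binom{|V|}{2} M < |V|^5$. Writing $\xi_e := m_e - w_e M \in [0,1)$, a direct computation gives, for every $S \subseteq V$, the identities $\sum_{u\in S} d'_u = M\sum_{u\in S} d_u + O(|V|^2)$, $|\delta_{G'}(S)| = M\sum_{e\in\delta(S)} w_e + O(|V|^2)$, and $\sum_v d'_v = MW + O(|V|^2)$, where every error term is nonnegative and each $O(\cdot)$ hides an absolute constant. Dividing the first identity by the third and using $W \ge 2$ shows, \emph{unconditionally}, that $|\mu'(S) - \mu(S)| \le 2|V|^2/(MW) = O(1/|V|)$; dividing the second by the first in the regime $\mu(S) \ge \tfrac1{10}$ (so $\sum_{u\in S} d_u \ge W/10$) shows likewise $|\Phi'(S) - \Phi(S)| = O(1/|V|)$.

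With these estimates both directions are immediate once $|V|$ exceeds a bound depending only on $\epsilon$ and $\eta$. In the \textbf{Yes} case the witness $S$ has $\mu(S) = \tfrac12$ and $\Phi(S) \le 2\epsilon$, hence $\mu'(S) \in [\tfrac12-\eta, \tfrac12+\eta]$ (as $O(1/|V|) < \eta$) and $\Phi'(S) \le 2\epsilon + O(1/|V|) \le 3\epsilon$ (as $O(1/|V|) < \epsilon$), giving a \textbf{Yes} instance of Lemma~\ref{lem:hard2}. For the \textbf{No} case I would argue contrapositively: if some $S$ in $G'$ has $\mu'(S) \in [\tfrac1{10}+\eta, \tfrac12]$ and $\Phi'(S) < c_2\sqrt\epsilon$, then the unconditional $\mu$-estimate forces $\mu(S) \in [\tfrac1{10},\tfrac9{10}]$, the $\Phi$-estimate then applies, and $\Phi(S) \le \Phi'(S) + O(1/|V|) < c_2\sqrt\epsilon + O(1/|V|) < c_1'\sqrt\epsilon$ provided $c_2 := c_1'/2$ and $|V|$ is large --- contradicting the strengthened \textbf{No} case. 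Setting $c_2 := c_1/18$ then finishes the reduction.

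The only real content is the choice of $M$. It must be large enough that the $O(|V|^2)$ rounding errors are negligible \emph{compared with $\epsilon$} (not merely with $1$) in the expansion estimate, yet small enough that $|E(G')| \le |V|^5$; both requirements are met simultaneously exactly because we may assume $|V| \gg 1/\epsilon$, which is the reason the reduction only claims hardness for sufficiently large instances. Everything else is routine manipulation of the ratios defining $\mu$ and $\Phi$.
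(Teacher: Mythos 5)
Your proposal is correct and follows essentially the same strategy as the paper's: scale the weights of the Lemma~\ref{lem:hard1} instance so they become (approximately) integers bounded by a polynomial in $|V|$, then duplicate edges to obtain an unweighted multigraph, and bound the $O(1/|V|)$ perturbation of $\mu$ and $\Phi$ caused by the rounding. The differences are cosmetic: you normalize to $\max_e w_e = 1$ and round up with $M = |V|^3$, whereas the paper scales so $\sum_v d_v = n^3$ and rounds down; and you end up with $c_2 = c_1/18$ rather than the paper's $c_2 = c_1/2$. One genuine (if small) improvement in your write-up: by first complementing the Lemma~\ref{lem:hard1} No guarantee to cover $\mu(S) \in [\tfrac1{10}, \tfrac9{10}]$, you cleanly handle sets whose rounded size $\mu'(S)$ lands at or below $\tfrac12$ while the original $\mu(S)$ sits just above $\tfrac12$ --- a boundary case the paper's No-case analysis leaves implicit (it invokes $\Phi(S) \ge c_1\sqrt\epsilon$ without verifying $\mu(S) \le \tfrac12$). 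Otherwise the two proofs match.
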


\begin{proof}
    We begin with a gap instance $G = (V, E)$ of the form in Lemma~\ref{lem:hard1}.
    We will create a weighted instance $G'$ with the same vertex and edge set such that each edge has weight in $\{1, 2, \ldots, n^3\}$.
    Duplicating edges according to their weight will then yield the desired statement for unweighted graphs.

    First, we may assume without loss of generality that the edges in $G$ are scaled such that the sum of degrees is $\sum_{v \in V} d_v = n^3$.
    This is because multiplying the weight of all edges by an equal constant factor does not change the results of $\mu$ or $\Phi$.
    Now produce the weights $\{w_e'\}_{e \in E}$ by setting $w_e' := \lfloor w_e \rfloor$.
    Note in particular that this implies $w_e' \leq |V|^3$ for all $e \in E$.
    Define $\mu', \Phi'$, and $d_v'$ analogously to $\mu, \Phi$, and $d_v$, except for weights $\{w_e'\}_{e \in E}$.
    Note that $w_e \geq w_e' > w_e - 1$ for all $e \in E$ and so $d_v \geq d_v' > d_v - n$ for all $v \in V$.

    Suppose that $G$ is in the \textbf{Yes} case of Lemma~\ref{lem:hard1}, and let $S \subseteq V$ have $\mu(S) = \frac{1}{2}$ and $\Phi(S) \leq 2 \epsilon$.
    Then we calculate
    \begin{align*}
        \mu'(S) &= \frac{\sum_{u \in S} d_u'}{\sum_{v \in V} d_v'} \\
        &\geq \frac{\sum_{u \in S} d_u - n |S|}{\sum_{v \in V} d_v} \\
        &= \mu(S) - \frac{n |S|}{n^3} \\
        &\geq \frac{1}{2} - \frac{1}{n}.
    \end{align*}
    Thus, for sufficiently large $n$, we have that $\mu'(S) \geq \frac{1}{2} - \eta$.
    By a similar calculation, we have that $\mu'(V \setminus S) \geq \mu(V \setminus S) - \eta = \frac{1}{2} - \eta$ and so $\mu'(S) = 1 - \mu'(V \setminus S) \leq \frac{1}{2} + \eta$.
    We now aim to show that $\Phi'(S) \leq 3 \epsilon$.
    We calculate
    \begin{align*}
        \Phi'(S) &= \frac{\sum_{e \in \delta(S)} w_e'}{\sum_{u \in S} d_u'} \\
        &\leq \frac{\sum_{e \in \delta(S)} w_e}{\sum_{u \in S} d_u - n |S|} \\
        &\leq \frac{\sum_{e \in \delta(S)} w_e}{( 1- 2/n) \sum_{u \in S} d_u} \\
        &= \frac{1}{1 - 2/n} \Phi(S) \\
        &\leq \frac{1}{1 - 2/n} 2 \epsilon.
    \end{align*}
    Here, the second inequality uses the fact that $\mu(S) = \frac{1}{2}$ and so $\sum_{u \in S} d_u = \frac{n^3}{2}$.
    Thus, for sufficiently large $n$, we have that $\Phi'(S) \leq 3 \epsilon$.
    This establishes that the \textbf{Yes} case of $G$ maps to the \textbf{Yes} case of $G'$.

    Suppose that $G$ is in the \textbf{No} case of Lemma~\ref{lem:hard1}.
    Then any $S \subseteq V$ with $\mu(S) \leq \frac{1}{2}$ either has $\mu(S) < \frac{1}{10}$ or $\Phi(S) \geq c_1 \sqrt \epsilon$.
    Take any $S \subseteq V$. 
    If $\mu(S) < \frac{1}{10}$, then $\mu'(S) \leq \mu(S) + \eta < \frac{1}{10} + \eta$.
    So consider the case in which $\mu(S) \geq \frac{1}{10}$.
    Then we calculate
    \begin{align*}
        \Phi'(S) &= \frac{\sum_{e \in \delta(S)} w_e'}{\sum_{u \in S} d_u'} \\
        &\geq \frac{\sum_{e \in \delta(S)} w_e - n^2}{\sum_{u \in S} d_u} \\
        &= \Phi(S) - \frac{n^2}{\sum_{u \in S} d_u} \\
        &\geq \Phi(S) - \frac{10}{n} \\
        &\geq c_1 \sqrt \epsilon - \frac{10}{n}.
    \end{align*}
    If we set $c_2 = \frac{c_1}{2}$, then for sufficiently large $n$, $\Phi'(S) \geq c_2 \sqrt \epsilon$.
    This establishes that the \textbf{No} case of $G$ maps to the \textbf{No} case of $G'$.
\end{proof}

Our final reduction to \mcproblem requires our graph to be regular and $|E| \in \mathcal O(n^2)$, so before proceeding, we must first take a standard sparsification step.

\begin{lemma} \label{lem:hard3}
    There is a constant $c_3 > 0$ such that for any sufficiently small $\epsilon, \zeta > 0$, it is SSE-hard under randomized reductions to distinguish between the following two cases for an unweighted, non-simple graph $G = (V, E)$ with $|E| \in \Theta(|V|^2)$ and maximum degree at most $(1 + \zeta) d$, where $d$ is the minimum degree: \\

    \textbf{Yes:} There exists $S \subseteq V$ such that $\mu(S) \in [\frac{1}{2} - \zeta, \frac{1}{2} + \zeta]$ and $\Phi(S) \leq 4 \epsilon$. \\
    
    \textbf{No:} Every $S \subseteq V$ with $\mu(S) \in [\frac{1}{10} + \zeta, \frac{1}{2}]$ satisfies $\Phi(S) \geq c_3 \sqrt \epsilon$.
\end{lemma}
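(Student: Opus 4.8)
The plan is to turn the gap instance $G = (V, E)$ of Lemma~\ref{lem:hard2} into the desired near-regular, sparse instance by a random edge-subsampling argument. First note that this instance may be taken near-regular, i.e. $\max_v d_v \le (1 + o(1))\min_v d_v$: the SSE-hard instances behind Lemma~\ref{lem:hard1} may be taken (weighted-)regular, and the flooring step in the proof of Lemma~\ref{lem:hard2} changes each degree by at most $n$, which is $o(\min_v d_v)$. Write $n := |V|$, $m := |E|$, and for $S \subseteq V$ put $\mathrm{vol}(S) := \sum_{u \in S} d_u$, so $\mathrm{vol}(V) = 2m$. If $m < n^2$, replace every edge by $\lceil n^2/m \rceil$ parallel copies; this changes neither $\mu$, nor $\Phi$, nor the degree ratio, so we may assume $m \ge n^2$. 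We will invoke Lemma~\ref{lem:hard2} with its parameter $\eta$ chosen small relative to the target $\zeta$.

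Fix a small constant $\delta > 0$ and a large constant $K = K(\epsilon, \zeta)$, set $p := K n^2 / m$ (and take $H := G$ if $p \ge 1$), and let $H$ be obtained by keeping each edge of $G$ independently with probability $p$. Then $\E[|E(H)|] = p m = K n^2$, $\E[d_{H,v}] = p\, d_v$, $\E[\mathrm{vol}_H(S)] = p\,\mathrm{vol}(S)$, and $\E[|\delta_H(S)|] = p\,|\delta_G(S)|$. Set $\tau$ to be a sufficiently large multiple of $n/(p\sqrt{\epsilon})$. Chernoff bounds plus a union bound over the $2^n$ subsets give, with high probability: (i) for every $S$ with $\mathrm{vol}(S) \ge \tau$, both $\mathrm{vol}_H(S)$ and (when $|\delta_G(S)|$ is not already negligible) $|\delta_H(S)|$ are within a factor $1 \pm \delta$ of their means; (ii) for every $S$, $\mathrm{vol}_H(S) \le 2\max\{p\,\mathrm{vol}(S),\, 20 n\}$; (iii) every $d_{H,v}$ is within a factor $1 \pm \delta$ of $p d_v$. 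The union bound is valid because the relevant means exceed $n$: $p\,\mathrm{vol}(S) \ge p\tau \gg n$, and in the \textbf{No} case every $S$ with $\mu_G(S) \in [\tfrac1{10}, \tfrac12]$ has $|\delta_G(S)| \ge c_2\sqrt{\epsilon}\,\mathrm{vol}(S) \ge \tfrac{c_2}{10}\sqrt{\epsilon}\,\tau$, so $p\,|\delta_G(S)| \gg n$ by the choice of $\tau$ (sets with negligible cut never matter in the \textbf{No} analysis).

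Now fix $\zeta$ small with $\zeta \ge \eta + O(\delta)$ and $c_3$ a small enough multiple of $c_2$. By (iii), $H$ is near-regular with ratio $(1 + O(\delta))(1 + o(1)) \le 1 + \zeta$ and has $(1 \pm o(1))K n^2 = \Theta(n^2) = \Theta(|V(H)|^2)$ edges. By (i), every $S$ with $\mathrm{vol}(S) \ge \tau$ satisfies $\mu_H(S) = \mu_G(S) \pm O(\delta)$ and, when applicable, $\Phi_H(S) = (1 \pm O(\delta))\Phi_G(S)$; by (ii), every $S$ with $\mathrm{vol}(S) < \tau$ has $\mu_H(S) = O(1/n)$ and $\mu_G(S) < \tau/(2m) = O(1/n)$, hence is never in the window $[\tfrac1{10}+\zeta, \tfrac12]$. \textbf{Yes:} the set $S^*$ of Lemma~\ref{lem:hard2} has $\mathrm{vol}(S^*) = \Theta(m) \ge \tau$, so $\mu_H(S^*) \in [\tfrac12 - \eta - O(\delta),\, \tfrac12 + \eta + O(\delta)] \subseteq [\tfrac12 - \zeta,\, \tfrac12 + \zeta]$ and $\Phi_H(S^*) \le (1 + O(\delta))\,3\epsilon \le 4\epsilon$. \textbf{No:} if $S$ has $\mu_H(S) \in [\tfrac1{10}+\zeta, \tfrac12]$ and $\Phi_H(S) < c_3\sqrt{\epsilon}$, then $\mathrm{vol}(S) \ge \tau$ (else $\mu_H(S) = O(1/n)$), so $\mu_G(S) \in [\tfrac1{10},\, \tfrac12 + O(\delta)] \subseteq [\tfrac1{10}, \tfrac9{10}]$; applying the \textbf{No} guarantee of Lemma~\ref{lem:hard2} to $S$ (if $\mu_G(S) \le \tfrac12$) or to $V \setminus S$ (if $\mu_G(S) > \tfrac12$, using $\delta_G(S) = \delta_G(V \setminus S)$ and $\mathrm{vol}(V \setminus S) \ge \tfrac1{10}\mathrm{vol}(V)$) gives $\Phi_G(S) \ge \tfrac{c_2}{10}\sqrt{\epsilon}$ with $|\delta_G(S)|$ large, whence (i) gives $\Phi_H(S) \ge (1 - O(\delta))\tfrac{c_2}{10}\sqrt{\epsilon} > c_3\sqrt{\epsilon}$, a contradiction. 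Since the sampling only succeeds with high probability, the reduction is randomized.

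The step I expect to be the main obstacle is the union bound in the subsampling: ordinary Chernoff concentration does not apply to subsets of small volume, so one must peel those off and argue --- crudely but robustly, via (ii) --- that subsampling cannot inflate their normalized size past $\tfrac1{10}+\zeta$. Choosing $\tau$ large enough to absorb the extra $\sqrt{\epsilon}$ factor coming from $\Phi_G$, and then making the constants $K, \delta, \zeta, c_3$ and the choice of $\eta$ in Lemma~\ref{lem:hard2} mutually consistent, is the only genuinely delicate bookkeeping; the rest is the standard Bencz\'ur--Karger-style cut-preservation argument, applied only to the large-volume sets that are relevant here.
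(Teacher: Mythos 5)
The crux of your proof is the opening sentence: "the SSE-hard instances behind Lemma~\ref{lem:hard1} may be taken (weighted-)regular." This is asserted without justification, and it is load-bearing: the entire subsampling argument produces a near-regular output only because you assume a near-regular input, and your parenthetical "$n = o(\min_v d_v)$" is itself only correct under that same regularity assumption (if the instance from Lemma~\ref{lem:hard1} had some vertex of very low weighted degree, the flooring could change that degree by a large multiplicative factor). The paper's own proof pointedly does \emph{not} assume anything about the degree distribution of the Lemma~\ref{lem:hard1}/~\ref{lem:hard2} instances --- that is exactly why it uses the vertex-cloning reduction (replacing each $u$ by $d_u$ copies, with edges between clone classes placed at rate $R/(d_u d_v)$), which manufactures a near-regular graph out of an arbitrary degree profile. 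The fact that the paper bothers to define $\mu$ and $\Phi$ with degree-weighted volumes rather than $|S|/|V|$ is another signal that regularity is not being assumed upstream. So either you must actually verify that Corollary~3.6 of \cite{raghavendra2012reductions} yields weighted-regular instances (a citation-level claim that would need to be checked against that paper, not waved through), or you need some regularization step such as the paper's cloning.

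Granting the regularity premise, the rest of your argument is a legitimate alternative route: you sparsify by Bernoulli edge-subsampling at rate $p = Kn^2/m$ rather than by the paper's clone-and-resample, and you handle small-volume sets by the crude bound (ii) instead of implicitly via the cloning map $g$. Your union-bound bookkeeping (choosing $\tau$ so that $p\,\mathrm{vol}(S)$ and, in the \textbf{No} case, $p\,|\delta_G(S)|$ both exceed a large multiple of $n$) is correct, and the reflection $S \leftrightarrow V\setminus S$ in the \textbf{No} case is a nice touch that the paper's proof handles implicitly by setting up the clone graph to have $\mu'(S') \le 1/2$. The trade-off: your route is shorter when the source is regular; the paper's route is self-contained and degree-agnostic. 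You should close the gap by either citing where regularity is established, or replacing the "may be taken regular" step with a regularization reduction.
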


\begin{proof}
    We begin with a gap instance $G = (V, E)$ of the form in Lemma~\ref{lem:hard2}.
    We randomly create $G' = (V', E')$ as follows.
    For each vertex $u \in V$, create $d_u$ copies and add them to $G'$: 
    \[
        V' = \{u_1, u_2, \ldots, u_{d_u} \mid u \in V\}.
    \]
    Let $g : V' \rightarrow V$ map clones $u_i \in V'$ to their original vertex $u \in V$.
    For each edge $uv \in E$ and clones $u' \in g^{-1}(u), v' \in g^{-1}(v)$, let $p_{uv} := \frac{R}{d_u d_v}$ for a parameter $R$ we will adjust later.
    Now add $\lfloor p_{uv} \rfloor$ copies of $u' v'$ to $E'$ and randomly add a final edge with probability $p_{uv} - \lfloor p_{uv} \rfloor$.
    Note that the expected number of $u'v'$ edges is equal to $p_{uv}$.

    Let us consider the properties of $G'$. We have that $|V'| = \sum_{v \in V} d_v = 2 |E|$ and
    \[
        \expect[|E'|] = \sum_{uv \in E} p_{uv} \cdot d_u \cdot d_v = R |E|.
    \]
    Let $q_e$ be the random variable denoting the number of edges in $G'$ ``produced'' from $e \in E$.
    By a standard application of Chernoff bounds and the union bound over $2^{V'}$ and $E$, we can show, for any constant $\alpha > 0$, that
    \begin{enumerate}
        \item $(1 - \alpha) \expect[|\delta_{G'}(S)|] \leq |\delta_{G'}(S)| \leq (1 + \alpha) \expect[|\delta_{G'}(S)|]$ for all $S \subseteq V'$ and
        \item $(1 - \alpha) R \leq q_e \leq (1 + \alpha) R$ for all $e \in E$
    \end{enumerate}
    with high probability, assuming $R \in \Omega(n)$.
    In particular, item 1 implies that $d_v' \in [(1 - \alpha) R, (1 + \alpha) R]$ for all $v \in V'$, where $d_v'$ is the degree of $v$ in $G'$.
    Suppose from now on that items 1 and 2 are both true.
    Now suppose $G$ is a \textbf{Yes} instance of Lemma~\ref{lem:hard2}.
    That is, there is a $S \subseteq V$ such that $\mu(S) \in [\frac{1}{2} - \eta, \frac{1}{2} + \eta]$ and $\Phi(S) \leq 3 \epsilon$.
    Let $S' := g^{-1}(S)$, and $\mu'$ be defined for $G'$ as $\mu$ is defined for $G$.
    Then we calculate
    \begin{align*}
        \mu'(S') &= \frac{\sum_{u \in S} d_u'}{\sum_{v \in V} d_v'} \\
        &\geq \frac{(R- \alpha) |S'|}{(R+ \alpha) |V'|} \\
        &= \frac{R - \alpha}{R + \alpha} \frac{\sum_{u \in S} d_u}{\sum_{v \in V} d_v} \\
        &= \frac{R - \alpha}{R + \alpha} \mu(S) \\
        &\geq \frac{R - \alpha}{R + \alpha} (\frac{1}{2} - \eta).
    \end{align*}
    Applying the same calculation for $\mu'(V' \setminus S')$ yields the upper bound
    \[
        \mu'(S') = 1 - \mu'(V \setminus S') \leq 1 - \frac{R - \alpha}{R + \alpha} (\frac{1}{2} - \eta).
    \]
    For sufficiently small $\alpha > 0$, this implies that $\mu'(S') \in [\frac{1}{2} - \zeta, \frac{1}{2} + \zeta]$.
    Let $\Phi'$ be defined for $G'$ as $\Phi$ for $G$.
    We calculate
    \begin{align*}
        \Phi'(S') &= \frac{|\delta_{G'}(S')|}{\sum_{u \in S'} d_u} \\
        &\leq \frac{(R + \alpha) |\delta_G(S)|}{(R - \alpha) \sum_{u \in S} d_u} \\
        &= \frac{R+\alpha}{R - \alpha} \Phi(S) \\
        &\leq \frac{R + \alpha}{R - \alpha} 3 \epsilon.
    \end{align*}
    For sufficiently small $\alpha > 0$, we get the upper bound $\Phi'(S') \leq 4 \epsilon$.
    This establishes that the \textbf{Yes} case of $G$ maps to the \textbf{Yes} case of $G'$ with high probability.
    
    Now suppose that $G$ is a \textbf{No} instance of Lemma~\ref{lem:hard2}.
    Consider any $S' \subseteq V'$ with $\mu'(S') \in [\frac{1}{10} + \zeta, \frac{1}{2}]$.
    Let $f \in [0, 1]^V$ be a vector defined by $f_v = \frac{|S' \cap g^{-1}(v)|}{d_v}$.
    That is, $f$ indicates the proportion of each original vertex selected by $S'$.
    We calculate
    \begin{align*}
        \Phi'(S') &= \frac{|\delta_{G'}(S')|}{\sum_{v \in V'} d_v'} \\
        &\geq \frac{(1 - \alpha)\expect[|\delta_{G'}(S')|]}{(R + \alpha) |V'|} \\
        &= \frac{1 - \alpha}{R + \alpha}\frac{\sum_{uv \in E} p_{uv} \cdot ((1 - f_u) d_u f_v d_v + f_u d_u (1 - f_v) d_v)}{|V'|} \\
        &= \frac{R (1 - \alpha)}{R + \alpha} \frac{\sum_{uv \in E} (1 - f_u) f_v + f_v (1 - f_u)}{|V'|} \\
    \end{align*}
    Let $S \subseteq V$ be a random variable sampled by including $v \in S$ with probability $f_v$.
    Then we have that
    \begin{align*}
        \expect[\Phi(S)] &= \frac{\expect[|\delta_G(S)|]}{\sum_{v \in V} d_v} \\
        &= \frac{\sum_{uv \in E} (1 - f_u) f_v + f_u (1 - f_v)}{|V'|} \\
        &\leq \frac{R + \alpha}{R (1 - \alpha)} \Phi'(S').
    \end{align*}
    We can apply Chernoff bounds on $|\delta_G(S)|$ to find that $\Phi(S) < (1 + \alpha) \expect[\Phi(S)]$ with high probability.
    We additionally calculate
    \begin{align*}
        \expect[\mu(S)] &= \frac{\sum_{u \in V} f_u d_u}{\sum_{v \in V} d_v} \\
        &= \frac{|S'|}{|V'|} \\
        &\geq \frac{R - \alpha}{R + \alpha} \mu(S) \\
        &\geq \frac{R - \alpha}{R + \alpha} (\frac{1}{10} + \zeta).
    \end{align*}
    By setting $\alpha, \eta > 0$ sufficiently small and applying Chernoff bounds, this implies that $\expect[\mu(S)] > \frac{1}{10} + \eta$ with probability at least $\frac{1}{2}$.
    Thus, by the probabilistic method, there exists some $S \subseteq V$ such that $\mu(S) > \frac{1}{10} + \eta$ and $\Phi(S) < (1 + \alpha) \expect[\Phi(S)]$.
    Applying Lemma~\ref{lem:hard2}, we find that
    \begin{align*}
        \Phi'(S') &\geq \frac{R(1 - \alpha)}{R + \alpha} \expect[\Phi(S)] \\
        &> \frac{R (1 - \alpha)}{(R + \alpha) (1 + \alpha)} \Phi(S) \\
        &\geq \frac{R (1 - \alpha)}{(R + \alpha) (1 + \alpha)} c_2 \sqrt \epsilon.
    \end{align*}
    
    So, set $c_3 := \frac{R (1 - \alpha)}{(R + \alpha) (1 + \alpha)} c_2$.
    This establishes that the \textbf{No} case of $G$ maps to the \textbf{No} case of $G'$ with high probability.
\end{proof}

We now convert the language of $\mu$ and $\Phi$ to the simpler language of cardinalities of sets.
\begin{lemma} \label{lem:ssegap}
    There exists a constant $c' > 0$ such that for all sufficiently small $\varepsilon, \eta > 0$, it is SSE-hard under randomized reductions to distinguish between the following cases for an unweighted graph $G = (V, E)$ with $|E| \in \Theta(|V|^2)$: \\

    \textbf{Yes:} There exists a cut $S \subseteq V$ with $(\frac{1}{2} - \eta) |V| \leq |S| \leq \frac{|V|}{2}$ such that $|\delta(S)| \leq 10 \varepsilon |E|$. \\

    \textbf{No:} For all cuts $S \subseteq V$ with $|S| \leq \frac{|V|}{2}$, either $|S| \leq \frac{|V|}{5}$ or $|\delta(S)| \geq c' \sqrt \varepsilon |E|$.
\end{lemma}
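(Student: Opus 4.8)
The plan is to use the instance produced by Lemma~\ref{lem:hard3} \emph{verbatim}, doing nothing to the graph itself, and only reinterpreting the guarantees phrased through $\mu$ and $\Phi$ in the plain language of set cardinalities. The point is that near-regularity makes this translation lossless up to the slack parameter $\zeta$. Concretely I would set $\varepsilon := \epsilon$ and choose $\zeta$ to be a sufficiently small constant depending on the target $\eta$. Writing $d$ for the minimum degree of the Lemma~\ref{lem:hard3} instance and recalling $\sum_{v\in V} d_v = 2|E|$ together with the bound $\max_v d_v \le (1+\zeta)d$, one gets $d|V| \le 2|E| \le (1+\zeta)d|V|$; combined with $|E|\in\Theta(|V|^2)$ this also pins down $d\in\Theta(|V|)$, so $d|S|$ and $|E|$ are interchangeable up to constants whenever $|S|=\Theta(|V|)$. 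The one estimate I would invoke repeatedly is that for every $S\subseteq V$ one has $\frac{|S|}{(1+\zeta)|V|}\le \mu(S)\le (1+\zeta)\frac{|S|}{|V|}$, and likewise $\frac{|\delta(S)|}{(1+\zeta)d|S|}\le \Phi(S)\le \frac{|\delta(S)|}{d|S|}$.

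For the \textbf{Yes} direction I take the witness $S$ with $\mu(S)\in[\tfrac12-\zeta,\tfrac12+\zeta]$ and $\Phi(S)\le 4\epsilon$. Replacing $S$ by $V\setminus S$ if necessary --- which leaves $|\delta(S)|$ unchanged and keeps $\mu(S)$ in the same symmetric interval --- I may assume $|S|\le |V|/2$. The lower bound $\mu(S)\ge\tfrac12-\zeta$ then forces $|S|\ge \frac{1/2-\zeta}{1+\zeta}|V| \ge (\tfrac12-\eta)|V|$ once $\zeta$ is small relative to $\eta$, which is the required size window. For the cut size I use $|\delta(S)| = \Phi(S)\cdot\mu(S)\cdot\sum_{v\in V}d_v \le 4\epsilon\cdot(\tfrac12+\zeta)\cdot 2|E| \le 10\varepsilon|E|$, again for $\zeta$ small enough.

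For the \textbf{No} direction I fix any $S$ with $|V|/5 < |S| \le |V|/2$ and must show $|\delta(S)|\ge c'\sqrt\varepsilon|E|$. From the estimates above, $\mu(S)\ge \frac{|S|}{(1+\zeta)|V|} > \frac{1/5}{1+\zeta} > \tfrac1{10}+\zeta$ for $\zeta$ small, and $\mu(S)\le (1+\zeta)\tfrac12 = \tfrac12+\tfrac\zeta2$. I then split on whether $\mu(S)\le\tfrac12$: if so, Lemma~\ref{lem:hard3} applied to $S$ gives $\Phi(S)\ge c_3\sqrt\epsilon$, hence $|\delta(S)| \ge c_3\sqrt\epsilon\, d|S| \ge c_3\sqrt\epsilon\, d\cdot\tfrac{|V|}{5}$; if instead $\mu(S)\in(\tfrac12,\tfrac12+\tfrac\zeta2]$, then $V\setminus S$ satisfies $\mu(V\setminus S)=1-\mu(S)\in[\tfrac1{10}+\zeta,\tfrac12]$, so Lemma~\ref{lem:hard3} gives $\Phi(V\setminus S)\ge c_3\sqrt\epsilon$ and $|\delta(S)| = |\delta(V\setminus S)| \ge c_3\sqrt\epsilon\, d\cdot\tfrac{|V|}{2}$. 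In either case $|\delta(S)| \ge c_3\sqrt\epsilon\, d\cdot\tfrac{|V|}{5}$, and since $d|V|\ge \frac{2|E|}{1+\zeta}$ this yields $|\delta(S)| \ge \frac{2c_3}{5(1+\zeta)}\sqrt\varepsilon|E|$, so $c' := \frac{2c_3}{5(1+\zeta)}$ (or any smaller positive constant) works.

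The genuinely delicate step --- and the one I would be most careful about --- is the mismatch between the normalized-size condition ``$\mu(S)\le\tfrac12$'' appearing in Lemma~\ref{lem:hard3} and the cardinality condition ``$|S|\le|V|/2$'' demanded here, which do not coincide exactly once the graph is only approximately regular. Resolving this by passing to the complement (legitimate precisely because $\delta(S)=\delta(V\setminus S)$ and both target windows are symmetric about $\tfrac12$) and then absorbing all the resulting $O(\zeta)$ errors into the constants is the crux of the argument; everything else is bookkeeping of the slack parameters, and it is exactly this slack that forces the statement to hold only ``for all sufficiently small $\varepsilon,\eta$.''
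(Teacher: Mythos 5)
Your proposal is correct and follows the same route as the paper: keep the Lemma~\ref{lem:hard3} instance unmodified and use the near-regularity (degrees within a $(1+\zeta)$ factor) to translate the $\mu$ and $\Phi$ guarantees into cardinality bounds, with all $O(\zeta)$ slack absorbed into the ``sufficiently small'' quantifiers. One small difference worth noting: in the \textbf{No} direction you explicitly handle the possibility $\mu(S)\in(\tfrac12,\tfrac12+\tfrac\zeta2]$ by passing to $V\setminus S$ before invoking Lemma~\ref{lem:hard3}, whereas the paper's write-up applies the $\Phi(S)\ge c_3\sqrt\epsilon$ bound directly; your version is the more careful one, since $|S|\le|V|/2$ does not force $\mu(S)\le\tfrac12$ when the graph is only approximately regular.
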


\begin{proof}
    We begin with a gap instance $G = (V, E)$ of the form in Lemma~\ref{lem:hard3}.
    We will not need to modify this instance to produce our desired result.
    Suppose that $G$ is a \textbf{Yes} instance of Lemma~\ref{lem:hard3}.
    Then there is some $S \subseteq V$ with $\mu(S) \in [\frac{1}{2} - \zeta, \frac{1}{2} + \zeta]$ and $\Phi(S) \leq 4 \epsilon$.
    Using that every vertex $v \in V$ has degree $d_v \leq (1 + \zeta) d$, where $d$ is the minimum degree of $G$, we can bound
    \begin{align*}
        \mu(S) = \frac{\sum_{u \in S} d_u}{\sum_{v \in V} d_v} 
        \leq \frac{|S| (1 + \zeta) d}{|V| d}.
    \end{align*}
    Thus, $|S| \geq \frac{1/2 - \zeta}{1 + \zeta} |V|$.
    For sufficiently small $\zeta > 0$, we can then bound below $|S| \geq (\frac{1}{2} - \eta) |V|$.
    Because $\mu(S) = \mu(V \setminus S)$, we also bound $|V \setminus S| \geq (\frac{1}{2} - \eta) |V|$.
    If $|S| > \frac{|V|}{2}$, then swap $S$ with $V \setminus S$, noting that $\delta(S) = \delta(V \setminus S)$.
    This fulfills the $(\frac{1}{2} - \eta) |V| \leq |S| \leq \frac{|V|}{2}$ condition.
    Similarly, we can bound
    \[
        \Phi(S) = \frac{|\delta(S)|}{\sum_{u \in S} d_u} \geq \frac{|\delta(S)|}{|V| d (1 + \epsilon)}.
    \]
    With the bound of $\Phi(S) \leq 4 \epsilon$ from Lemma~\ref{lem:hard3}, this implies that
    \[
        |\delta(S)| \leq 4 \epsilon |V| d (1 + \zeta) \leq 10 \epsilon |E|
    \]
    for sufficiently small $\zeta > 0$. Thus, $G$ being a \textbf{Yes} instance of Lemma~\ref{lem:hard3} implies the \textbf{Yes} conditions of this lemma.

    Suppose that $G$ is a \textbf{No} instance of Lemma~\ref{lem:hard3}. Consider any $S \subseteq V$. Suppose first that $\mu(S) < \frac{1}{10} + \zeta$.
    Then we have that
    \[
        \mu(S) = \frac{\sum_{u \in S} d_u}{\sum_{v \in V} d_v} \geq \frac{|S| d}{|V| (1 + \zeta) d},
    \]
    and so $|S| \leq (\frac{1}{10} + \zeta) (1 + \zeta) |V|$.
    For $\zeta$ sufficiently small, this implies $|S| \leq \frac{|V|}{5}$.
    Suppose otherwise, that $\mu(S) \geq \frac{1}{10} + \zeta$.
    Then we have that
    \[
        \Phi(S) = \frac{|\delta(S)|}{\sum_{u \in S} d_u} \leq \frac{|\delta(S)|}{|V| d},
    \]
    and so $|\delta(S)| \geq 3 d \sqrt \epsilon |V| \geq \frac{6}{1 + \zeta} \sqrt \epsilon |E|$.
    Setting $c' := \frac{6}{1 + \zeta}$ finishes the proof.
\end{proof}

We now reduce from gap instances of the type in Lemma~\ref{lem:ssegap} to show hardness for \mcproblem on split graphs.

\splithard*

\begin{proof}
    We reduce from a graph $G = (V, E)$ of the form in Lemma~\ref{lem:ssegap} to a (simple) split graph $G' = (V' = K \cup I, E')$ as follows.
    Let the clique portion of $G'$ be $K := V$, and let the independent portion of $G'$ be $I := \{v_e\}$. 
    We define the edge set of $G'$ as $E' := \{u v_e \mid u \in V, e \in u\} \cup \{uw \mid u, w \in V\}$.
    That is, we place a copy of $V$ in the clique portion of $G'$, and place one vertex for each edge of $E$ in the independent set portion, each connected to its two endpoints in $V$.

    Note that in a split graph, the maximum cut is defined solely by its intersection with the clique portion $K$ of the graph, as the decision for vertices in the independent set portion $I$ can be made greedily.
    Define $\delta'(S)$ for $S \subseteq K = V$ to be the edges in the maximum cut of $G'$ defined by $S$, breaking ties arbitrarily. 

    Suppose that $G$ is a \textbf{Yes} instance as defined in Lemma~\ref{lem:ssegap}.
    Then by direct counting, we have that, in $G'$,
    \[
        \maxcut(G') \geq |\delta'(S)| \geq (0.25 - \eta^2) |V|^2 + (2 - 10\varepsilon) |E|.
    \]
    Define $\omega := (0.25 - \eta^2) |V|^2 + (2 - 10\varepsilon) |E|$.
    Suppose instead that $G$ is a \textbf{No} instance and consider any $S \subseteq V$ with $|S| \leq 0.5 |V|$.
    We will show that, with the right choice of $\varepsilon$ and $\eta$,  $|\delta'(S)| \leq (1 - c) \omega$ for some constant $c > 0$.
    If $|S| \leq  \frac{|V|}{5}$, then 
    \[
        |\delta'(S)| \leq \frac{|V|^2}{5} + 2 |E|.
    \]
    Because $|E| \in \Theta(|V|^2)$, and $\frac{1}{5} < 0.25 - \eta^2$ for sufficiently small $\eta$, this implies that $|\delta'(S)| < (1 - c) \omega$ for some constant $c > 0$ when $\eta$ and $\varepsilon$ are sufficiently small.
    Otherwise, if $|S| \geq \frac{|V|}{5}$, then we must have that $|\delta_G(S)| \geq c' \sqrt \varepsilon |E|$. 
    Then
    \[
        |\delta'(S)| \leq 0.25|V|^2 + (2 - c' \sqrt \varepsilon) |E|.
    \]
    As above, because $2 - c' \sqrt \varepsilon < 2 - 10 \varepsilon$ for sufficiently small $\varepsilon$, we have that $|\delta'(S)| < (1 - c) \omega$ for some constant $c > 0$ when $\eta$ and $\varepsilon$ are sufficiently small.
    Thus, in the \textbf{No} case, we have that $\maxcut(G') \leq (1 - c) \omega$.
    Therefore, it is SSE-hard to distinguish between $\maxcut(G') \geq \omega$ and $\maxcut(G') \leq (1 - c) \omega$.
    This implies SSE-hardness of approximating \mcproblem to a factor of $1-c$ on split graphs.
\end{proof}

It is critical that Lemma~\ref{lem:ssegap} allows for non-simple graphs.
If $G$ were simple, then our reduction in the proof of Theorem~\ref{thm:splithard} results in the relation $\maxcut(G') = \maxcut(G^c) + 2 |E|$, where $G^c = (V, E^c)$ is the complement of $G$.
This is proved explicitly in \cite{bodlaender2000complexity}.
Then, if $|E| \in \omega(|E^c|)$, the value $2 |E|$ is a good estimate for $\maxcut(G')$. 
If $|E| \in \mathcal O(|E^c|)$, then $|E^c| \in \Omega(|V|^2)$.
In this case, $G^c$ is a dense graph and so there is a PTAS for $\maxcut(G^c)$ \cite{fernandez1996max,mathieu2008yet}.
In particular, these cases imply that there is a PTAS for $\maxcut(G')$ whenever $G'$ is a 2-split graph (i.e., all vertices in $I$ have degree $2$) without any ``duplicated'' vertices of $I$ that have the same neighborhood.

\bibliographystyle{alpha}

\bibliography{intervalmaxcut}

\end{document}